\newcommand{\var}{\text{Var}}
\newcommand{\de}[1]{\frac{\partial}{\partial #1}}
\newcommand{\dde}[1]{\dot{#1}\frac{\partial}{\partial #1}}
\newcommand{\qE}[1]{\operatorname{\mathbb{E}_{#1}}}
\newcommand{\evalat}[2]{\left. #1 \right\vert_{#2}}
\newcommand{\mv}[1]{\langle #1\rangle}%
\DeclarePairedDelimiter\ton{(}{)}
\DeclarePairedDelimiter\qua{[}{]}
\DeclarePairedDelimiter\mean{\langle}{\rangle}
\newenvironment{alignLetter}{
    \setcounter{equation}{0}
    
    \align
}{
    \endalign
}
\theoremstyle{plain}
\newtheorem{remark}{Remark}
\newtheorem{theorem}{Theorem}
\newtheorem{proposition}{Proposition}
\newtheorem{corollary}{Corollary}
\newtheorem{definition}{Definition}
\begin{document}

\preprint{APS/123-QED}

\title{The emergence of a concept in shallow neural networks}

\author{Elena Agliari$^1$}
\author{Francesco Alemanno$^{2}$}
\author{Adriano Barra$^{2,3}$}
\author{Giordano De Marzo$^{4,5}$}
\affiliation{$^1$Dipartimento di Matematica, Sapienza Universit\`a di Roma, P.le A. Moro 5, 00185, Rome, Italy.}
\affiliation{$^2$Dipartimento di Matematica e Fisica, Universit\`a del Salento, Campus Ecotekne, via Monteroni, Lecce 73100, Italy.}
\affiliation{$^3$Istituto Nazionale di Fisica Nucleare, Sezione di Lecce, Campus Ecotekne, via Monteroni, Lecce 73100, Italy.}
\affiliation{$^4$Dipartimento di Fisica, Sapienza Universit\`a di Roma, P.le A. Moro 5, 00185, Rome, Italy.}
\affiliation{$^5$Centro Ricerche Enrico Fermi, Via Panisperna 89a, 00184 Rome, Italy.}

\date{\today}

\begin{abstract}We consider restricted Boltzmann machine (RBMs) trained over an unstructured dataset made of blurred copies of definite but unavailable ``archetypes'' and we show that there exists a critical sample size beyond which the RBM can learn archetypes, namely the machine can successfully play as a generative model or as a classifier, according to the operational routine.
In general, assessing a critical sample size (possibly in relation to the quality of the dataset) is still an open problem in machine learning. Here, restricting to the random theory, where shallow networks suffice and the grand-mother cell scenario is correct, we leverage the formal equivalence between RBMs and Hopfield networks, to obtain a phase diagram for both the neural architectures which highlights regions, in the space of the control parameters (i.e., number of archetypes, number of neurons,  size and quality of the training set), where learning can be accomplished. Our investigations are led by analytical methods based on the statistical-mechanics of disordered systems and results are further corroborated by extensive Monte Carlo simulations.
\end{abstract}

\keywords{Neural Networks $|$ Machine Learning $|$ Glassy Statistical Mechanics}

\maketitle


In the past decades, the development of Artificial Intelligence has strongly benefited from the contributions of two inter-playing strands, that is, {\it neural networks} and {\it machine learning}. The former is meant as a mathematical modelling for brain abilities and, in particular, the celebrated Hopfield network \cite{Hopfield} implements Hebb's rule for synaptic plasticity and exhibits {\it pattern recognition} as emergent computational skill, hence it plays as a natural content-addressable memory 
\cite{CKS}. On the other hand, machine learning provides algorithms to make a machine able to {\it learn from experience}, namely to detect features hidden in the supplied datasets, whence making its own (compressed and probabilistic) representation of the dataset, and therefore to be able to generalize or correctly identify new examples. Machine learning  nowadays relies on a zoo of architectures among which the Boltzmann machine \cite{HHS} with its best-known algorithm, the contrastive divergence \cite{Hinton-1MC}, plays as a paradigmatic model in the statistical mechanical framework (see e.g., \cite{PRL2012,Mezard,Monasson,Aurelienne,Huang-PRL2020,Huang-PRE2020,Marullo,Love}). 
 
One of the main achievements in neural networks theory has been pioneered by Amit, Gutfreund and Sompolinsky (AGS) \cite{AGS} who first addressed the statistical mechanics of the Hopfield model obtaining for this network a phase diagram where its various operational regimes are shown as different phases (i.e., ergodic, spin-glass, retrieval regions), much as like ice, vapor and liquid for the water in thermodynamics. The phase diagram is painted in the plane of the control parameters: the fast noise $\beta$ and the load $\alpha=\lim_{N \to \infty} K/N$, namely, the number of patterns $K$ per neuron $N$ that the network stores in the thermodynamic limit. Remarkably, this knowledge allows setting a priori the system in the desired regime. Since that milestone, {\it phase transitions} entered the field of computer science in a broad variety of aspects \cite{Way1,Way2,Way3}.  Not surprisingly, thus, much efforts have been spent to outline phase diagrams also in machine learning \cite{angel-learning,sompo-learning} and, in particular, for Boltzmann machines as they can serve as building blocks of deep architectures \cite{Hinton1,Hinton} and one can possibly rely on a formal equivalence between Boltzmann machines and Hopfield networks \cite{BarraEquivalenceRBMeAHN,PRL2012,Barra-RBMsPriors2,Mezard,Monasson,Linda}.
 
Yet, a straight comparison is difficult: while in machine learning there are large datasets with features to be extracted, the Hopfield model is supplied with definite patterns, and it does not have a phase diagram with the size or the quality of the dataset as tuneable parameters.
Thus, the main conceptual problem is still to be overcome: the Hopfield network operates under the prior knowledge of prescribed patterns, that we call ``archetypes'', hence bypassing the inference process that shapes the archetype out of many examples, that is, it {\it stores} rather than {\it learns}. But what if we feed the Hopfield network on a sample $\mathcal S$ made of $K$ sub-samples, each related to a different archetype pattern and made of $M$ blurred examples of that pattern: is the network able to create its own representation of the archetypes? Driven by needs of resource optimization in machine learning and leveraging the duality between Boltzmann machines and Hopfield networks, this investigation could be a complementary approach to inspect the critical size of datasets in machine learning with respect to those achieved in the past and it is the object of the present study.
 
To this goal we first show numerically that, given the dataset $\mathcal S$ overall made of $M \times K$ blurred examples, there exists a threshold size $M_{\times}$, depending on the quality of the dataset, such that for $M>M_{\times}$ a Hopfield network -- whose Hebbian kernel is built over $\mathcal S$ -- correctly retrieves the archetypes (hence learning has successfully been accomplished), and such that a restricted Boltzmann machine (RBM) -- trained over $\mathcal S$ -- correctly learns the archetypes; remarkably, the threshold $M_{\times}$ is the same for both systems. 
 
Then, we focus on the former and we inspect how generalization, from examples to archetypes, takes place; our investigations are led by analytical techniques, from the heuristic signal-to-noise approach to the rigorous interpolation method within the statistical mechanics of spin glasses \cite{AABF-NN2020,Martino1}, and corroborated by Monte Carlo simulations; previous heuristic findings are also recovered \cite{Fontanari}.
In particular, we prove that, as the Hopfield network is provided with examples, it starts storing each of them as a distinct pattern, namely its free-energy minima match these highly-correlated examples (i.e., an ``overfitting regime''), but, beyond $M_{\times}$ minima approach archetypes and, further, once a critical amount $M_c$ of examples is reached, the attraction basins related to examples of the same archetype collapse into a new minimum corresponding to that archetype. Indeed, the network undergoes a second order phase transition: in a totally unsupervised manner, new minima are located in the landscape and these do not correspond to any of the examples used to build the Hebbian kernel; these basins eventually prevail and the corresponding archetypes will play as the patterns to be retrieved in later usage as the standard patterns of AGS theory. We stress that the critical sample size grows with the example fuzziness, parametrized by $r \in [0,1]$, and this scenario can be straightforwardly translated in the RBM framework. As a result, we can derive a phase diagram with control parameters $\alpha, \beta, M$, and $r$, where we highlight regions where the RBM trained over $\mathcal S$ can successfully perform the prescribed tasks (e.g., archetype classification, generation, or reconstruction) and this ultimately allows for resource optimization. 
 
Results are structured as follows: in the next Sec.~\ref{sec:soglia} we show computationally that the threshold size $M_{\times}$ for archetype's learning is the same for RBMs (Sec.~\ref{sec:RBM}) and Hopfield neural networks (Sec.~\ref{sec:Hop}). This suggests us to tackle the analytical investigation on the existence of different computational regimes within the Hopfield model setting and this is achieved in Sec.~\ref{sec:Hop_analitico}; next, in Secs.~\ref{sec:phase_diagram} and Secs.~\ref{sec:phase_diagram2} these analytical results are recovered and corroborated numerically in the Boltzmann machine scenario. 
In the supplementary material all the mathematical details of both the signal to noise technique and the statistical mechanical approach are provided in detail.

\section*{Results}

\subsection{Numerical evidence of a dataset threshold size} \label{sec:soglia}
In this section we show numerically that a RBM trained over a sample of blurred examples and a Hopfield model storing the same sample of blurred examples are eventually (as the dataset gets large enough) able to generalize, namely, the former can be used as an archetype classifier/generator and the latter as an archetype retriever. Remarkably, we can detect a threshold $M_{\times}$ in the dataset size for the emergence of such a skill and this threshold turns out to be the same for both models. In the following subsections we will introduce and address the two models separately. 

\subsubsection{RBM learning from blurred samples} \label{sec:RBM}
We denote with $\{\boldsymbol{\xi}^{\mu} \}_{\mu=1,...,K}$ the $K$ archetypes, namely the patterns that we would like to see learnt by the RBM and with $\mathcal S = \{ \boldsymbol\eta^{\mu,a} \}_{\mu=1,...,K}^{a = 1,...,M}$ the related examples that the machine is actually supplied with. These objects are codified in terms of binary vectors of length $N$; pattern entries are Rademacher random variables drawn with probability 
\begin{equation}
\mathcal{P}(\xi_i^{\mu} = + 1) = \mathcal{P}(\xi_i^{\mu} = - 1) = 1/2, ~~ \forall i, \mu 
\end{equation}
while example entries are defined,  $\forall i, \mu, a$, as
\begin{eqnarray} \label{eq:sample}
& \eta_i^{\mu,a} =  \xi_i^{\mu} \chi_i^{\mu,a}, \\ \nonumber
& \textrm{with} ~~ \mathcal{P}(\chi_i^{\mu,a} = 1) = 1 - \mathcal{P}(\chi_i^{\mu,a} = -1) = p \in [1/2, 1], 
\end{eqnarray}
in such a way that the closer $p$ gets to $1/2$, the farther from the pattern gets the example\footnote{Although here we are working with random data-sets, for intuition guidance, we could look at a certain pattern $\boldsymbol \xi$ as the archetype of, say, a German Shepherd, while the set $\boldsymbol \eta^{a}$ would be a set of pictures of this dog, and similarly for the other patterns.}. We also introduce 
\begin{equation}
r:= 2p -1 \in [0,1],
\end{equation}
as an index for the dataset quality: $r$ ranges from $0$ (any example and the related archetype are uncorrelated) to $1$ (any example coincides with the related archetype).

\begin{figure*}[tb] 
    \centering
    \includegraphics[scale=0.475]{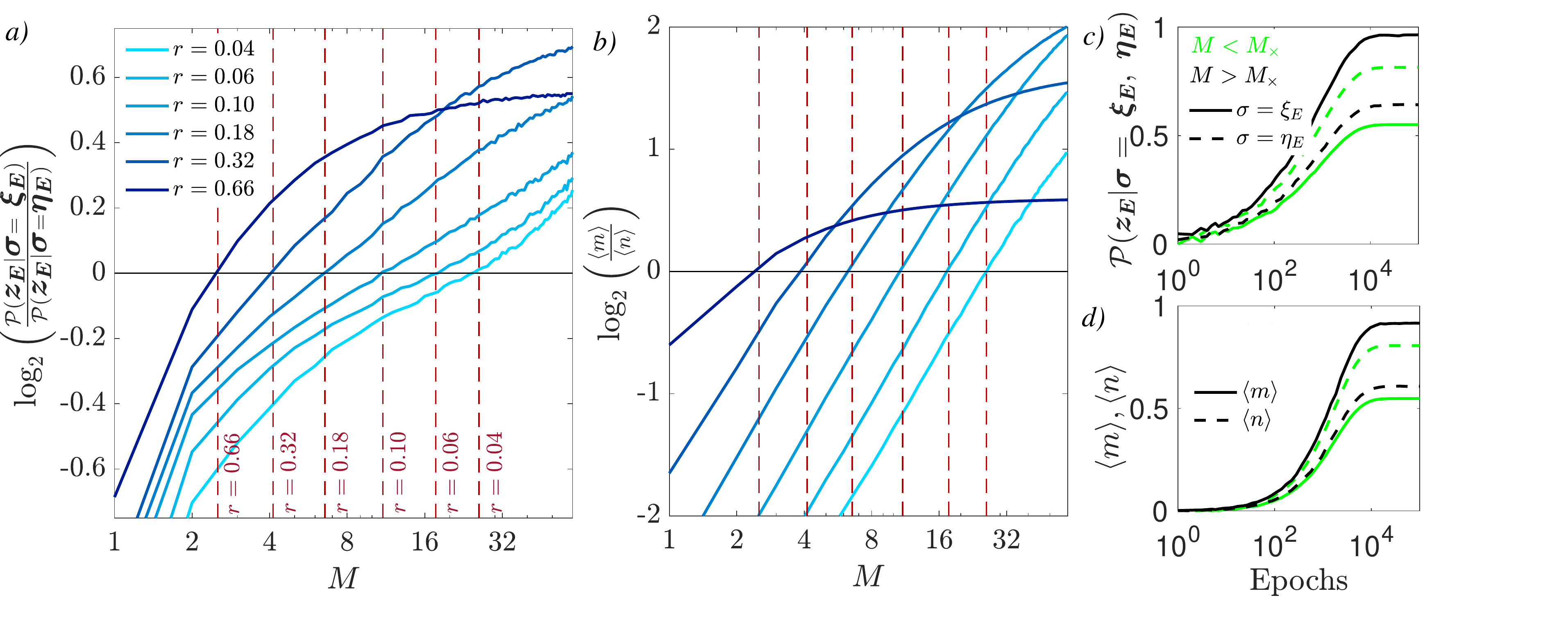}
    \caption{The larger panels provide a picture of the performance of a trained RBM used as a classifier (panel $a$) and as a generative model (panel $b$); in particular, the logarithm of $\mathcal P(\boldsymbol z_E |  \boldsymbol \sigma = \boldsymbol \xi_E) / \mathcal P(\boldsymbol z | \boldsymbol \sigma = \boldsymbol \eta_E)$ and the logarithm of 
    $\langle m \rangle /\langle n \rangle$, respectively, are shown versus $M$, for different choices of the parameters $r$ (as explained by the common legend in panel $a$), which quantifies the dataset quality. 
   The threshold value $M_{\times}$ corresponds to the interception between the curves and the horizontal axis. 
        In both panels the vertical dashed lines are obtained analytically by studying the dual Hopfield network and asking for the minimum value of $M$ such that archetype retrieval prevails over example retrieval (i.e., $\bar m > \bar n$, see Sec.~\ref{sec:Hop} and SM); this estimate is obtained for different choices of $r$, as reported. Note that the vertical lines intersect the experimental curves always when they also cross the horizontal line, showing that the threshold size is the same for these machines.
        The smaller panels provide a picture of the training routine for the RBM. As epochs run, we show the evolution of the classification probabilities (panel $c$)
        $\mathcal P(\boldsymbol z_E |  \boldsymbol \sigma = \boldsymbol \xi_E)$ and $\mathcal P(\boldsymbol z_E |  \boldsymbol \sigma = \boldsymbol \eta_E)$ (respectively, solid and dashed lines) and of the overlaps (panel $d$) $\langle m \rangle$ and $\langle n \rangle$ (respectively solid and dashed lines), distinguishing between the case $M> M_{\times}$ (bright color) and $M> M_{\times}$ (dark color), as explained in the legend.} \label{fig:one}
\end{figure*}
The machine  is  made of two layers: the visible one made of $N$ binary neurons $\sigma_i = \pm 1, \  \  i  \in (1,...,N)$ and  a hidden one built of by $K$ binary neurons $z_{\mu}= \pm 1, \ \mu \in (1,...,K)$; we denote with $(\boldsymbol \sigma , \boldsymbol z) \in \{ -1, + 1 \}^{N \times K}$ the overall configuration. Note that there are as many hidden neurons as archetypes and, as we will explain in the following, this architecture allows us to allocate one hidden neuron per archetype, configuring the network in the {\it grandmother cell} scenario \cite{Nonna1,Nonna2}. 
\newline
We also introduce the weight matrix $ \boldsymbol w \in \mathbb R^{N \times K}$, whose entry $w_i^{\mu}$ represents the weight associated to the connection between neurons $i$ and $j$ belonging to different layers. The cost function (or Hamiltonian to keep a physical jargon)  $H_{N,K}(\boldsymbol \sigma, \boldsymbol z | \boldsymbol w)$  related to this RBM reads as
\begin{equation}
  \label{eq:Ham-RBM}
  H_{N,K}(\boldsymbol \sigma, \boldsymbol z | \boldsymbol w) = -\frac{1}{\sqrt{N}}  \sum_{\mu=1}^K  \sum_{i=1}^N w_i^\mu  \sigma_i z_{\mu},
\end{equation}
where the factor $\sqrt{N}$ ensures the linear scaling of the cost function with respect to the size in the thermodynamic limit $N \to \infty$.
The equilibrium distribution for such a system is given by the Boltzmann-Gibbs measure
\begin{equation}
\mathcal P(\boldsymbol \sigma, \boldsymbol z| \boldsymbol w) = \frac{1}{Z_{N,K}} e^{ - \beta   H_{N,K}(\boldsymbol \sigma, \boldsymbol z | \boldsymbol w) },
\end{equation}
where $Z_{N,K}$ is the suitable normalization factor obtained by summing the exponential term over all possible configurations $(\boldsymbol \sigma, \boldsymbol z) \in \{ -1, +1\}^{N\times K}$.

We train this machine in a supervised mode, that is, during the training phase the clamped setting involves both the visible and the hidden degrees of freedom, namely $\boldsymbol{\sigma}$ is set to one of the examples in the dataset, say the $(\nu,a)$-th one, i.e., $\sigma_i = \eta_i^{\nu,a}$ for $i=1,...,N$, while  $\boldsymbol{z}$ is set to a one-hot vector where the entry related to the correct archetype is $1$ and the others $0$, i.e., $z_{\mu} = \delta_{\mu,\nu}$ for $\mu=1,...,K$; we call $\mathcal Z$ the set of all possible $K$ one-hot vectors. This kind of setting can be interpreted as a grandmother-cell setting, namely we establish a one-to-one correspondence between hidden neurons and archetypes and  -- in the clamped state -- we force solely one hidden neuron per archetype to be active, whence the constraint on the number of archetypes equal to the number of hidden neurons.

More specifically, the machine training is accomplished by means of the following Hinton's scheme of contrastive divergence \cite{Hinton-1MC}:
$$
\Delta w_i^{\mu} \propto \left(  \langle \sigma_i z_{\mu} \rangle_{\textrm{clamped}} - \langle \sigma_i z_{\mu} \rangle_{\textrm{free}} \right), \  \ \  \forall (i,\mu) \in (N \times K), 
$$	
where for each training step the ``free'' average is sampled via a single step of Gibbs dynamics, i.e. a random training example $(\boldsymbol{\sigma}_E,\boldsymbol{z}_E) \in \mathcal S \times \mathcal Z$ is selected, then the free mean is calculated single shot via a pair $(\boldsymbol{\sigma}_{\rm free},\boldsymbol{z}_{\rm free})$ sampled using the Gibbs-chain $\boldsymbol{z}_E\to\boldsymbol{\sigma}_{\rm free}\to\boldsymbol{z}_{\rm free}$; the ``clamped'' average is also evaluated single shot using the same pair $(\boldsymbol{\sigma}_E,\boldsymbol{z}_E)$.

The trained machine can be used as a classifier (i.e., as a pattern recognition device, by feeding the machine a noisy  $\boldsymbol{\sigma}$ configuration and letting the machine recover the $\boldsymbol{z}$ configuration whose entries indicate how the input signal has been classified), or as a generative model (by feeding the machine a $\boldsymbol{z}$ configuration and letting the machine output the  $\boldsymbol{\sigma}$ configuration of the corresponding archetype).
We inspect the success of the learning procedure by testing the machine as a classifier and as a generative model, as reported in panels $a$ and $b$ of Fig.~\ref{fig:one}. More precisely, we choose as performance measure for classification the logarithm of $\mathcal P(\boldsymbol z_E | \boldsymbol \sigma = \boldsymbol \xi_E)/ \mathcal P(\boldsymbol z_E | \boldsymbol \sigma = \boldsymbol \eta_E)$, where $\mathcal P(\boldsymbol z_E | \boldsymbol \sigma = \boldsymbol \xi_E)$ (respectively $\mathcal P(\boldsymbol z_E | \boldsymbol \sigma = \boldsymbol \eta_E)$) is the probability of reaching a correct hidden state $\boldsymbol z_E$ given a visible state clamped as $\boldsymbol \sigma = \boldsymbol \xi_E$ (respectively $\boldsymbol \sigma = \boldsymbol \eta_E$); the ratio between the two terms allows us to assess when one prevails over the other (see Fig.~\ref{fig:one}, panel $a$).  To evaluate computationally $\mathcal P(\boldsymbol z | \boldsymbol \sigma = \boldsymbol \xi)$ (and analogously $\mathcal P(\boldsymbol z | \boldsymbol \sigma = \boldsymbol \eta)$)  we provide the network with, respectively, the archetype and the example on the visible layer and we study the distribution of activations within the hidden layer (i.e., the entries of the $\boldsymbol z$ vector):  as training followed the grandmother-cell setting we expect to have just one positive  entry -- the hidden neuron coupled to the selected archetype -- if learning has been properly accomplished as empirically confirmed.  
\newline
When looking at the Boltzmann machine as a generative model, we use a different performance measure: having trained the machine as specified above, we clamp the hidden layer on a certain one-hot vector $\boldsymbol z_E \in \mathcal Z$ and we let the machine thermalize allowing visible neurons to evolve freely; we expect that the system relaxes to configurations where $\boldsymbol \sigma$ corresponds to the related archetype $\boldsymbol \xi_E$. To check whether this is the case we measure 
the overlap between the visible neuron configuration $\boldsymbol \sigma$ and $\boldsymbol \xi_E$ and compare it with the overlap between $\boldsymbol \sigma$ and the examples corresponding to the class of $\boldsymbol \xi_E$. To fix ideas, let us set  $\boldsymbol \xi_E = \boldsymbol \xi^{\nu}$, then we introduce $n^{\nu,a}$ as the overlap between $\boldsymbol \sigma$ and the $(\nu,a)$-th example $\boldsymbol \eta^{\nu,a}$ for $a=1,...,M$, namely
\begin{equation} \label{eq:n_Mattis}
n^{\nu,a}:=\frac{1}{N} \sum_{i=1}^N \xi_{i}^{\nu} \chi_{i}^{\nu,a} \sigma_{i}, 
\end{equation}
and $m^{\nu}$ as the overlap between $\boldsymbol \sigma$ and the $\nu$-th archetype $\boldsymbol \xi^{\nu}$, namely
\begin{equation} \label{eq:m_Mattis}
m^{\nu}:=\frac{1}{N} \sum_{i=1}^N \xi_{i}^{\nu} \sigma_{i}.
\end{equation}
 
To evaluate computationally these overlaps we first evaluate $n^{\nu,a}$ and $m^{\nu}$ as normalized dot product between the thermalized configuration $\boldsymbol \sigma$ and, respectively, $\boldsymbol \eta^{\nu,a}$ and $\boldsymbol \xi^{\nu}$ as per definitions \eqref{eq:n_Mattis} and \eqref{eq:m_Mattis}, then we average over different choices of clamped states (namely by varying $\nu \in [1,K]$) and over different realizations of archetypes (this is the analogous of a quenched average). These mean values are denoted as $\langle n \rangle$ and $\langle m \rangle$.
Their comparison allows us to evaluate whether the system is more prone to generate one of the examples it has been exposed to or to generate the unseen archetype (see Fig.~\ref{fig:one} , panel $b$).

For both operational modes we see that, if the number of examples provided to the network is relatively small, the system fails, that is the system can classify examples better than archetypes (i.e., $\mathcal P (\boldsymbol z_E | \boldsymbol \sigma =  \boldsymbol \eta_E) > \mathcal P (\boldsymbol z_E | \boldsymbol \sigma = \boldsymbol \xi_E)$) or the system generates examples rather than archetypes (i.e., $\langle n \rangle > \langle m \rangle$). However, if the number of examples is relatively large, the system succeeds, that is the system can classify archetypes better than examples or the system generates archetypes rather than examples. The threshold between a ``small'' and a ``large'' dataset is denoted with $M_{\times}$ and, as expected, $M_{\times}$ grows as the sample quality $r$ decreases. Empirically, we find that $M_{\times} \sim r^{-1}$.
On the other hand, the two extreme cases $M=1$ and $M \to \infty$ are trivial as for $M=1$ there is no difference between examples and archetypes (each example is also an archetype) while for $M\to \infty$  the archetype always prevails over examples by a standard central limit theorem argument. 

Analogous remarks can be drawn also from Fig.~\ref{fig:one} panels $c$, $d$ where we show the evolution of the classification probabilities $\mathcal P(\boldsymbol z_E |  \boldsymbol \sigma = \boldsymbol \xi_E)$ and $\mathcal P(\boldsymbol z_E |  \boldsymbol \sigma = \boldsymbol \eta_E)$ and of the mean overlaps $\langle m \rangle$ and $\langle n \rangle$ as the training is running. Interestingly, as long as $M < M_{\times}$, the saturation values for $\mathcal P(\boldsymbol z_E |  \boldsymbol \sigma = \boldsymbol \eta_E)$ and $\langle n \rangle$ are larger than those obtained for $\mathcal P(\boldsymbol z_E |  \boldsymbol \sigma = \boldsymbol \xi_E)$ and $\langle m \rangle$; the opposite holds as $M > M_{\times}$.

We conclude this section recalling that we can recast the problem of archetypes generation and classification exploiting the duality between Boltzmann machines and Hopfield networks: as largely discussed in the past decade \cite{BarraEquivalenceRBMeAHN,Barra-RBMsPriors2,Barra-RBMsPriors2,Linda,Monasson,Aurelienne,Mezard,Huang3,Marullo}: by marginalizing the probability distribution $\mathcal P(\boldsymbol \sigma, \boldsymbol z| \boldsymbol w)$ over the hidden layer, we end up with the probability distribution of a Hopfield network as long as we identify the weights $w_{i}^{\mu}$ in the former with the entries of the patterns stored by the latter, and we suitably rescale the temperature; in formulae
\begin{eqnarray}\label{eq:formale}
\mathcal P(\boldsymbol \sigma, \boldsymbol z| \boldsymbol w)&=&\sum_{\sigma}^{2^N}\sum_{z}^{2^K} e^{{\frac{\beta}{\sqrt{N}}}  \sum_{\mu=1}^K  \sum_{i=1}^N w_i^\mu  \sigma_i z_{\mu}} \\ \nonumber
&\to& \mathcal P(\boldsymbol \sigma| \boldsymbol w)  \propto \sum_{\sigma}^{2^N}e^{\frac{\beta^2}{2N}\sum_{i,j}^{N,N}\sum_{\mu}^K (w_i^{\mu}w_j^{\mu})\sigma_i \sigma_j}.
\end{eqnarray}

From this perspective we may want to check the ability of the system to retrieve an archetype, namely if we initialize the Hopfield network in a configuration $\boldsymbol \sigma$ corresponding to an example, say $\boldsymbol \eta^{\nu,a}$, and let it thermalize towards equilibrium, does it eventually end up ``close'' to the archetype $\boldsymbol \xi^{\nu}$?  This problem is faced in the next section.

\begin{figure}[tb]
\centering
\includegraphics[width=1.0\linewidth]{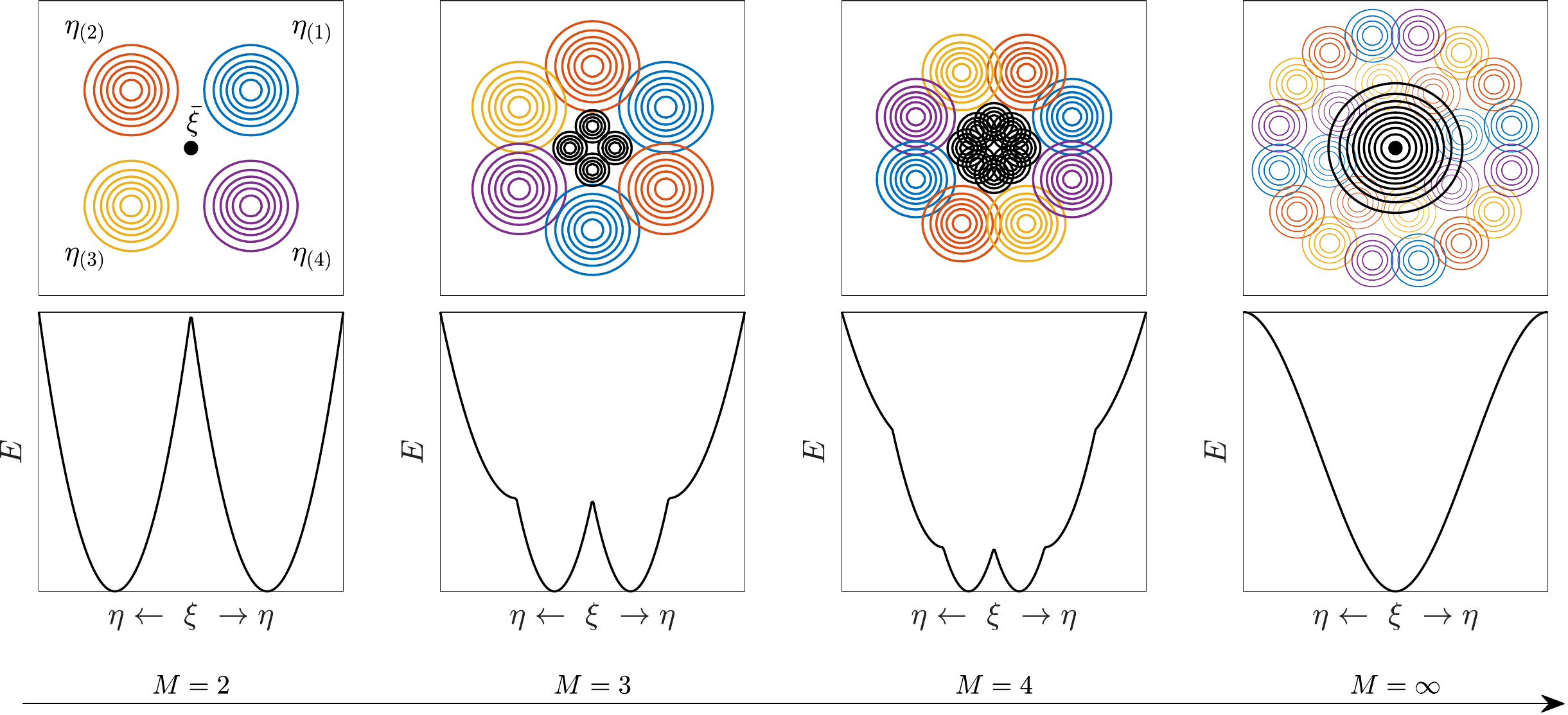}
\caption{\label{fig:minima}Schematic representation of the emergence of the archetype minimum in the energy landscape. The network is supplied with $M$ examples of the pattern $\boldsymbol{\xi}$, which, instead, is never presented to the network. As $M$ grows, from left to right, the network at first stores each single example but it is unable to retrieve $\boldsymbol{\xi}$ (left, $M<M_{\times}$), then, in the energy landscape, new minima, close to $\boldsymbol{\xi}$, appear and coexist with the minima corresponding to examples (center, $M > M_{\times}$) and, finally, a unique stable minimum corresponding to the archetype emerges (\textcolor{black}{right}, $M \approx M_c$).}
\end{figure}

\subsubsection{Hopfield network learning from blurred samples} \label{sec:Hop}
Let us consider a Hopfield neural network made of $N$ binary neurons, whose overall configuration is denoted with $\boldsymbol \sigma = (\sigma_1, \sigma_2,...,\sigma_N)$, and supplied with the sample $\mathcal S$ made of $K \times M$ examples $\eta_i^{\mu,a} =  \xi_i^{\mu} \chi_i^{\mu,a}$ as defined in \eqref{eq:sample}. We want to apply Hebb's rule to this sample and check whether the resulting system is able to generalize, namely to retrieve the archetype $\{\boldsymbol \xi^{\mu} \}_{\mu=1,...,K}$ once provided with an example\footnote{When this can be accomplished we say that also the Hopfield network can generalize because, starting from the inferred archetype, it generates variations on theme by taking advantage of the fast noise $\beta$}. We write the coupling $J_{ij}$ between the neurons $i$ and $j$ as
\begin{equation}
J_{ij}=\frac{1}{N} \sum_{\mu=1}^K \sum_{a=1}^M \eta^{\mu, a}_i\eta^{\mu, a}_j. \label{eq:couplings_single}
\end{equation}
Notice that, in this definition, we are simply summing over all the instances making up the sample $\mathcal S$, without caring of the class each term belongs to, in this sense, this kind of Hebbian learning is \emph{unsupervised}. 
The cost function (or Hamiltonian to keep a physical jargon) $H_{N,K,M}(\boldsymbol \sigma | \boldsymbol \chi, \boldsymbol \xi) $ of the model reads as follows
\begin{equation}\small
  \label{eq:Hamiltonian}
  H_{N,K,M}(\boldsymbol \sigma | \boldsymbol \chi, \boldsymbol \xi) = -\frac{1}{2N} \sum_{\mu,a}^{K,M} \big(\sum_{i=1}^N \xi_i^\mu \chi_i^{\mu,a}\sigma_i \big) ^2 = - \frac{N}{2} \sum_{a=1}^M   (\boldsymbol n^{a})^2,
\end{equation}
where $\boldsymbol n^{a} = (n^{1,a}, ..., n^{K,a})$ with entries defined in \eqref{eq:n_Mattis}.
Analogously, we pose $\boldsymbol m = (m^{1}, ..., m^{K})$ with entries defined in \eqref{eq:m_Mattis}. In this context, we shall also refer to $\boldsymbol n^{a}$ and $\boldsymbol m$ as Mattis magnetizations related to, respectively, examples and archetypes.
\newline
As we will see, these quantities play as key order parameters to quantify how (and what kind of) pattern recognition is accomplished by the network, implicitly quantifying the goodness of its learning too.
In the following we will denote with $\bar{m}$ and $\bar{n}$ their expectations with respect to the Boltzmann-Gibbs distribution related to the cost function \eqref{eq:Hamiltonian}, namely
\begin{equation}\label{eq:P_H}
\mathcal P(\boldsymbol \sigma  | \boldsymbol \chi, \boldsymbol \xi)= \frac{1}{Z_{N,K,M}}  e^{- \frac{\beta}{2N} H_{N,K,M}(\boldsymbol \sigma | \boldsymbol \chi, \boldsymbol \xi) },
\end{equation}
where $Z_{N,K,M}$ is the suitable normalization factor obtained by summing the exponential term over all possible configurations $\boldsymbol \sigma \in \{ -1,+1\}^N$.

As detailed in the next sections, this model can be addressed analytically and we can obtain -- in the thermodynamic limit and in the high-storage regime (i.e., $\alpha=\lim_{N \to \infty} K/N$ finite) -- self-consistent equations for its order parameters that can be then solved numerically. Following this route, we can compare $\bar n$ and $\bar m$ and check whether $\bar{m} > \bar{n}$ finding that for this condition to hold, $M$ must be larger than a certain threshold, represented by the vertical dashed lines in Fig.~\ref{fig:one}:  remarkably, this threshold corresponds to the threshold value $M_{\times}$ of the RBM. 
 
Before proceeding we anticipate that the analytical investigation performed on the Hopfield model \eqref{eq:Hamiltonian} highlights a rich phenomenology that here we try to summarize by means of Fig.~\ref{fig:minima} that sketches the evolution of (a cross section of) the cost-function landscape $E := H_{N,K,M}(\boldsymbol \sigma | \boldsymbol \chi, \boldsymbol \xi)$ as the dataset size is made larger; in this landscape, we especially care of minima since they play as attraction basins for the neural configuration $\boldsymbol \sigma$. When $M$ is small the landscape exhibits $K \times M$ minima\footnote{The number of minima is actually $2\times K \times M$ due to the gauge symmetry.} corresponding to the examples provided; as $M$ is made larger, minima get denser and their attraction basins possibly overlap; when $M > M_{\times}$ the minima corresponding to examples are only local while new and deeper minima emerge, whose location is closer to the archetype rather than any other example; as $M$ is further increased local minima get less and less stable while global minima get closer and closer to the archetypes; finally when $M$ is large enough, configurations corresponding to archetypes become stable. As we will see in the next section, the last passage can be related to a critical value for $M$, that scales with $r$ and that we denote with $M_c$. Interestingly, we also find out that setting $M=M_c$ determines the onset of a critical phase transition.   \\
Therefore, for the Hopfield network defined in \eqref{eq:Hamiltonian}, in addition to the traditional tuneable parameters, namely the fast noise $\beta$ and the load $\alpha=\lim_{N \to \infty} K/N$, we have the sample size $M$ and the sample quality $r$; we expect the system to correctly retrieve archetypes as long as
$\alpha < \alpha_c(\beta)$ and as long as $M > M_c(r)$.  
When translating this knowledge into the RBM scenario, we derive restrictions in the data-dimensionality reduction ability of Boltzmann machines (note that $\alpha$ corresponds to the ratio between the sizes of the hidden and the visible layers in RBM) \cite{Hinton} and an interplay between dataset quality and quantity.

\subsection{Analytical results} \label{sec:Hop_analitico}
As anticipated, our analytical investigations shall focus on the Hopfield counterpart for which we can rely on solid mathematical methods. 

We start with the signal-to-noise analysis (extensively reported in the SM) to check for local stability of the configurations $\boldsymbol \sigma = \boldsymbol \eta^{\mu, a}$ and $\boldsymbol \sigma = \boldsymbol \xi^{\mu}$, for arbitrary $\mu$ and $a$, in the noiseless limit $\beta \to \infty$. This is accomplished by studying if the internal field $h_i=\sum_{j=1}^{N}J_{ij}\sigma_j$, experienced by the neuron $i$, is aligned with the neural activity $\sigma_i$ and by monitoring the evolution of the relative energies associated to these configurations (see Figure $2$). We find that by increasing $M$, archetypes (examples) progressively gain (loose) stability at a rate depending on $r$. In particular, as for archetypes, the stability threshold $M_c$ increases according to the following scaling
\begin{equation}
\label{due}
M_c \sim (2p-1)^{-4}, 
\end{equation}
To get sharper estimates and a characterization of a possible phase transition, we need to solve for the quenched free-energy of the model and inspect the related self-consistent equations for order parameters: here we simply report the main points, while we refer to the SM for technical details. 
In the limit of infinite volume $N$, but finite dataset size $M$, the quenched pressure (i.e., $- \beta$ times the free energy) of the model \eqref{eq:Hamiltonian} is defined as
\begin{equation}\label{FreeE}
A_{M}(\alpha,\beta) := \lim_{N \to \infty} \frac{1}{N}\mathbb{E}\ln Z_{N,K,M}(\beta| \boldsymbol \chi, \boldsymbol \xi),
\end{equation}
where $\mathbb{E}:=\mathbb{E}_\chi \mathbb{E}_{\xi}$ averages over both the quenched variables $\chi, \xi$, and $Z_{N,K,M}$ is the partition function 
given by
\begin{equation}
\label{def:Z_H}
Z_{N,K,M}(\beta| \boldsymbol \chi, \boldsymbol \xi) = \sum_{\sigma}^{2^N} \exp \Big[\frac{\beta}{2N} \sum_{a=1}^M \sum_{\mu=1}^K \big(\sum_{i=1}^N \xi_i^\mu \chi_i^{\mu,a}\sigma_i \big) ^2 \Big].
\end{equation}
Note that, as shown in the Supplementary Material (see Proposition One) by a trivial Hubbard-Stratonovich transformation,  this partition function coincides with that of a RBM equipped with Gaussian prior.
At the replica symmetric level of description, keeping $M$ fixed but sending both $K$ and $N$ to infinity in such a way that $\alpha$ is finite, and focusing on the retrieval of $\boldsymbol \xi^{1}$ with no loss of generality, we reach the following expressions for the quenched pressure

\begin{eqnarray}\small
\label{eq:rssolution}\nonumber
\notag
&& A = \log2 -\frac{\beta\alpha M}{2}\bar p(1-\bar q) -\frac{\beta M}{2} \bar n^{2}  -\frac{\alpha M}{2} \big(\log[1-\beta (1-\bar q)]\\  \nonumber \small
&& -\frac{\beta\bar q}{1-\beta(1-\bar q)}\big) +  \mathbb{E}_{\phi\chi}\log\cosh \big(\bar n \beta \sum_{a=1}^M \chi_{a} + \sqrt{\alpha\beta\bar p M}\phi\big),
\end{eqnarray}

\begin{figure}[tb]
\centering
\includegraphics[width=1.15\linewidth]{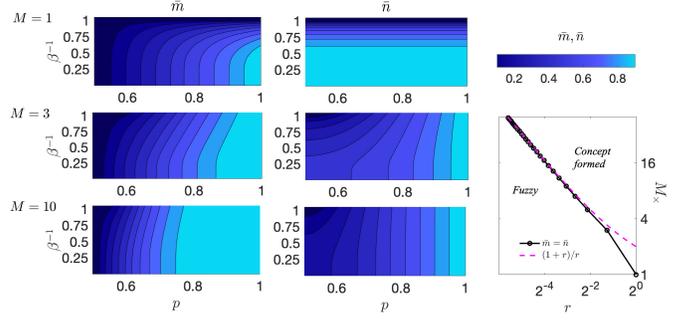}
\caption{\label{fig:soft_trans} Left: Contour plots for the magnetization of the archetype $\bar m$ (left panels) and of the examples $\bar n$ (right panels), obtained by solving the self-consistencies in eqs.~(\ref{eq:sce_n}) and (\ref{eq:relation}) for $\alpha =0$ and for $M=1, 3, 10$ (from top to bottom), versus $p$ ($x$-axis) and $\beta^{-1}$ ($y$-axis); analogous results are obtained for $\alpha >0$, see the SM. By comparing the values of $\bar{m}$ and $\bar{n}$ we see that, as the number of examples exceeds a bound $M_{\times}(r)$, the archetype retrieval dominates over the example retrieval. Right: the condition $\bar{m} = \bar{n}$ is recognized as the boundary between a {\it fuzzy} regime where the sample size is not enough for the archetype to be inferred and the network is only able to retrieve the examples it has been presented to, and a {\it concept-formed} regime where the network ``forgets'' about the examples and can retrieve the archetype. Consistency between the theoretical (dashed line) and the empirical (bullets, solid line is a guide for eyes) estimates is provided. \textcolor{black}{Notice that, at $\alpha=0$, the function $M_{\times}(r)$ is temperature independent.}}
\end{figure}
where $\phi$ is an auxiliary, mute, Gaussian field, while $\bar p$, $\bar q$ and $\bar n$ are, respectively, the expectation values for the order parameters $p_{12}$, $q_{12}$, $n_{1,a}$, being $p_{12}$ and $q_{12}$ overlaps between different replicas of the system (see the SM). These expectation values can be obtained by looking for the stationary points of the quenched pressure $\nabla_{\bar n,\bar q,\bar p}  A_{N,M}(\alpha,\beta) =0$ and turn out to fulfil the following self-consistent equations 
\begin{eqnarray}
  \label{eq:sce_n}
  \bar n &=& \mathbb{E}_{\phi\chi}\left(\frac{1}{M}\sum_{a=1}^M \chi_{a}\right) \tanh \big(\beta \bar n \sum_{a=1}^M \chi_{a} +\sqrt{\alpha\beta  \bar p M}\phi\big), \\
  \bar q &=& \mathbb{E}_{\phi\chi} \tanh^{2} \big(\beta \bar n \sum_{a=1}^M \chi_{a} +\sqrt{\alpha\beta  \bar p M}\phi\big),\\
  \label{eq:sce_p}
  \bar p &=& \frac{\beta \bar q}{[1-\beta(1-\bar q)]^{2}}.
\end{eqnarray}
Note that,  the example magnetization $n$ is embedded right in the expression of the model cost-function (see \eqref{eq:Hamiltonian}), much as the Mattis magnetization for the standard AGS theory.
On the other hand, $m$ does not play as a natural observable for the model as the system is, in principle, unaware of the archetypes. Having access to the archetypes, a practical way to compute $\bar{m}$ is to insert in the model a small field $J$ coupled to $m$ and then evaluate $\partial_J  A_{N,M}(\alpha,\beta,J)$ as $J \to 0$. More interestingly, as shown in the SM, in the limit of large $M$,   $\bar m$ spontaneously emerges and occurs to be directly related to $\bar n$; in particular, 
the two magnetizations, $\bar{m}$ and $\bar{n}$, get related as 
\begin{equation} \label{eq:relation}
\bar{n} = \frac{\bar{m} r}{1-\beta  (1-\bar q) (1- r^2)}.
\end{equation} 

In the next subsections we analyze the self-consistent equations under different conditions and try to derive analytically the existence of a threshold size $M_{\times}$ and of a critical size $M_c$ that determine the onset of different regimes as for the system ability to generalize.

\subsubsection{Finite dataset size} \label{sec:phase_diagram}
Let us resume eqs.~\eqref{eq:sce_n}-\eqref{eq:sce_p} and let us focus on the zero fast noise limit $\beta \to \infty$. Recalling that $M$ is large, we can introduce the random variable $S := \frac{1}{M} \sum_{a=1}^{M} \chi_a = r+\sqrt{\frac{1-r^2}{M}} Z ~~\rm{with} ~~ Z \sim \mathcal{N}(0,1)$, and, posing
\begin{equation}
\delta \bar Q  = \mathbb{E}_{Z} \frac{2}{\sqrt{\pi}}\exp\left[-\big(\frac{\bar n M S(Z)}{\delta \bar Q + \sqrt{2\alpha M}}\big)^2\right],
\end{equation}
where $\mathbb{E}_{Z}$ denotes the expectation over $Z$, the self-consistency equations for the magnetizations $\bar m$ and $\bar n$ become
\begin{eqnarray}
\bar  n &=& \mathbb{E}_{Z} S(Z)\operatorname{erf}\left(\frac{\bar  n M S(Z)}{\delta \bar Q + \sqrt{2\alpha M}}\right), \\
\bar m &=& \mathbb{E}_{Z}\operatorname{erf}\left(\frac{\bar  n M S(Z)}{\delta \bar Q + \sqrt{2\alpha M}}\right).
\end{eqnarray}
Via these equations it is possible to obtain an analytic expression for the threshold $M_{\times}$:  by requiring $\bar m > \bar  n$, we obtain the following inequality 
\begin{equation}
\mathbb{E}_{Z} \left[1-S(Z)\right]\operatorname{erf}\left(\frac{\bar  n M S(Z)}{\delta \bar Q + \sqrt{2\alpha M}}\right)>0
\end{equation}
which, to first order in $\bar n$, is satisfied if $\mathbb{E}_{Z} \left[1-S(Z)\right]S(Z)>0$,
namely, recalling the definition of $S$,
\begin{equation}
\mathbb{E}_{Z} \left[1-r - \sqrt{\frac{1-r^2}{M}}Z\right]\left[ r+\sqrt{\frac{1-r^2}{M}}Z \right]>0,
\end{equation}
whence $(1-r)r -  \frac{1-r^2}{M} \mathbb{E}_{Z}Z^2>0$. The latter inequality yields to
\begin{equation} \label{eq:threshold}
M >  \frac{1+r}{r} = M_{\times}. 
\end{equation}
Therefore, as expected, in order for the archetype magnetization to prevail over the example magnetization, the dataset size needs to be larger and larger as the sample gets more and more blurred, according to the above scaling.  This finding is corroborated by extensive computational checks and its robustness with respect to the fast noise is also tested, as we solved numerically the self-consistency equations for arbitrary, finite $\beta$ and derived an estimate of $M_{\times}$ by comparing the solutions of $\bar m$ and $\bar n$ obtaining analogous results as reported in Fig.~\ref{fig:soft_trans}.

Finally, we tested the validity of these results in the RBM framework, also exploring the robustness with respect to different loads. In the left panels of Fig.~\ref{fig:laSei} we compare the classification probabilities $\mathcal P(\boldsymbol z_E |  \boldsymbol \sigma = \boldsymbol \xi_E)$ and $\mathcal P(\boldsymbol z_E |  \boldsymbol \sigma = \boldsymbol \eta_E)$ versus $M$ and for different choices of $r$. The two probabilities display a monotonical behaviour as a function of $M$ that is, respectively, increasing and decreasing. This can be intuitively explained invoking the central limit theorem and recalling Fig.~\ref{fig:minima}: as $M$ increases, minima in the energy landscape become denser and denser in such a way that the system may eventually fall into a state other than $\boldsymbol \eta$, and this gets more and more likely as the dataset quality $r$ is lower. In the right panel of Fig.~\ref{fig:laSei} the threshold values obtained for different loads $\alpha$, analytically (i.e., investigating the Hopfield network, see Eq.~\eqref{eq:threshold}) and numerically (i.e., investigating the RBM) are shown to be perfectly consistent.

\begin{figure}[tb]
    \centering
    \includegraphics[scale=0.475]{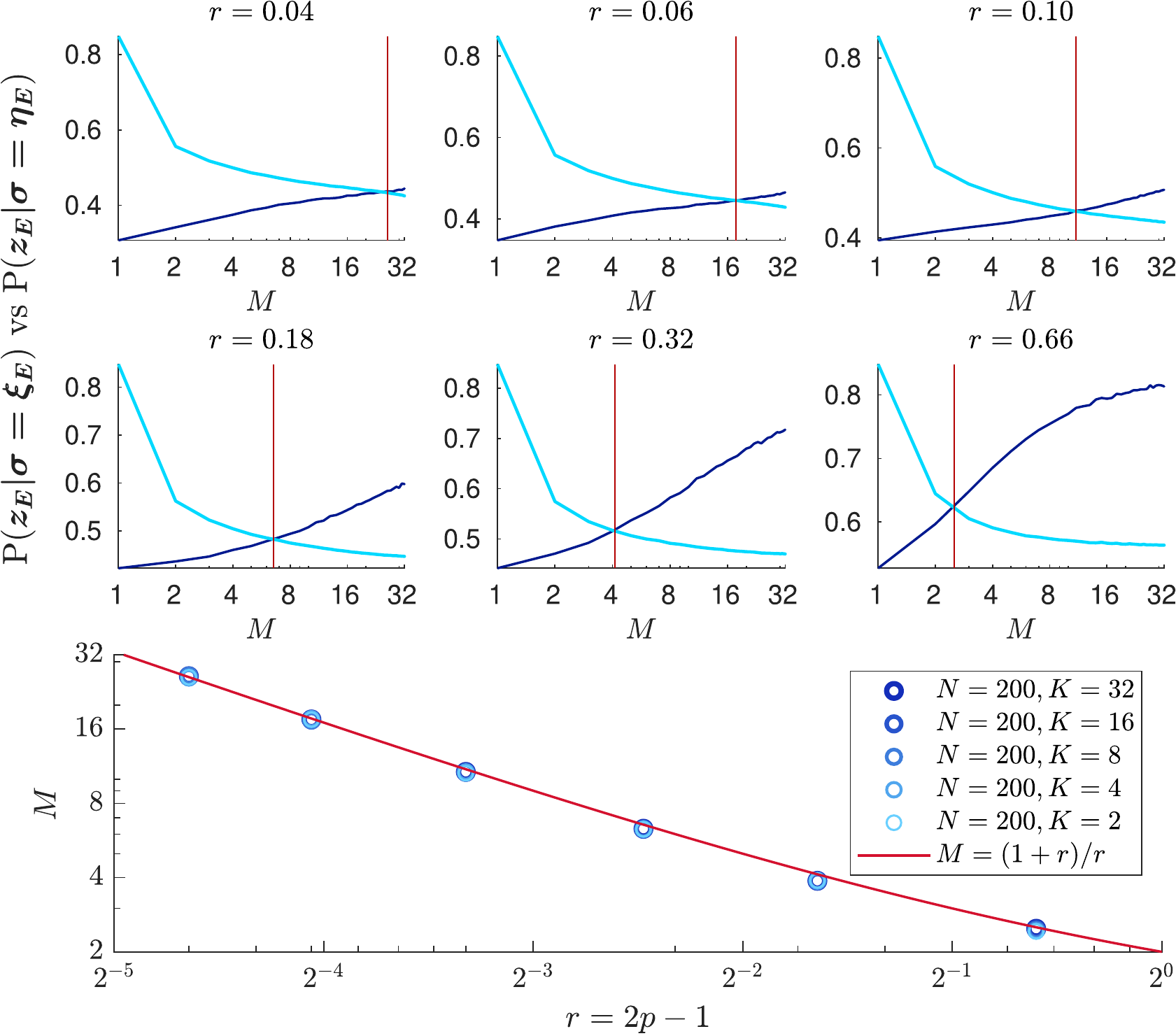}
    \caption{Each plot shows the probability of correctly classifying either an example or an archetype: blue lines are computational and drawn from Boltzmann machine learning (dark blue for the archetype, i.e. $\mathcal P(\boldsymbol z_E | \boldsymbol \sigma = \boldsymbol \xi_E)$, light blue for the example, i.e. $\mathcal P(\boldsymbol z_E | \boldsymbol \sigma = \boldsymbol \eta_E)$) while red lines are theoretical and draft from Hopfield network learning.  Different plots show different noise $r=2p-1$ levels and the vertical red line is evaluated via the relation $M_{\times} = (1+r)/r$. Right: This phase diagram shows two regions split by the threshold line $M_{\times} =  (1+r)/r$ (where $\bar{m}=\bar{n}$) and above that threshold the concept of the archetype emerges (and the network can successfully generalizes) while below a fuzzy misture where all the examples still preserve their characteristics persists. The threshold is shown to be universal: different values of $\alpha$ are computationally simulated for Boltzmann learning and shown as spots (different blue circles), while the theoretical prediction by the Hopfield network is presented as a continuous red line and the two perfectly coincide as expected.  On the vertical axes we report the critical size of the training set required for a successful learning while on the horizontal axes we report the degree of noise in the data-set.} \label{fig:laSei}
\end{figure}

\subsubsection{Infinite dataset size} \label{sec:phase_diagram2}
Let us now retain a finite noise $\beta$ and apply the rescaling of the noise $\beta \to  \frac{\beta }{r^2 + \beta  (1 -q) \left(1 - r^2\right)}$ to eqs.~\eqref{eq:sce_n}-\eqref{eq:sce_p}, thus, we reach expressions for the magnetization and the overlap whose content finally shines:
\begin{eqnarray}\small 
\label{eq:mm}
\bar m &=&  \mathbb{E}_{Z} \tanh \left[\beta \bar m M + Z \beta \sqrt{M \frac{1-r^{2}}{r^{2} }{\bar m}^{2} +\frac{  \alpha  \bar p}{r^4 \beta }M} \right]\\ \small \nonumber
\label{eq:qq}
\bar q &=&  \mathbb{E}_{Z} \tanh^{2} \left[\beta \bar m M + Z \beta \sqrt{M \frac{1-r^{2}}{r^{2} }{\bar m}^{2} +  \frac{  \alpha \bar p}{r^4 \beta }M} \right],
\end{eqnarray}
where $Z \sim \mathcal{N}(0,1)$ and $\bar{p}$ was given in (\ref{eq:sce_p}). As arguments of the hyperbolic tangent there are now three contributions and no longer just two as in the standard AGS theory. Indeed, beyond the signal carried by $\bar{m}$ there are two sources of (slow) noise: a classic one, proportional to $\alpha$, \textcolor{black}{stemming from} the other patterns not retrieved (pattern interference), and a new one \textcolor{black}{stemming from} the examples \textcolor{black}{making up the sample} related to the pattern (example interference). Note that, as consistency check,  if the network is not provided with datasets, but just noiseless patterns (i.e. $M=1$ and $r=1$), the whole theory collapses over the standard AGS one of the Hopfield model as it should. Further we stress that at $\alpha=0$ there is not a real phase transition (as a glance at these self-consistencies reveal), rather we need $\alpha>0$ (namely examples of different archetypes produce reciprocal attenuation of their retrieval, promoting as a result the emergence of the archetypes themselves).
We now inspect in more details the self-consistency for $\bar m$ and we check when the signal contribution prevails over the noise, 
namely we require that
$\beta M \bar m> \beta \sqrt{M} |Z| \sqrt{  {\bar m}^{2} (1-r^{2}) / r^{2}+ \alpha \bar{p}/(r^4 \beta)} $
holds almost surely. A solution to this inequality is given by
 \begin{equation}
   \label{eq:solMdis1}
     M > \frac{\gamma^{2}}{r^{2}} \left[ 1-r^{2}+\frac{\bar q}{\bar m^{2} (1 - \beta (1-\bar q))^2}\frac{\alpha} {r^{2}}\right],
 \end{equation}
 where $\gamma$ assesses the confidence level (in fact, the last condition implies $|Z|<\gamma$ which can be satisfied up to an exceedingly small probability at finite $M$). \textcolor{black}{Setting $\beta \to \infty$ this} result recovers the scaling in \eqref{due}) obtained via signal-to-noise analysis. Therefore, a large enough database ensures the stability of the archetype.

\begin{figure}[tb]
\centering
\includegraphics[width=0.95\linewidth]{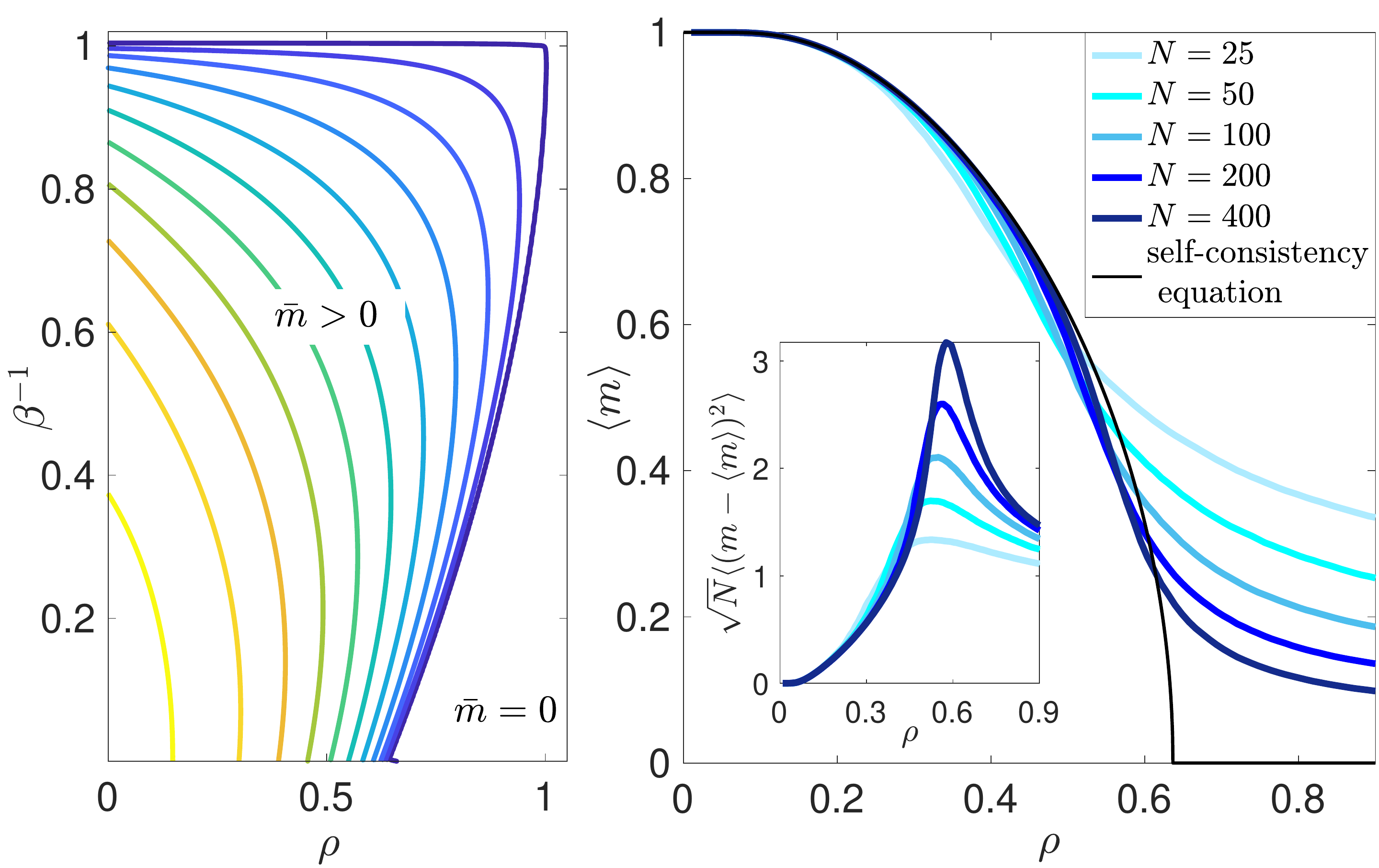}
\caption{\label{fig:scaling} Left panel: Phase diagram in the $(\beta, \rho)$ plane obtained by solving numerically equations \eqref{eq:criticalSCEbis}-\eqref{eq:criticalSCEter}. The outer, darkest line corresponds to the onset of a non-null magnetization $\bar m >0$, the remaining contour lines, in brighter and brighter colors, correspond to larger and larger values of magnetization. Right panel: the main plot shows a comparison
between the numerical solution of the self-consistency equation (\ref{eq:criticalSCEbis}) in the noiseless limit (thin and darkest solid line), and zero-temperature Monte Carlo runs at different sizes (from brighter to darker nuances, $N=25, 50, 100, 200, 400$, as shown by legend), while the inset shows the same finite-size-scaling for the susceptibility; in both figures, we set $\alpha=0.08$, $M=80$, and $r^2=\sqrt{\alpha/(M \rho)}$, and, to determine each point, a quenched average over \(50\) independent coupling matrices was performed.}
\end{figure}   

In order to evidence a possible, genuine phase transition, we have to study the limit \textcolor{black}{$(N,M,K)\to \infty, r \to 0$} and rephrase the whole theory intensive. In this limit we find that $\rho:=\alpha / (Mr^4)$ is a suitable control parameter (ruling the overall slow noise) able to trigger a phase transition and the self-consistency equations can be recast as
\begin{eqnarray}
  \label{eq:criticalSCEbis}
  \bar{m} &=& \mathbb{E}_Z \tanh (\beta \bar m + \beta Z \sqrt{\rho \bar q}) \xrightarrow{\beta \to \infty} \textrm{erf} \left( \frac{\bar m}{\sqrt{2 \rho}} \right),\\
    \label{eq:criticalSCEter}
  \bar{q} &=& \mathbb{E}_Z \tanh^2 (\beta \bar m + \beta Z \sqrt{\rho \bar q})  \xrightarrow{\beta \to \infty} 1.
  \end{eqnarray}
The numerical solution of eq.~\eqref{eq:criticalSCEbis} is sketched in Figure \ref{fig:scaling} (left panel), where we highlight a region in the $(\beta,\rho)$ plane where $\bar m$ is non null.

Focusing on the fast noiseless limit $\beta \to \infty$ and expanding at $\bar{m}=0$, a critical behavior is found at $\rho_{c} = \frac{2}{\pi}$ with the critical exponent $1/2$ (i.e., $\bar m \sim \sqrt{\frac{3}{\pi}} \sqrt{2 - \pi \rho}$ near the critical point): the concept is not abruptly formed, rather it stems gradually by a continuous contribution provided by all the examples.
\newline
The scenario painted above is corroborated by numerics: in Figure \ref{fig:scaling} (right panel) we plot a finite-size-scaling of the Mattis magnetization of the archetype, along with the related susceptibility, obtained via  Monte Carlo simulations. Signatures of criticality just occur at $\rho \approx \rho_c$, according to the theory.

\section*{Conclusions} \label{sec:conclusions}
The  a-priori knowledge  of the minimal data-set size to ensure a successful learning, is not yet known in general, despite the pivotal importance of such information en route toward optimized artificial intelligence.  In this work we try to contribute towards this goal and we restrict to the simplest random data-sets scenario, where shallow networks suffice and a general theory can be worked out. 
\newline
First, we prove that the supervised Boltzmann learning based on the grandmother-cell setting mirrors unsupervised Hopfield learning: the grand-mother cell scenario \cite{Nonna1,Nonna2} was originally introduced in a biological context (and adopted here to the machine learning counterpart) and it assumes that one single neuron (here one single hidden neuron) gets active when a pattern is presented to the network (here when a pattern is inputted in the visible layer); while this theory was criticized in the biological context, this simple setting naturally works here for structureless data-sets as all the patterns are equivalent under permutations and we can arbitrarily associate any of the patterns to be learnt to any of the hidden neurons; the unsupervised Hopfield learning generalizes the standard Hopfield model in the case where, instead of having a set of definite patterns (archetypes), only a sample of blurred versions are available and these are overall combined in a Hebbian kernel.
\newline
Next, we show numerically that these two learning schemes are successful as long as the training dataset is large enough and the related threshold sizes are the same.
This corroborates the formal equivalence between the two systems and we therefore proceeded with the analytical investigation of the Hopfield networks; note that, by applying Hubbard-Stratonovich  transformation to the Hopfield partition function (\ref{def:Z_H})  de facto with end up with the partition function of a Boltzmann machine equipped with gaussian prior, thus the related findings are then safety interpreted also in the Boltzmann learning framework.  We find that, as the number $M$ of examples provided grows, at first examples are wrongly stored as archetypes, but, as the data-set size gets large enough, the system eventually builds its own representation of the archetypes lying behind the  provided information: this signs the onset of a successful training. Clearly the larger the noise in the data-set the larger the required data-set size and we find out sharp scalings to be respected in order for learning to take place successfully:  the crossover among  examples vs archetype retrieval -- the formation of the concept of the archetype -- happens at $M_{\times} \sim r^{-1}$, archetype stability requires  $M_c \sim r^{-4}$.
\newline
We remark that, beyond the theory at finite $M$ (pivotal for practical purposes as infinite volume neural networks or data-sets are not available), we have also rephrased the whole approach in the $M,N,K \to \infty$ and $r \to 0$ limit: by introducing an effective control parameter $\rho=K/MNr^4$  we have shown that there exists a critical $\rho_c= 2/\pi$ where the Mattis magnetization of the archetype continuously raises from zero, accompanied by a divergence of the relative susceptibility: there is a true phase transition underlying  the formation of the concept of the archetype and it is of second-order, highlighting that this inference process is of continuous nature. 
\newline 
Finally, as there are arguments by which a random theory can be of vague utility (mainly due to the structure-less nature of the data-sets), yet there are at least three reasons by which it can be a meaningful starting point for future explorations:  it is universal  (we do have just one theory for the random scenario), it acts as a {\it bound} for structured theories (as, for a trivial Shannon compression argument, if the network is able to cope with  $K$ random patterns, it would be possibly able to cope with structured patterns), finally an argument of historical and methodological continuity: retrieval theory, namely the celebrated AGS theory, is a random theory too \cite{AGS}. 
\newline
Beyond the computational reward of the a-priori knowledge of the minimal data-set size for training, we believe that the whole approach is by itself another dowel in the mosaic en route toward a systemic theory of Artificial Intelligence, where all its emerging capabilities can be explained, that is under construction via the statistical mechanics of complex systems \cite{Florent1,Florent2,Zecchina,Structure1,Lenka,Montanari}.

%
%
%

\section*{Acknowledgments}\label{acknowledgments}
EA and GD acknowledge partial financial support from Sapienza University of Rome (RM120172B8066CB0).
\newline
AB  and FA are grateful to MUR (PRIN 2017, Project no. 2017JFFHSH) and to UniSalento (Prot. n. 148919-III8) for financial support and to  INFN, Sezione di Lecce  (FIELDTURB) for providing computational facilities.
 
\section*{Competing interests statement}
The Authors declare no competing interests.

\newpage
\onecolumngrid

\begin{center}
\bf Supplementary Information
\end{center}
\appendix
\section{Signal to noise approach}
Although not permitting a sharp control as statistical mechanics does, the signal-to-noise technique still represents an optimal tradeoff to start examining the system with relatively cheap analytical and computational expenses. In the next subsec.~\ref{ssec:low}, we study the low storage regime (i.e., $\alpha=0$) whose inspection can be achieved by considering solely one archetype pattern, along with the related set of $M$ examples; in the following subsec.~\ref{ssec:high}, we focus on the high storage (i.e. $\alpha >0$) where $K=\alpha N$ patterns are taken into account, along with the related $K \times M$ examples, $M$ examples each pattern. In all the cases, the network shall be built of by $N$ Ising neurons.

\subsection{Single archetype (low storage)} \label{ssec:low}
Let us consider an archetype pattern $\boldsymbol{\xi}$ that we would like to be spontaneously stored by the Hopfield network upon providing it with a set of $M$ noisy examples $\{ \boldsymbol{\eta}^{a} \}_{a=1,...,N}$ of the archetype. In order to analytically approach the problem we make the following schematization for the noisy examples
	\[
		\begin{cases}
			\eta^{a}_i=\xi_i\quad\text{with probability}\quad p\\
			\eta^{a}_i=-\xi_i\quad\text{with probability}\quad 1-p.
		\end{cases}
	\]
	In other words the noisy examples are obtained flipping some of the components of the archetype such that for $p \to 0$ and for $p \to 1$ there is no noise and the example matches perfectly the archetype (or its spin-flipped dual version), while for $p \to 1/2$ any reminiscence of the archetype in the examples gets lost. Since the network is fed by the noisy patterns, according to the Hebbian prescription we can introduce the coupling $J_{ij}$ as
	\begin{equation}
		J_{ij}=\frac{1}{N}\sum_{a=1}^M\eta^{a}_i\eta^{a}_j.
		\label{eq:couplings_single}
	\end{equation}
	The questions we want to answer to are $i.$ whether the archetype is dynamically stable and $ii.$ whether the examples are dynamically stable.
	\newline
	 In order to understand these points we firstly analyze the statistics of couplings.
		The average value of the coupling $J_{ij}$ is
		\[
			\mean*{J_{ij}}=\frac{1}{N}\sum_{a=1}^M\mean*{\eta^{a}_i\eta^{a}_j},
		\]
		where $\mean{\cdot}$ is the average over the noise affecting the patterns. Since the noise acts over single entries independently and identically, and since the self-interactions are absent, we can factorize the expectation so to obtain
		\begin{equation}
			\mean*{J_{ij}}=\frac{1}{N}\sum_{a=1}^M\mean*{\eta^{a}_i}\mean*{\eta^{a}_j}=\frac{1}{N}\sum^M_{a=1}\qua*{p\xi_i-(1-p)\xi_i}\qua*{p\xi_j-(1-p)\xi_j},
		\end{equation}
		that is
		\begin{equation}
			\mean*{J_{ij}}=\frac{M}{N}\ton*{4p^2+1-4p}\xi_i\xi_j=\frac{4M}{N}\ton*{p-\frac{1}{2}}^2\xi_i\xi_j.
			\label{eq:mean_J_single}
		\end{equation}
		If the archetype were available, the couplings $J_{ij}^0$ obtained using it would be
		\[
			J_{ij}^0=\frac{1}{N}\xi_i\xi_j,
		\]
		in such a way that the expectation of the coupling stemming from examples can be written as
		\[
			\mean*{J_{ij}}=4M\ton*{p-\frac{1}{2}}^2J_{ij}^0.
		\]
		In order to inspect the stability of the archetype, we have also to consider the fluctuations of the noisy couplings: this can be easily done computing their variance; by definition
		\[
			\var\qua*{J_{ij}}=\mean*{J_{ij}^2}-\mean*{J_{ij}}^2.
		\]
		Let us consider the squared couplings, we can write them as
		\[
			J_{ij}^2=\frac{1}{N^2}\left(\sum_{a=1}^M\eta^{a}_i\eta^{a}_j \right)^2=\frac{1}{N^2}\sum_{a,b=1}^M\eta^{a}_i\eta^{a}_j\eta^{b}_i\eta^{b}_j,
		\]
		that is
		\[
			J_{ij}^2=\frac{1}{N^2} \left(M+\sum_{\substack{a, b= 1 \\ a \neq b}}^{M}\eta^{a}_i\eta^{a}_j\eta^{b}_i\eta^{b}_j \right).
		\]
		We can now take the expectation over the noise, but we have to consider the fact that $\eta^{a}_i$ and $\eta^{b}_i$ are correlated and, as a consequence, the average can not be fully factorized
		\[
			\mean*{J_{ij}^2}=\frac{M}{N^2}+\frac{1}{N^2}\sum_{\substack{a, b=1 \\ a \neq b}}^{M}\mean*{\eta^{a}_i\eta^{b}_i}\mean*{\eta^{a}_j\eta^{b}_j}.
		\]
		The correlation between the two realizations of the noise is
		\begin{equation}
			\mean*{\eta^{a}_i\eta^{b}_i}=p^2\xi_i^2+(1-p)^2(-\xi_i)^2+2p(1-p)\xi_i(-\xi_i)=4\ton*{p-\frac{1}{2}}^2
			\label{eq:correlation_noise_noise}
		\end{equation}
		and so we get
		\[
			\mean*{J_{ij}^2}=\frac{M}{N^2}\qua*{1+16(M-1)\ton*{p-\frac{1}{2}}^4}.
		\]
		We have all the ingredients for computing the variance, recalling Eq.~\eqref{eq:mean_J_single} we obtain
		\begin{eqnarray}
		\notag
			\var\qua*{J_{ij}}&=&\frac{M}{N^2}\qua*{1+16(M-1)\ton*{p-\frac{1}{2}}^4}-\frac{16M^2}{N^2}\ton*{p-\frac{1}{2}}^4\\
		\label{eq:variance_J_single}
			&=&\frac{M}{N^2}\qua*{1-16\ton*{p-\frac{1}{2}}^4}.
		\end{eqnarray}

		In order to understand if the archetype is stable we have to determine if the expectation is larger or smaller than the square root of the variance. Indeed, if the variance is small we can safely set $J_{ij}\approx\mean*{J_{ij}}$ and, for the aforementioned considerations, the archetype is stable. Conversely, if the variance is large, the couplings will substantially be random variables not carrying any signal; in this case the archetype is then expected not to be dynamically stable. Using Eqs.~\eqref{eq:mean_J_single} and \eqref{eq:variance_J_single} we obtain the following equation for the crossover between these two situations
		\[
			\mean*{J_{ij}}=\sqrt{\var\qua*{J_{ij}}},
		\]
		that is
		\[
			\frac{4M}{N}\ton*{p-\frac{1}{2}}^2\xi_i\xi_j=\frac{\sqrt{M}}{N}\sqrt{1-16\ton*{p-\frac{1}{2}}^4}.
		\]
		Solving this equation we obtain an expression for the crossover value of $M$, referred to as $M_c$:
		\[
			M_c=\frac{1-4\ton*{p-\frac{1}{2}}^2}{4\ton*{p-\frac{1}{2}}^2}.
		\]
		As one would expect, $M_c$ diverges for $p\to\frac{1}{2}$, because more and more examples are necessary for inferring the archetype if almost half the components are flipped. We also notice that $M_c$ is symmetric around $p=\frac{1}{2}$, because, as a consequence of the Hebbian rule, storing a pattern or its flipped version is the same.
	\subsubsection{Stability of the archetype}
		\label{subsec:stability_archetype}
		In order to understand if the archetype is stable under the dynamics induced by the couplings $J_{ij}$ defined above, we have to consider the local field $h_i$ acting on it. It holds
		\[
			h_i\xi_i=\frac{1}{N}\sum_{a=1}^M\sum_{j\neq i}^N\eta_i^a\eta_j^a\xi_i\xi_j
		\]
		and taking the expectation over the noise we obtain
		\[
			\mean*{h_i\xi_i}=\frac{1}{N}\sum_{a=1}^M\sum_{j\neq i}^N\mean*{\eta_i^a\xi_i}\mean*{\eta_j^a\xi_j}.
		\]
		The correlation between the archetype and one of its noisy versions is
		\begin{equation}
			\mean*{\eta_i^a\xi_i}=p\ton*{\xi_i}^2+(1-p)\ton*{-\xi_i\xi_i}=2\ton*{p-\frac{1}{2}},
			\label{eq:correlation_noise_archetype}
		\end{equation}
		this yields
		\begin{equation}
			\mean*{h_i\xi_i}=4M\ton*{p-\frac{1}{2}}^2.
			\label{eq:mean_field_archetype}
		\end{equation}
		This quantity is always positive meaning that, on average, the archetype is dynamically stable. However, as in the case of the couplings, we have to consider also the variance of $h_i\xi_i$, so to determine if it is a self-averaging quantity. We can write
		\[
			\ton*{h_i\xi_i}^2=h_i^2=\frac{1}{N^2}\sum_{a=1}^M\sum_{j\neq i}^N\sum_{b=1}^M\sum_{k\neq i}^N\eta_i^a\eta_j^a\xi_j\eta_i^b\eta_k^b\xi_k.
		\]
		In order to average this quantity we have to take out the self-interactions from the summation. Multiplying each addend by the factor $\qua*{(1-\delta_{ab})+\delta_{ab}}\qua*{(1-\delta_{jk})+\delta_{jk}}$ we get four terms
		\begin{equation}
		\notag
			h_i^2=\frac{1}{N^2} \left[MN+\sum_{j\neq i}^N\sum_{k\neq i, j}^N\sum_a^M\xi_j\eta_j^a\xi_k\eta_k^a+
			+ \sum_{j\neq i}^N\sum_a^M\sum_{b\neq a}^M\eta_j^a\eta_j^b\eta_i^a\eta_i^b+
		\sum_{j\neq i}^N\sum_{k\neq i, j}^N\sum_a^M\sum_{b\neq a}^M\xi_j\eta_j^a\xi_k\eta_k^b\eta_i^a\eta_i^b\right].
		\end{equation}
		Each summation does not contain any self-interaction term, so we can take the expectation over the noise strightforwardly; exploiting Eqs.~\eqref{eq:correlation_noise_noise} and \eqref{eq:correlation_noise_archetype} we obtain
		\begin{equation}
			\mean*{\ton*{h_i\xi_i}^2}=\frac{1}{N^2}\left[MN+4MN^2\ton*{p-\frac{1}{2}}^2+
			16M^2N\ton*{p-\frac{1}{2}}^4+16M^2N^2\ton*{p-\frac{1}{2}}^4\right].
		\end{equation}
		Using this result and Eq.~\eqref{eq:mean_field_archetype} we can now compute the variance
		\begin{equation}
			\var\qua*{h_i\xi_i}=\mean*{\ton*{h_i\xi_i}^2}-\mean*{h_i\xi_i}^2=
			 \frac{M}{N}+4M\ton*{p-\frac{1}{2}}^2+16\frac{M^2}{N}\ton*{p-\frac{1}{2}}^4.
			\label{eq:variance_field_archetype}
		\end{equation}

		The archetype is stable only if the mean value is larger than the variance, otherwise the fluctuations of the noise are dominating. Combining Eqs.~\eqref{eq:mean_field_archetype} and \eqref{eq:variance_field_archetype} we obtain the following condition determining if the network successfully retrieves the archetype
		\[
			\mean*{h_i\xi_i}>\sqrt{\var\qua*{h_i\xi_i}},
		\]
		that is
		\[
			16\ton*{p-\frac{1}{2}}^4\ton*{1-\frac{1}{N}}M^2-\qua*{\frac{1}{N}+4\ton*{p-\frac{1}{2}}^2}M>0.
		\]
		Being the coefficient of $M^2$ always positive, the conclusion is that the archetype is stable provided that
		\begin{equation}
			M>\frac{\frac{1}{N}+4\ton*{p-\frac{1}{2}}^2}{16\ton*{p-\frac{1}{2}}^4\ton*{1-\frac{1}{N}}}\approx\frac{\frac{1}{N}+4\ton*{p-\frac{1}{2}}^2}{16\ton*{p-\frac{1}{2}}^4} = M_{a,1}.
			\label{eq:stability_archetype}
		\end{equation}
		This expression shows that if the sample is sufficiently large the archetype is stable; more precisely, the sample size has to be related to the underlying degree of noise: in the large $N$ limit and in the low-load ($\alpha=0$) regime under consideration, we need to scale the number of examples as $M \propto 1/(2p-1)^2$ in order to reliably store the archetype; as we will see, this is no longer true in the high load ($\alpha >0$) where we  will need $M \propto 1/(2p-1)^4$.

The estimate given in \eqref{eq:stability_archetype} is successfully compared with simulations in Fig.~\ref{fig:esempinumericiK1} (left panel).

\begin{figure}[t!]
	\centering
    \includegraphics[width=0.45\textwidth]{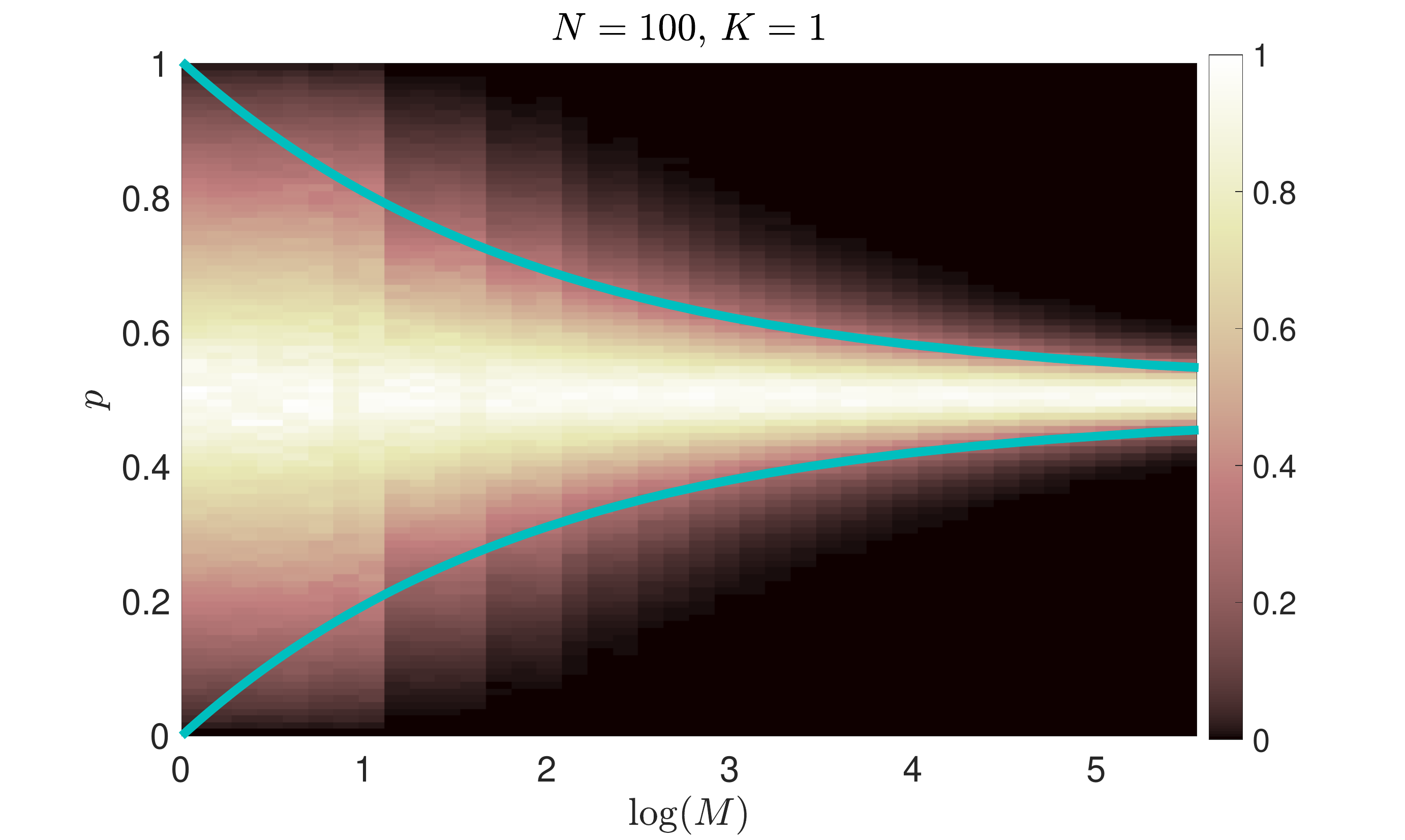}
    \includegraphics[width=0.45\textwidth]{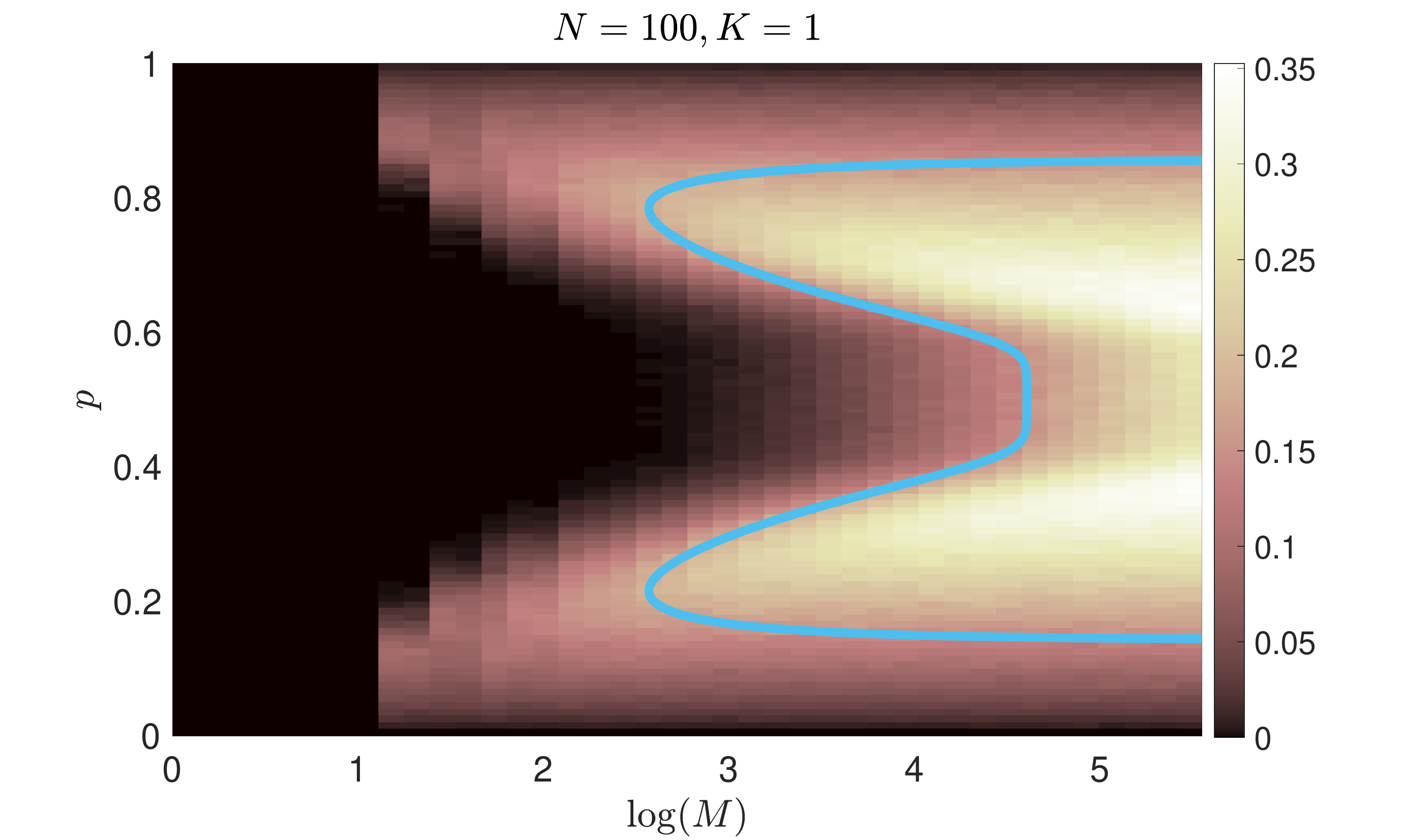}
	\caption{
	Left: Archetype stability versus $p$ and $M$.
The colormap shows the fraction of neurons that occur to be unstable when set in the configuration $\boldsymbol \sigma = \boldsymbol \xi$, as a function of $p$ and of $M$ (logarithmic axis); here $\boldsymbol \xi$ is the unique archetype which the $M$ examples refer to and data depicted have been obtained by averaging over a sample of $100$ different realizations of $\boldsymbol \xi$ and related examples, while the number $N$ of neurons is set equal to 100. The solid lines represent the roots of equation \eqref{eq:stability_archetype} and correctly demarcate the region of instability corresponding to values of $p$ close to $1/2$.
Right: Example stability versus $p$ and $M$.
The colormap shows the fraction of neurons that occur to be unstable when set in the configuration $\boldsymbol \sigma = \boldsymbol \eta$, as a function of $p$ and of $M$ (logarithmic axis); here $\boldsymbol \eta$ is one of the $M$ examples related to the unique archetype and data depicted have been obtained by averaging over a sample of $100$ different realizations of $\boldsymbol \xi$ and related examples, while the number $N$ of neurons is set equal to 100. The solid line represents the root of equation (\ref{eq:cons}) and correctly demarcates the region of instability corresponding to values of $p$ close to $1/2$ or relatively large values of $M$.}\label{fig:esempinumericiK1}
\end{figure}

	\subsubsection{Stability of the noisy examples}
		In the previous subsection we have proven that, as long as $M$ is sufficiently large, the network can store the archetype, however, we should also consider the dynamical stability of the noisy patterns we used for training the network. Indeed, we would like the network to learn the archetype \emph{and} to forget the noisy patterns. In order to understand if this is the case we proceed as done before, analyzing the dynamical stability of the arbitrary example $\boldsymbol \eta^c$. For the $i$-th spin we get
		\begin{equation}
			h_i\eta_i^c =\frac{1}{N}\sum_{a=1}^M\sum_{j\neq i}^N\eta_j^a\eta_i^a\eta_j^c\eta_i^c
			=\frac{1}{N}\sum_{a=1}^M\sum_{j\neq i}^N\eta_j^a\eta_i^a\eta_j^c\eta_i^c\qua*{\ton*{1-\delta_{ac}}+\delta_{ac}}
			=1+\frac{1}{N}\sum_{a\neq c}^M\sum_{j\neq i}^N\eta_j^a\eta_i^a\eta_j^c\eta_i^c.
		\end{equation}
		Taking the expectation over the noise and recalling Eq.~\eqref{eq:correlation_noise_noise} we obtain
		\begin{equation}
			\mean*{h_i\eta_i^c}=1+16(M-1)\ton*{p-\frac{1}{2}}^4,
			\label{eq:mean_field_noise}
		\end{equation}
		as expected, for $M=1$, this quantity reduces to one. We now turn to the variance, which reads as
		\[
			\ton*{h_i\eta_i^c}^2=h_i^2=\left( \frac{1}{N}\sum_{a=1}^M\sum_{j\neq i}^N\eta_j^a\eta_i^a\eta_j^c \right)^2.
		\]
		Multiplying each addend by $\qua*{\ton*{1-\delta_{ac}}+\delta_{ac}}$ we can recast this quantity as
		\begin{equation}
		\notag
			\ton*{h_i\eta_i^c}^2=\frac{1}{N^2}\qua*{N\eta_i^c+\sum_{a\neq c}^M\sum_{j\neq i}^N\eta_j^a\eta_i^a\eta_j^c}^2
			= 1+\frac{2}{N}\sum_{a\neq c}^M\sum_{j\neq i}^N\eta_j^a\eta_i^a\eta_j^c\eta_i^c
			+\frac{1}{N^2}\sum_{a\neq c}^M\sum_{b\neq c}^M\sum_{j\neq i}^N\sum_{k\neq i}^N\eta_j^a\eta_i^a\eta_j^c\eta_k^b\eta_i^b\eta_k^c,
		\end{equation}
		that is
		\[
			\ton*{h_i\eta_i^c}^2=2h_i\eta_i^c-1+\frac{1}{N^2}\sum_{a\neq c}^M\sum_{b\neq c}^M\sum_{j\neq i}^N\sum_{k\neq i}^N\eta_j^a\eta_i^a\eta_j^c\eta_k^b\eta_i^b\eta_k^c.
		\]
		Splitting the summation into four terms by introducing the factor $\qua*{\ton*{1-\delta_{ab}}+\delta_{ab}}\qua*{\ton*{1-\delta_{jk}}+\delta_{jk}}$ we get rid of the self-interaction terms arriving at the following expression
\begin{eqnarray} \nonumber
\ton*{h_i\eta_i^c}^2&=&2h_i\eta_i^c-1+\frac{1}{N^2} [MN+\sum_{a\neq c}^M\sum_{b\neq a, c}^M\sum_{j\neq i}^N\eta_i^a\eta_i^b\eta_j^a\eta_j^b \\
			 &+&\sum_{a\neq c}^M\sum_{j\neq i}^N\sum_{k\neq i,j}^N\eta_j^a\eta_j^c\eta_k^a\eta_k^c+
			\sum_{a\neq c}^M\sum_{b\neq a, c}^M\sum_{j\neq i}^N\sum_{k\neq i,j}^N\eta_i^a\eta_i^b\eta_j^a\eta_j^c\eta_k^b\eta_k^c].
\end{eqnarray}
		Using Eqs.~\eqref{eq:correlation_noise_noise} and \eqref{eq:mean_field_noise} we compute the expectation obtaining
		\begin{equation}
			\mean*{\ton*{h_i\eta_i^c}^2}= 1+32M\ton*{p-\frac{1}{2}}^4+\frac{M}{N}+16\frac{M^2}{N}\ton*{p-\frac{1}{2}}^4+16M\ton*{p-\frac{1}{2}}^4+64M^2\ton*{p-\frac{1}{2}}^6.
		\end{equation}
		Thus, recalling Eq.~\eqref{eq:mean_field_noise}, the variance is
		\begin{equation}
			\var\qua*{h_i\eta_i^c}=\frac{M}{N}+16\frac{M^2}{N}\ton*{p-\frac{1}{2}}^4+16M\ton*{p-\frac{1}{2}}^4+ 64M^2\ton*{p-\frac{1}{2}}^6-256M^2\ton*{p-\frac{1}{2}}^8.
			\label{eq:variance_field_noise}
		\end{equation}

		As usual, the condition for the dynamical stability of the noisy pattern is $\mean*{h_i\eta_i^c}>\sqrt{\var\qua*{h_i\eta_i^c}}$, and, exploiting Eqs.~\eqref{eq:mean_field_noise} and \eqref{eq:variance_field_noise}, this constraint reads
		\begin{eqnarray}\nonumber
			1+32M\ton*{p-\frac{1}{2}}^4+256M^2\ton*{p-\frac{1}{2}}^8
			&>& \frac{M}{N}+16\frac{M^2}{N}\ton*{p-\frac{1}{2}}^4+16M\ton*{p-\frac{1}{2}}^4\\
			\label{eq:cons}
			 &+&64M^2\ton*{p-\frac{1}{2}}^6-256M^2\ton*{p-\frac{1}{2}}^8.
		\end{eqnarray}
		We rewrite this expression as
		\[
			aM^2+bM+c>0,
		\]
		where
		\begin{equation}
			\begin{cases}
				a=512\ton*{p-\frac{1}{2}}^8-64\ton*{p-\frac{1}{2}}^6-\frac{16}{N}\ton*{p-\frac{1}{2}}^4\\
				b=16\ton*{p-\frac{1}{2}}^4-\frac{1}{N}\\
				c=1.
			\end{cases}
			\label{eq:coefficients}
		\end{equation}
With some algebra one can see that the previous inequality is always satisfied for $p$ that is either relatively large or relatively small (neglecting $1/N$ terms this is for $p<1/2-1/\sqrt{7}$ and $p>1/2+1/\sqrt{7}$). In between, the inequality can be satisfied provided that $M$ is relatively small. A comparison with numerical simulations is shown in Fig.~\ref{fig:esempinumericiK1} (right panel).

We can further deepen the behavior of the network as the size of the dataset varies by inspecting the values of the energies corresponding to configurations $\boldsymbol \sigma = \boldsymbol \xi$ and $\boldsymbol \sigma = \boldsymbol \eta^c$ (for arbitrary $c$); in fact, if we assume the network to be in one of those states we can simply compare the related energies without taking care of thermodynamic expectations and therefore no statistical mechanics is yet needed.

		Exploiting Eq.~\eqref{eq:couplings_single} we can write the Hamiltonian of the system as
		\begin{equation}
			\mathcal {H}_{N,M}(\boldsymbol \sigma | \boldsymbol \eta)=-\frac{1}{N}\sum_{i, j\neq i}^N\sum_a^M\eta_i^a\eta_j^a\sigma_i\sigma_j
			\label{eq:hamiltonian_single}
		\end{equation}
		Introducing the random variables $\chi_i^a$ satisfying
		\[
			\chi_i^a=
			\begin{cases}
				1 \ \text{with probability}\ p\\
				-1 \ \text{with probability}\ 1-p
			\end{cases}
		\]
		we can rewrite the Hamiltonian as
		\[
			\mathcal{H}_{N,M}(\boldsymbol \sigma | \boldsymbol \chi, \boldsymbol \xi )=-\frac{1}{N}\sum_{i, j\neq i}^N\sum_a^M\chi_i^a\chi_j^a\xi_i\xi_j\sigma_i\sigma_j.
		\]
		We can now compute the energy of the examples and of the archetype. For what concerns the latter it holds
		\begin{equation}
			E_a =\mathcal{H}_{N,M}(\boldsymbol \xi |\boldsymbol \chi, \boldsymbol \xi )=-\frac{1}{N}\sum_{i, j\neq i}^N\sum_a^M\chi_i^a\chi_j^a\xi_i\xi_j\xi_i\xi_j=-\frac{1}{N}\sum_{i, j\neq i}^N\sum_a^M\chi_i^a\chi_j^a.
		\end{equation}
		Noting that
		\[
			\chi_i^a\chi_j^a=
			\begin{cases}
				1 \ \text{with prob.}\ p^2+(1-p)^2\\
				-1 \ \text{with prob.}\ 2p(1-p)
			\end{cases}
		\]
		we get, after taking the expectation
		\begin{equation}
			\mean*{E_a}=-MN\qua*{p^2+(1-p)^2-2p(1-p)}=-4MN\ton*{p-\frac{1}{2}}^2.
			\label{eq:E_a_single}
		\end{equation}
		Analogously the energy of an example $\eta^c$ is
		\begin{equation}
		\notag
			\mean*{E_e}=\mathcal{H}_{N,M}(\boldsymbol \eta^c | \boldsymbol \chi, \boldsymbol \xi )=-\frac{1}{N}\sum_{\substack{i, j=1 \\ i\neq j}}^N\sum_{a=1}^M\chi_i^a\chi_j^a\xi_i\xi_j\chi_i^c\chi_j^c\xi_i\xi_j= -\frac{1}{N}\qua*{\sum_{i, j\neq i}^N\chi_i^c\chi_j^c\chi_i^c\chi_j^c+\sum_{i, j\neq i}^N\sum_{a\neq c}^M\chi_i^a\chi_j^a\chi_i^c\chi_j^c}.
		\end{equation}
		It holds
		\[
			\chi_i^a\chi_j^a\chi_i^c\chi_j^c=
			\begin{cases}
				1 \ \text{with probability}\ p^4+(1-p)^4+6(p-1)^2p^2\\
				-1 \ \text{with probability}\ 4p^3(1-p)+4p(1-p)^3
			\end{cases}
		\]
		and consequently
		\begin{equation}
			\mean*{E_e}=-N\qua*{1+16(M-1)\ton*{p-\frac{1}{2}}^4}.
			\label{eq:E_e_single}
		\end{equation}

		Comparing Eqs.~\eqref{eq:E_a_single} and \eqref{eq:E_e_single} we can compute the energy difference $\Delta_{N,M}$ between the archetype and a given example
		\begin{align*}
			\Delta_{N,M}&=\mean*{E_a}-\mean*{E_e} =N\qua*{1+16(M-1)\ton*{p-\frac{1}{2}}^4-4M\ton*{p-\frac{1}{2}}^2}.
		\end{align*}
		This expression has two implications
		\begin{enumerate}
			\item for $M$ sufficiently small and $p$ close to $1/2$ it holds $\Delta_{N,M} \approx N>0$ and so the energy of the examples, as expected, is lower than that of the archetype;
			\item the energy difference diverges for $N\to\infty$ suggesting a true phase transition to happen.
		\end{enumerate}
		We can define the critical number of examples $M_E$ such that $\Delta_{N,M_E}=0$, this yields
		\begin{equation}\label{GiordanoBoundE}
			M_E=\frac{16\ton*{p-\frac{1}{2}}^4-1}{16\ton*{p-\frac{1}{2}}^4-4\ton*{p-\frac{1}{2}}^2}.
		\end{equation}
		Note that it holds $M_E=M_{a,1}+1$, where $M_{a,1}$, defined in Eq.~\eqref{eq:stability_archetype}, is the value of $M$ for which the archetype becomes dynamically stable.
		Figure \ref{fig:J_single_Energie} (first row) provides a picture of this scenario.

			\begin{figure*}[tb]
	            \includegraphics[width=1.0\textwidth]{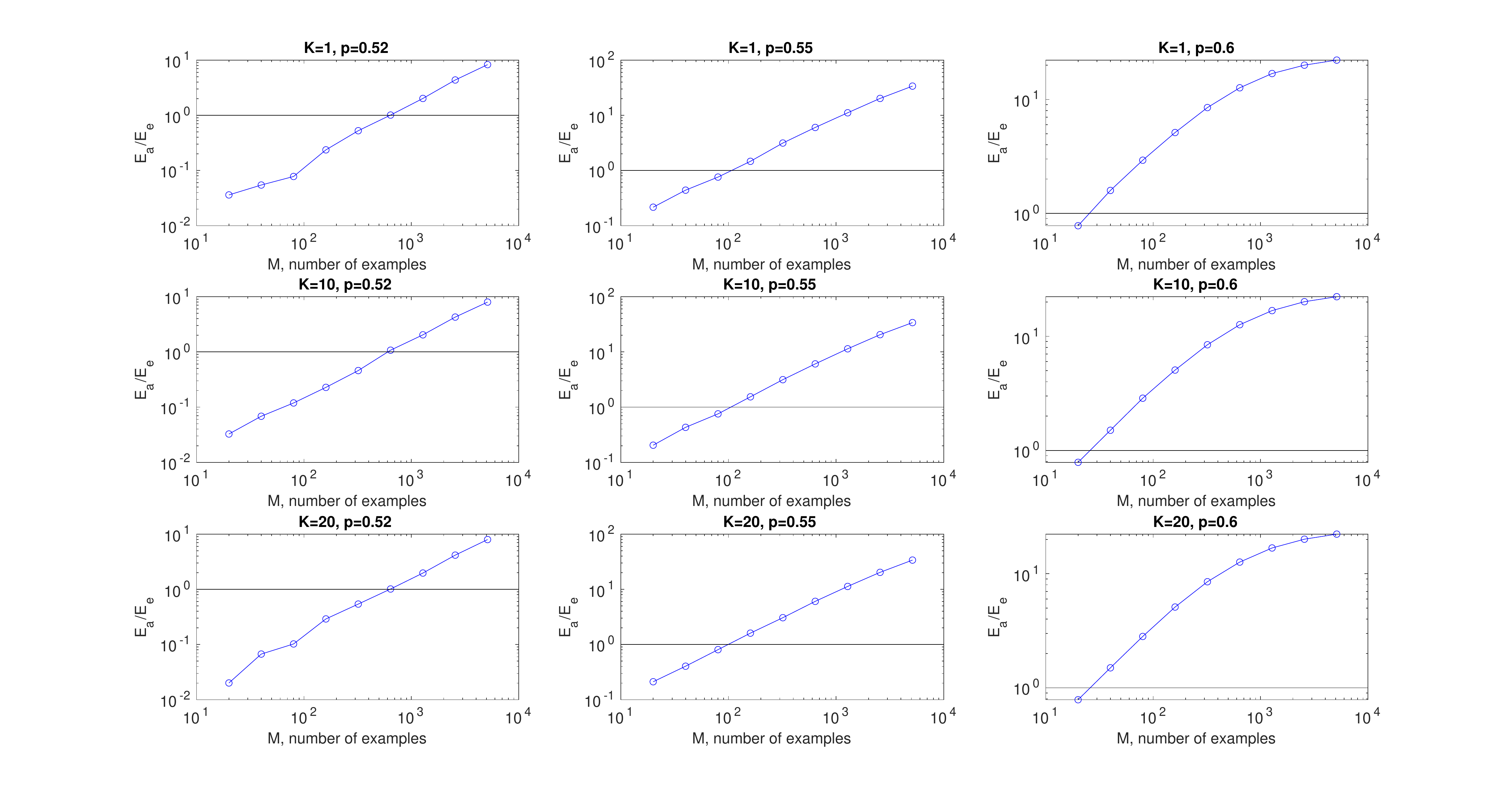}
	        \caption{Comparison between the energy of the system evaluated when the neural configuration corresponds to the archetype ($\mean*{E_a}$) and to an example ($\mean*{E_e}$), as the size $M$ of the dataset (per pattern) is varied. The panels are arranged in such a way that different columns correspond to different choices for the parameter $p$ and different rows correspond to different choices for the parameter $K$. Notice that, in any case, $|E_a|$ eventually overcomes $|E_e|$ (this is highlighted by the intersection with the horizontal line corresponding to unity) and therefore the configuration corresponding to the archetype is energetically more convenient. This data are obtained by averaging over the energies of all the $K$ archetypes and the $M\cdot K$ noisy examples.}
        	\label{fig:J_single_Energie}
		\end{figure*}

\subsection{Many archetypes (high storage)} \label{ssec:high}
	We now turn to the case where we aim to storing $K$ distinct archetypes $\{\boldsymbol{\xi}^{\mu}\}_{\mu=1,...,K}$ starting from $M$ noisy realizations $\{\boldsymbol{\eta}^{\mu, a}\}_{\mu=1,...,K}^{a=1,...,M}$ of each of them. Exploiting the Hebbian rule we can write the couplings as
	\[
		J_{ij}=\frac{1}{N}\sum_{a=1}^M\sum_{\mu=1}^K\eta_i^{\mu, a}\eta_j^{\mu, a}.
	\]
	In the following, mirroring the previous subsection, we analyse the stability of the archetypic patterns and of the noisy patterns, respectively.

	\subsubsection{Stability of the archetypes}
		The procedure for determining if the archetypes are stable is the same as the one performed in Subsec.~\ref{subsec:stability_archetype}. First of all we have to evaluate the product between the local field and a given archetype, say $\boldsymbol{\xi}^1$ without loss of generality. Focusing on the $i$-th component we get
		\[
			h_i\xi_i^1=\frac{1}{N}\sum_{j\neq i}^N\sum_{\mu=1}^K\sum_{a=1}^M\eta_i^{\mu, a}\eta_j^{\mu, a}\xi_i^1\xi_j^1
		\]
		and taking the expectation over the noise we obtain
		\[
			\mean*{h_i\xi_i^1}=\frac{1}{N}\sum_{j\neq i}^N\sum_{\mu=1}^K\sum_{a=1}^M\mean*{\eta_i^{\mu, a}\xi_i^1}\mean*{\eta_j^{\mu, a}\xi_j^1}.
		\]
		Generalizing Eq.~\eqref{eq:correlation_noise_archetype} to the case of multiple archetypes we can write the expectation appearing in the sum as
		\begin{equation}
			\mean*{\eta_i^{\mu, a}\xi_i^1}=2\ton*{p-\frac{1}{2}}\delta_{\mu, 1}.
			\label{eq:correlation_many_noise_archetype}
		\end{equation}
		It then follows
		\begin{equation}
			\mean*{h_i\xi_i^1}=\frac{1}{N}\sum_{j\neq i}\sum_{a=1}^M\mean*{\eta_i^{1, a}\xi_i^1}\mean*{\eta_j^{1, a}\xi_j^1}=4M\ton*{p-\frac{1}{2}}^2.
			\label{eq:mean_field_many_archetypes}
		\end{equation}
		This result coincides with that derived in Subsec.~\ref{subsec:stability_archetype}.

		As previously done we have to consider also the variance of $h_i\xi_i^1$. We have
		\[
			\ton*{h_i\xi_i^1}^2=h_i^2=\frac{1}{N^2}\sum_{j,k\neq i}^N\sum_{\mu, \rho=1}^K\sum_{a,b=1}^M\eta_j^{\mu, a}\eta_i^{\mu, a}\xi_j^1\eta_k^{\rho, b}\eta_i^{\rho, b}\xi_k^1
		\]
		In order to compute the expectation over the noise we have to get rid of the self interaction terms, splitting the sums we obtain $8$ terms
		\begin{alignLetter}
			(h_i\xi_i^1)^2=&\nonumber\\
			\frac{1}{N^2}\Bigg[&NKM+\\
			+&\sum_{j\neq i}^N\sum_{k\neq j, i}^N\sum_{\mu=1}^K\sum_{a=1}^M\xi_j^1\xi_k^1\eta_j^{\mu, a}\eta_k^{\mu, a}+\\
			+&\sum_{j\neq i}^N\sum_{\mu=1}^K\sum_{\rho\neq\mu}^K\sum_{a=1}^M\eta_j^{\mu, a}\eta_i^{\mu, a}\eta_j^{\rho, a}\eta_i^{\rho, a}+\\
			+&\sum_{j\neq i}^N\sum_{k\neq j, i}^N\sum_{\mu=1}^K\sum_{\rho\neq\mu}^K\sum_{a=1}^M\xi_j^1\xi_k^1\eta_j^{\mu, a}\eta_i^{\mu, a}\eta_k^{\rho, a}\eta_i^{\rho, a}+\\
			+&\sum_{j\neq i}^N\sum_{\mu=1}^K\sum_{a=1}^M\sum_{b\neq a}^M\eta_j^{\mu a}\eta_i^{\mu, a}\eta_j^{\mu, b}\eta_i^{\mu b}+\\
			+&\sum_{j\neq i}^N\sum_{k\neq j, i}^N\sum_{\mu=1}^K\sum_{a=1}^M\sum_{b\neq a}^M\xi_j^1\xi_k^1\eta_j^{\mu, a}\eta_i^{\mu, a}\eta_k^{\mu, b}\eta_i^{\mu, b}+\\
			+&\sum_{j\neq i}^N\sum_{\mu=1}^K\sum_{\rho\neq\mu}^K\sum_{a=1}^M\sum_{b\neq a}^M\eta_j^{\mu, a}\eta_i^{\mu, a}\eta_j^{\rho, b}\eta_i^{\rho, b}+\\
			+&\sum_{j\neq i}^N\sum_{k\neq j, i}^N\sum_{\mu=1}^K\sum_{\rho\neq\mu}^K\sum_{a=1}^M\sum_{b\neq a}^M\xi_j^1\xi_k^1\eta_j^{\mu, a}\eta_i^{\mu, a}\eta_k^{\rho, b}\eta_i^{\rho, b}\Bigg].
		\end{alignLetter}
		After the expectation is taken, only four terms survive. Indeed, generalizing Eq.~\eqref{eq:correlation_noise_noise} to the present case we can write
		\begin{equation}
			\mean*{\eta^{\mu, a}_i\eta^{\rho, b}_i}=4\ton*{p-\frac{1}{2}}^2\delta_{\mu, \rho}
			\label{eq:correlation_many_noise_noise}
		\end{equation}
		and thus the average of the terms $(C)$, $(D)$, $(G)$ and $(H)$ are null, because of the constraint $\mu\neq\rho$. In conclusion, exploiting Eqs.~\eqref{eq:correlation_many_noise_archetype} and \eqref{eq:correlation_many_noise_noise}, we arrive at
		\begin{equation}\nonumber
			\mean*{(h_i\xi_i^1)^2}=\frac{KM}{N}+4M\ton*{p-\frac{1}{2}}^2+16\frac{KM^2}{N}\ton*{p-\frac{1}{2}}^4+16M^2\ton*{p-\frac{1}{2}}^4
		\end{equation}
		and so, recalling Eq.~\eqref{eq:mean_field_many_archetypes}, we obtain for the variance
		\begin{equation}
		\label{eq:variance_field_many_archetype}
			\var\qua*{h_i\xi_i^1}=\frac{KM}{N}+4M\ton*{p-\frac{1}{2}}^2+16\frac{KM^2}{N}\ton*{p-\frac{1}{2}}^4.
		\end{equation}

		The final step consists in comparing the square root of the variance, Eq.~\eqref{eq:variance_field_many_archetype}, and the mean value Eq.~\eqref{eq:mean_field_many_archetypes}; the archetypes are stable provided that the former is smaller than the latter.
		\[
			\mean*{h_i\xi_i^1}>\sqrt{\var\qua*{h_i\xi_i^1}},
		\]
		and this yields to
		\begin{equation}\label{eq:stability_archetype_K}
			M >\frac{\frac{K}{N}+4\ton*{p-\frac{1}{2}}^2}{16\ton*{p-\frac{1}{2}}^4\ton*{1-\frac{K}{N}}} = M_{a,K}.
		\end{equation}
		As long as $K$ is not too large, there exists a finite threshold $M_{a,K}$ which ensures the stability of the archetype for a relatively large number of examples, on the other hand, as one would expect, if $K\to N$, the threshold diverges and the archetypes never get dynamically stable, independently of $M$.
	Moreover the expression in Eq.~\eqref{eq:stability_archetype_K} generalizes Eq.~\eqref{eq:stability_archetype} derived for a single archetype: now, as the term $K/N$ is not vanishing in the thermodynamic limit, when $p$ is close to $1/2$, the scaling $M \propto 1/(2p-1)^2$ is not enough and we need $M \propto 1/(2p-1)^4$ in order to ensure a reliable storing of the archetypes. Remarkably, these scalings for the low ($\alpha=0$) and the high ($\alpha>0$) load are accordingly recovered in the statistical mechanical analysis (see Proposition \ref{MSscaling} in Sec.~\ref{sec:SM}).

		These results are corroborated in Fig.~\ref{fig:esempinumericiK10} (left panel), where one can see that the instability region (bright colors) around $p=1/2$ is wider than the one in the analogous panel in Fig.\ref{fig:esempinumericiK1}.

		\begin{figure}[t!]
	\centering
    \includegraphics[width=0.45\textwidth]{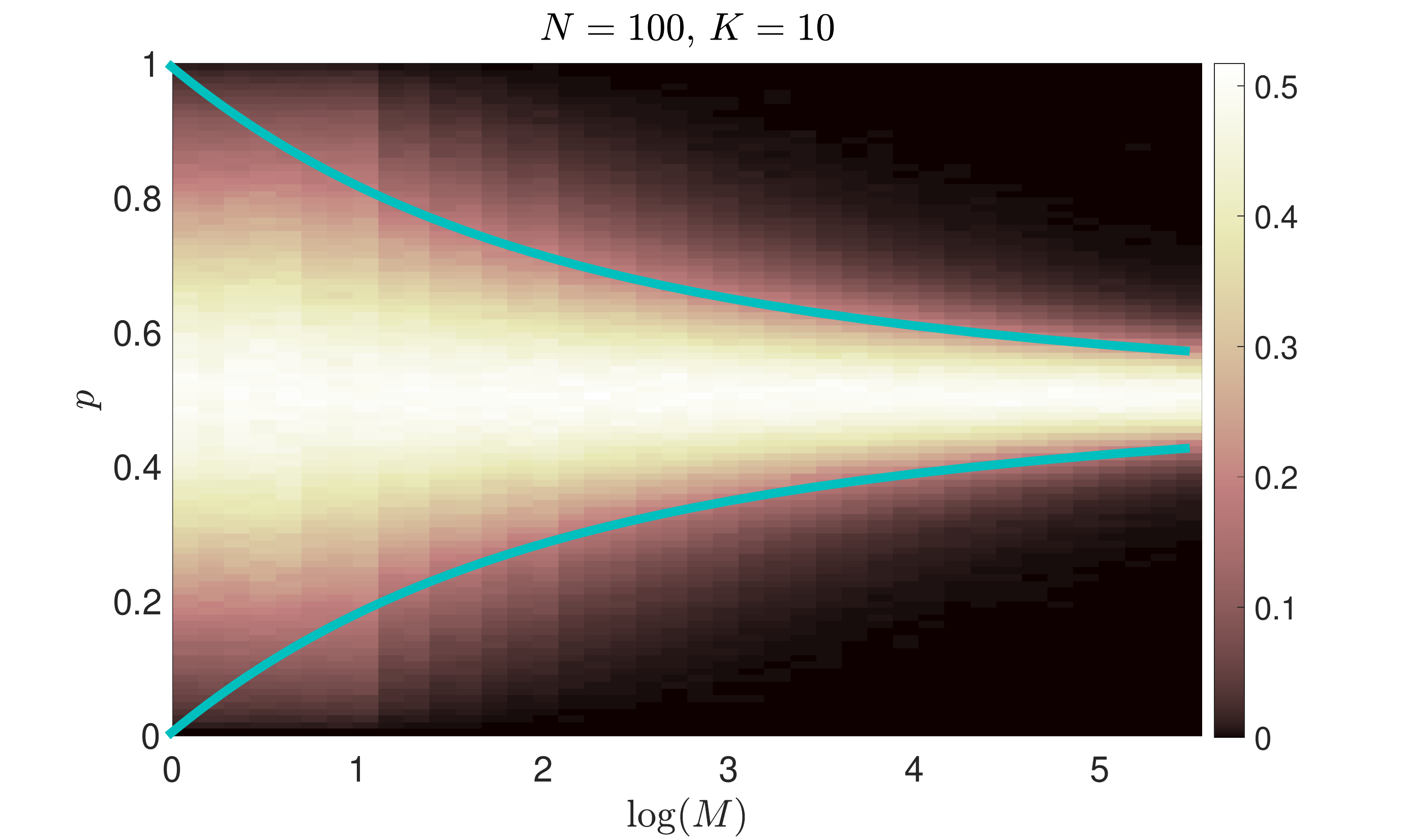}
    \includegraphics[width=0.45\textwidth]{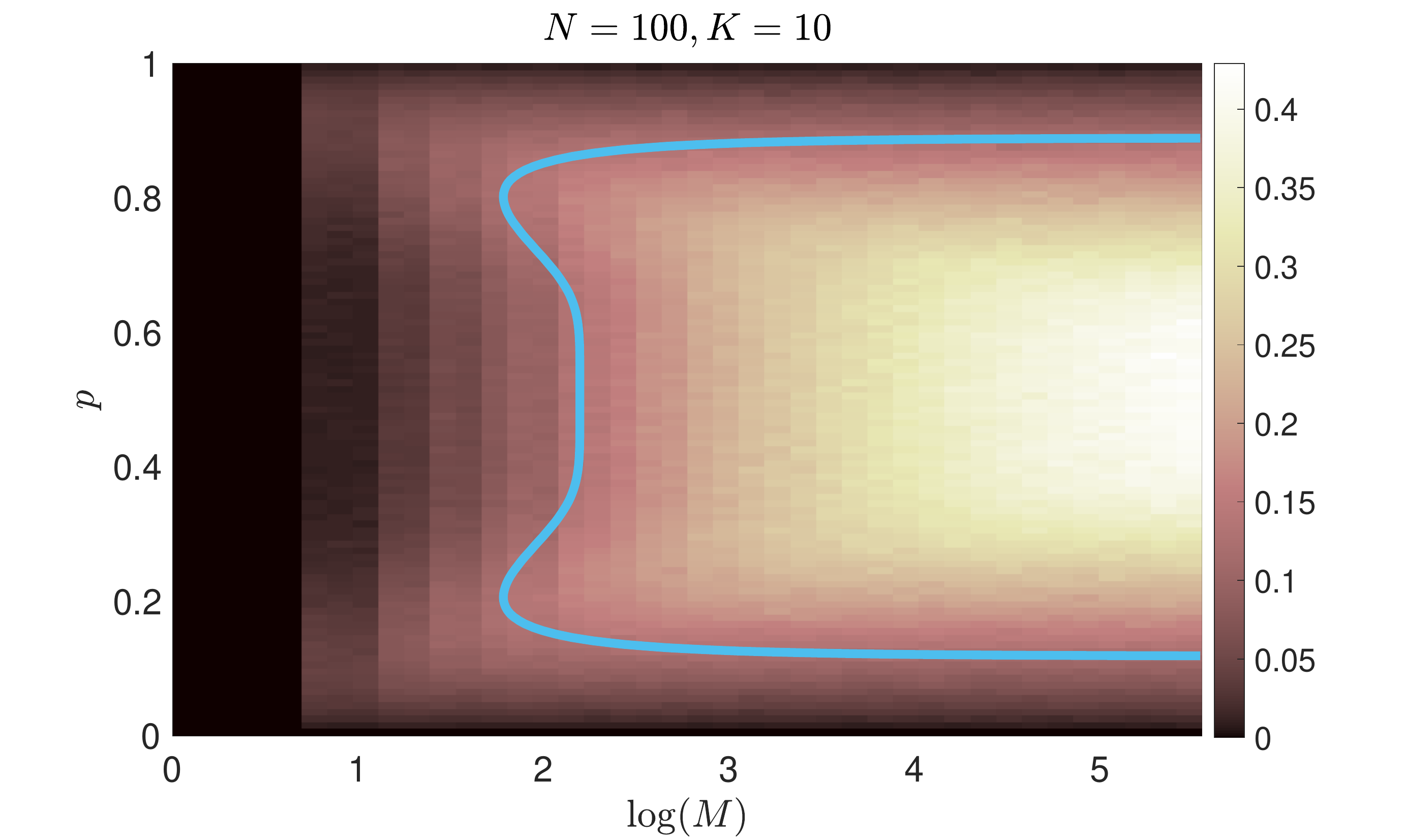}
	\caption{
	Left: Archetype stability versus $p$ and $M$.
The colormap shows the fraction of neurons that occur to be unstable when set in the configuration $\boldsymbol \sigma = \boldsymbol \xi^1$, as a function of $p$ and of $M$ (logarithmic axis); here $\boldsymbol \xi^1$ is one of the $K=10$ archetypes considered and, for each, $M$ examples are drawn randomly, while the number $N$ of neurons is set equal to 100. The data depicted have been obtained by averaging over a sample of $100$ different realizations of $\{\boldsymbol \xi^{\mu} \}_{\mu=1,...,K}$ and of related examples. The solid lines represent the roots of equation \eqref{eq:stability_archetype_K} and correctly demarcate the region of instability corresponding to values of $p$ close to $1/2$.
Right: Example stability versus $p$ and $M$.
The colormap shows the fraction of neurons that occur to be unstable when set in the configuration $\boldsymbol \sigma = \boldsymbol \eta^{1,1}$, as a function of $p$ and of $M$ (logarithmic axis); here $\boldsymbol \eta^{1,1}$ is one of the $M$ examples related to one (i.e., $\boldsymbol \xi^{1}$, without loss of generality) of the $K$ archetypes considered and data depicted have been obtained by averaging over a sample of $100$ different realizations of $\{\boldsymbol  \xi^{\mu} \}_{\mu=1,...,K}$ and of the related examples, while the number of neurons is set equal to 100. The solid line represents the root of equation (\ref{eq:cons_K}) and correctly demarcates the region of instability corresponding to values of $p$ close to $1/2$.}\label{fig:esempinumericiK10}
\end{figure}

	\subsubsection{Stability of the noisy examples}
		The product between the local field $h_i$ and the corresponding component of one of the noisy examples, say for instance $\eta_i^{1,1}$, is
		\begin{equation}
			h_i\eta_i^{1,1}=\frac{1}{N}\sum_{j\neq i}^N\sum_{\mu=1}^K\sum_{a=1}^M\eta_i^{\mu, a}\eta_j^{\mu, a}\eta_i^{1,1}\eta_j^{1,1}=\frac{1}{N}\sum_{j\neq i}^N\sum_{\mu=1}^K\sum_{a=1}^M\eta_i^{\mu, a}\eta_j^{\mu, a}\eta_i^{1,1}\eta_j^{1,1}\qua*{(1-\delta_{a,1})+\delta_{a,1}}.
		\end{equation}
		Taking the expectation over the noise and exploiting Eq.~\eqref{eq:correlation_many_noise_noise} we obtain
		\begin{equation}
			\mean*{h_i\eta_i^{1,1}}=1+16M\ton*{p-\frac{1}{2}}^4,
			\label{eq:mean_field_many_noise}
		\end{equation}
		that is equivalent to the expression in Eq.~\eqref{eq:mean_field_noise} derived for $K=1$. Moving to the variance we have to compute the expectation of $\ton*{h_i\eta_i^{1,1}}^2$: it holds
		\begin{equation}
		\notag
			\ton*{h_i\eta_i^{1,1}}^2 =h_i^2=\frac{1}{N^2}\qua*{\sum_{j\neq i}^N\sum_{\mu=1}^K\sum_{a=1}^M\eta_i^{\mu, a}\eta_j^{\mu, a}\eta_j^{1,1}}^2
			 =\frac{1}{N^2}\qua*{\sum_{j\neq i}^N\sum_{\mu=1}^K\sum_{a=1}^M\eta_i^{\mu, a}\eta_j^{\mu, a}\eta_j^{1,1}\qua*{(1-\delta_{a,1})+\delta_{a,1}}}^2.
		\end{equation}
		This yields
		\begin{alignLetter}
			\ton*{h_i\eta_i^{1,1}}^2&=\nonumber\\
			\frac{1}{N^2}\Bigg[&\sum_{j,k\neq i}^N\sum_{\mu,\nu}^K\eta_j^{\mu, 1}\eta_i^{\mu, 1}\eta_j^{1,1}\eta_k^{\nu, 1}\eta_i^{\nu, 1}\eta_k^{1,1}+\\
			&2\sum_{a\neq 1}^M\sum_{j,k\neq i}^N\sum_{\mu,\nu}^K\eta_j^{\mu, a}\eta_i^{\mu, a}\eta_j^{1,1}\eta_k^{\nu, 1}\eta_i^{\nu, 1}\eta_k^{1,1}+\\
			&\sum_{a,b\neq 1}^M\sum_{j,k\neq i}^N\sum_{\mu,\nu}^K\eta_j^{\mu, a}\eta_i^{\mu, a}\eta_j^{1,1}\eta_k^{\nu, b}\eta_i^{\nu, b}\eta_k^{1,1}\Bigg]=\\
			\frac{1}{N^2}&\qua*{A+B+C} \nonumber
		\end{alignLetter}
		Taking the expectation and proceeding as done before we get
			\begin{align*}
				\mean*{A}=&N(N+K-1)\\
				\mean*{B}=&32\ton*{p-\frac{1}{2}}^4MN(N+K-1)\\
				\mean*{C}=&MNK+16\ton*{p-\frac{1}{2}}^4MN^2+16\ton*{p-\frac{1}{2}}^4M^2KN+64\ton*{p-\frac{1}{2}}^6M^2N^2.
			\end{align*}
			Combining these expressions with Eq.~\eqref{eq:mean_field_many_noise} we can compute the variance $\var{\qua*{h_i\eta_i^{1,1}}}$ as
			\begin{align}
				\var{\qua*{h_i\eta_i^{1,1}}}=&\frac{K-1}{N}+32\ton*{p-\frac{1}{2}}^4M\frac{K-1}{N}+\nonumber\\
				&+\frac{MK}{N}+16\ton*{p-\frac{1}{2}}^4M+16\ton*{p-\frac{1}{2}}^4\frac{M^2K}{N}+\nonumber\\
				&+64\ton*{p-\frac{1}{2}}^6M^2-256\ton*{p-\frac{1}{2}}^8M^2
			\end{align}
			and then write down the condition for the dynamical stability
			\begin{equation}\label{eq:cons_K}
			\mean*{h_i\eta_i^{1,1}}>\sqrt{\var{\qua*{h_i\eta_i^{1,1}}}}.
			\end{equation}
			This last inequality is given by
			\[
				aM^2+bM+c>0,
			\]
			where
			\begin{equation}
				\begin{cases}
					a=512\ton*{p-\frac{1}{2}}^8-64\ton*{p-\frac{1}{2}}^6-16\ton*{p-\frac{1}{2}}^4\frac{K}{N}\\
					b=16\ton*{p-\frac{1}{2}}^4-32\ton*{p-\frac{1}{2}}^4\frac{K-1}{N}-\frac{K}{N}\\
					c=1-\frac{K-1}{N}
				\end{cases}
				\label{eq:coefficients_many}
			\end{equation}
			Note that these expressions recover those derived for a single archetype (Eq.~\eqref{eq:coefficients}) when $K=1$. These results are corroborated in Fig.~\ref{fig:esempinumericiK10} (right panel). In particular, we notice that, for relatively small values of $M$, stability is always ensured and therefore the noisy patterns are effectively stored in the network. Conversely, as $M$ gets large noisy examples progressively loose stability while the archetype patterns progressively get more stable.

				\begin{figure*}[tb]
	            \includegraphics[width=1.0\textwidth]{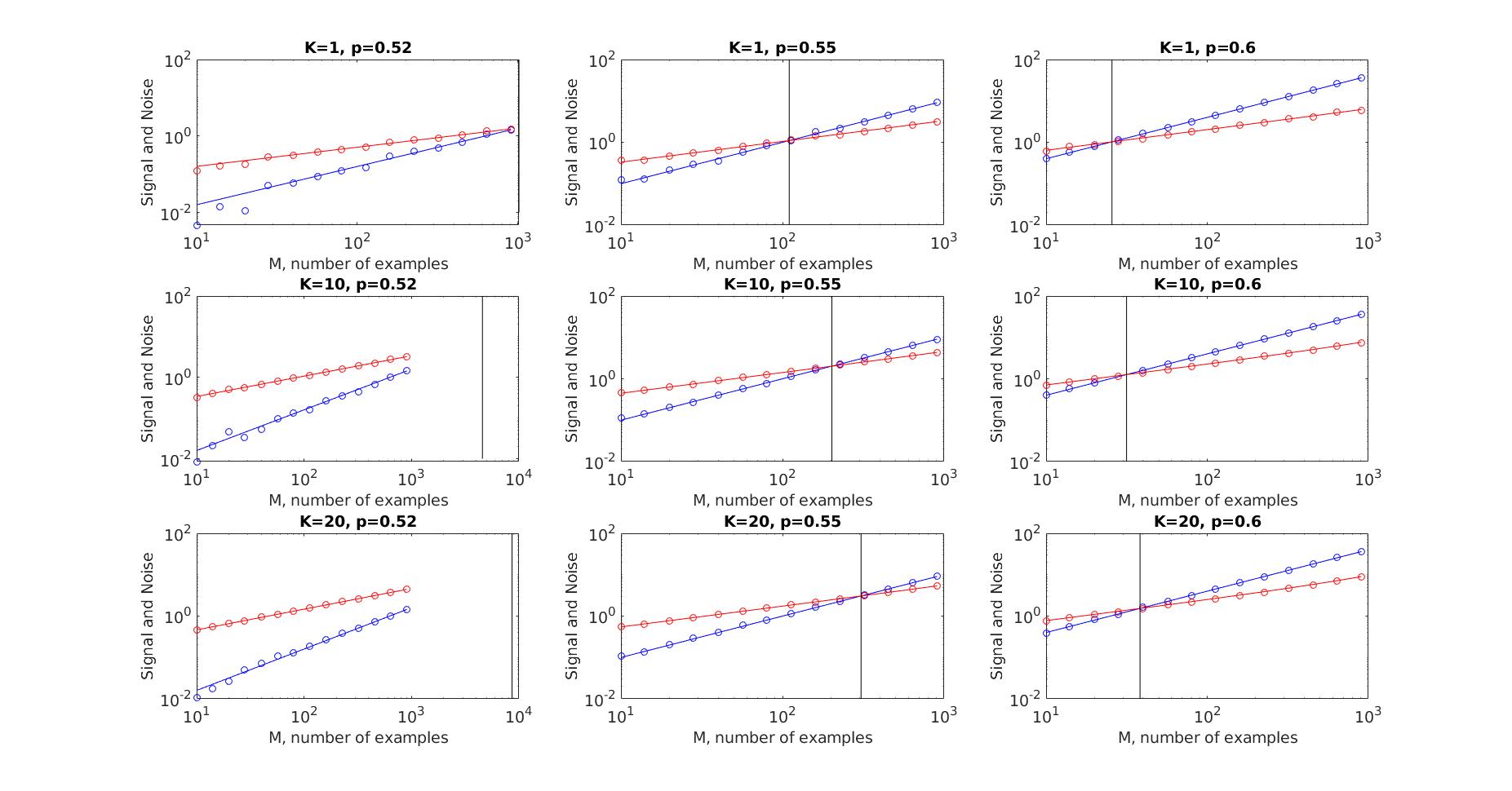}
	        	        \caption{Comparison between the signal $\langle h_i \xi_i^{\mu}\rangle$ acting on the $i$-th spin evaluated when the neural configuration corresponds to the archetype (dark color) and its standard deviation $\sqrt{\textrm{Var} (h_i \xi_i^{\mu})}$ (bright color), as the size $M$ of the dataset (per pattern) is varied. The panels are arranged in such a way that different columns correspond to different choices for the parameter $p$ and different rows correspond to different choices for the parameter $K$. Theoretical results (solid lines) representing Eqs.~\eqref{eq:variance_field_many_archetype} and \eqref{eq:mean_field_many_archetypes} are nicely overlapped by numerical results (bullets) obtained via simulations. Each curve has been obtained averaging over the $K\cdot N$ products $h_i \xi_i^{\mu}$. Notice that, in any case, $\langle h_i \xi_i^{\mu}\rangle$ and $\sqrt{\textrm{Var} (h_i \xi_i^{\mu})}$ intersect and, eventually, $\langle h_i \xi_i^{\mu}\rangle$ turns out to be smaller and therefore the configuration corresponding to the archetype is energetically more convenient.}
        	\label{fig:J_single}
		\end{figure*}

	A summary of the overall signal-to-noise analysis, along with numerical checks, is given in Fig.~\ref{fig:J_single} and in Fig.~\ref{fig:checkS2N}.
\begin{figure}[tb]
\centering
\includegraphics[width=0.35\textwidth]{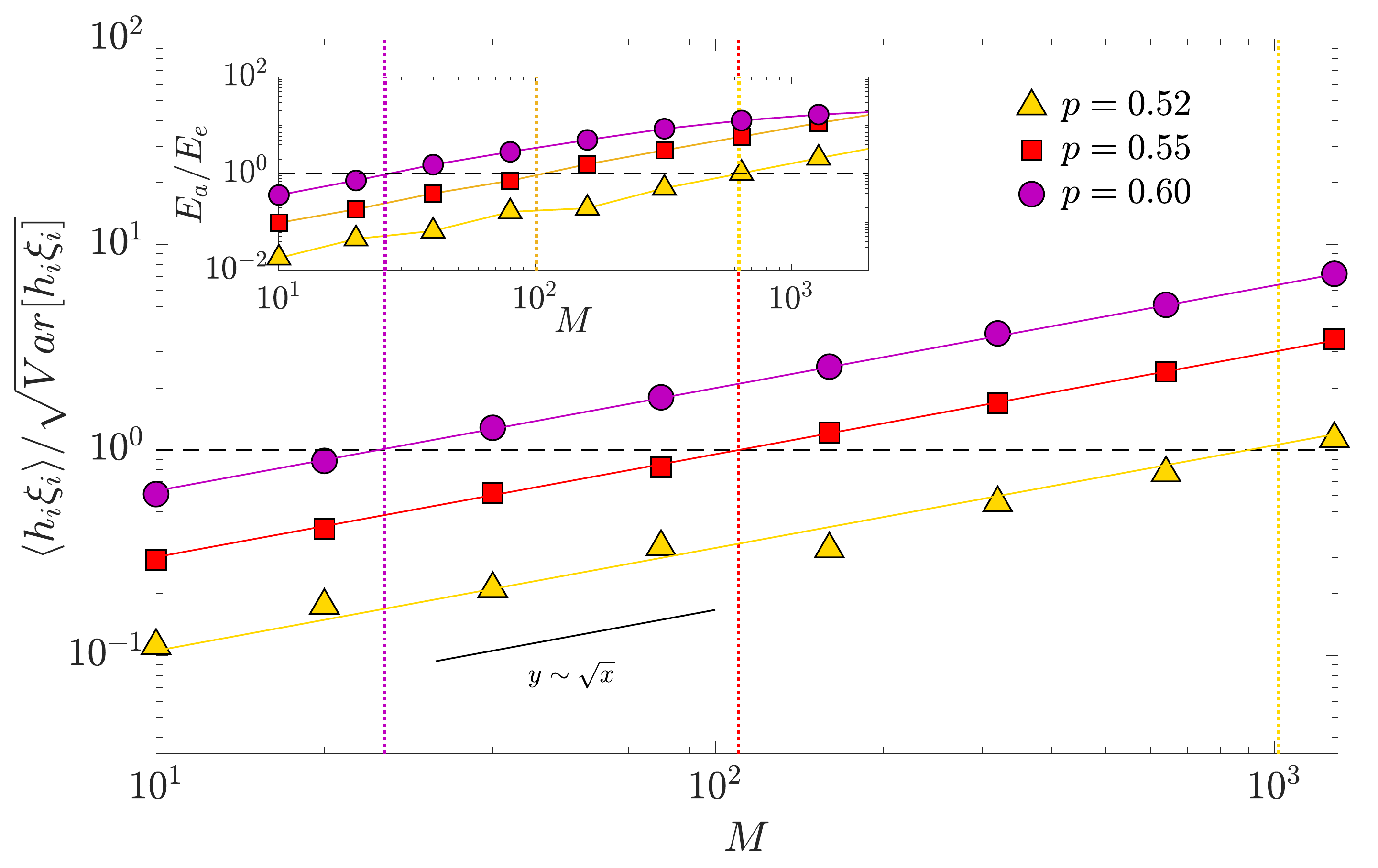}
\caption{\label{fig:checkS2N} Main figure: ratio between the signal, evaluated for a configuration retrieving the archetype, and its variance, versus $M$ and for different values of $p$, as explained by the legend; fitting curves highlight a square-root scaling. As $M$ grows, the signal prevails over the noise and the crossover (highlighted by the vertical dotted lines) occurs at a value $M_a$ which decreases with $p \in [0.5, 1]$. Inset: ratio between the energies $E_a$ and $E_e$ evaluated for configurations corresponding, respectively, to an archetype and to an examples; lines are guides to the eye. As $M$ gets larger, the archetype gets energetically more favorable than any example. For both figures, data depicted are obtained for a system of size $N=1000$ and averaged over all the $M$ examples.}
\end{figure}

	Finally, we studied the averages energies $E_a$ and $E_e$ corresponding to, respectively, an archetype configuration and an example configuration also for the high-load regime finding, again, that for large enough values of $M$ the former is energetically more favorable. Results are collected in Fig.~\ref{fig:J_single_Energie}.

\section{Statistical mechanics approach} \label{sec:SM}
To better inspect the crossover between archetype and example stabilities evidenced by signal-to-noise analysis and, possibly, to frame such a phenomenon into a classical phase transition setting where fast noise is also accounted, we must rely on statistical mechanics of spin glasses. In particular, we will use a reformulation \cite{AABF-NN2020,Agliari-Barattolo} of the celebrated Guerra's interpolation technique \cite{Guerra}.
\subsection{General setting and main definitions}
Let us consider a network made of $N$ Ising neurons $\sigma_i = \pm 1$, with $i \in (1,...,N)$, $K= \alpha N$ archetype patterns $\boldsymbol{\xi}_i^{\mu} \in  \{ -1, +1\}$ with $\mu \in (1,...,K)$, and $M$ noisy examples per archetype $\boldsymbol \eta^{\mu,a}$ with $\mu \in (1,...K)$ and $a \in (1,...,M)$. The latter constitute a stochastic, perturbed version of the archetypes, that are still binary and the arbitrary $i$-th component can be written as $\eta_i^{\mu,a} = \xi_i^{\mu}\chi_i^{\mu,a}$ for $i=1,...,N$, where $\chi_i^{\mu,a}$ is a Bernoullian random variable taking value $-1$ or $+1$.
We will assume that, for each component, $\mathcal{P}(\xi = \pm 1) = 1/2$ and $\mathcal{P}(\chi = 1) = 1 - \mathcal{P}(\chi = -1) = p$ namely, the closer $p$ is to $1/2$ and the higher the noise in the example (viceversa for $p \to 0$ and $p \to 1$, as the network stores equally a pattern and its flipped version, due to the spin-flip symmetry $\sigma_i \to -\sigma_i$).
The network is fed by the $M \times K$ noisy patterns and has no direct access to the $K$ archetype patterns.
\begin{definition} \label{def:H}
The Hamiltonian of the model is defined as
\begin{equation}
  \label{eq:Hamiltonian2}
  \mathcal{H}_{N,M}(\boldsymbol \sigma | \boldsymbol \chi, \boldsymbol \xi) = -\frac{1}{2N} \sum_{a=1}^M \sum_{\mu=1}^K \big(\sum_{i=1}^N \xi_i^\mu \chi_i^{\mu,a}\sigma_i \big) ^2.
\end{equation}
The partition function coupled to the Hamiltonian \eqref{eq:Hamiltonian2} is defined as
  \begin{equation}
    \label{def:Z}
    Z_{N,M}(\alpha, \beta | \boldsymbol \chi, \boldsymbol \xi)=\sum_{\sigma} \exp(-\beta  \mathcal{H}_{N,M}(\boldsymbol \sigma| \boldsymbol \chi,\boldsymbol\xi)) = \sum_{\sigma} \exp \Big[\frac{\beta}{2N} \sum_{a=1}^M \sum_{\mu=1}^K \big(\sum_{i=1}^N \xi_i^\mu \chi_i^{\mu,a}\sigma_i \big) ^2 \Big].
\end{equation}
At finite network volume $N$ and sample size $M$, the quenched pressure (i.e., the free energy times $- \beta$ \cite{Guerra}) of this model reads as
\begin{equation}\label{FreeE}
A_{N,M}(\alpha,\beta)=\frac{1}{N}\mathbb{E}\ln Z_{N,M}(\alpha, \beta| \boldsymbol \chi, \boldsymbol \xi),
\end{equation}
where $\mathbb{E}:=\mathbb{E}_\chi\mathbb{E}_{\xi}$, being
\begin{eqnarray}
\label{eq:quenchedaverages}
\mathbb{E}_{\xi}\, G(\xi) &=& \int_{\bbR} \Big( \prod_{i=1}^{N}\prod_{\mu=1}^{K}\frac{d \xi_{i}^{\mu}}{2} \big[\delta(\xi_{i}^{\mu} + 1) + \delta(\xi_{i}^{\mu} - 1)\big]\Big) G(\xi) \\
\mathbb{E}_{\chi} G(\chi) &=&\left( \prod_{\mu=1}^{K} \prod_{a=1}^{M}\prod_{i=1}^{N} \mathbb{E}_{\chi_{i}^{\mu,a}}\right) G(\chi)\\
\mathbb{E}_{\chi_{i}^{\mu,a}} f(\chi) &=& \begin{cases}\int_{-\infty}^{+\infty}d\chi_{i}^{\mu,a}\left( p \delta(\chi_{i}^{\mu,a}-1) + (1-p) \delta(\chi_{i}^{\mu,a}+1)\right)f(\chi)& \mu = 1\\
                                    \int_{-\infty}^{+\infty}\frac{d\chi_{i}^{\mu,a}}{\sqrt{2\pi}} \exp(-\frac{(\chi_{i}^{\mu,a})^{2}}{2}) \,f(\chi)& \mu =2,\cdots,K
                        \end{cases}
\end{eqnarray}
Finally, for a generic observable $O(\boldsymbol \sigma | \boldsymbol {\xi},\boldsymbol {\chi})$, we define the brackets as $\langle O \rangle := \mathbb{E}\Omega \left(O(\boldsymbol \sigma|\boldsymbol{\xi},\boldsymbol{\chi})\right)$, being $\Omega$ the (replicated) Boltzmann average.
\end{definition}
\bigskip
\begin{remark}
In equation \eqref{eq:quenchedaverages} we approximated the noise terms $\chi_{i}^{\mu,a}$ for $\mu=2,\cdots,K$ as standard Gaussian variables; in the thermodynamic limit this assumption fits the worst case ($p=1/2$) and, in general, it plays as a bound: if the network is able to infer an archetype out of this noisiest example sample, it will certainly works also in less challenging ($p>1/2$) cases. 
\end{remark}
\bigskip
\begin{definition}
In order to quantify both the retrieval of the archetype and the retrieval of the examples, we define the related Mattis magnetizations as, respectively,
\begin{eqnarray}
m_{\mu} &=& \frac{1}{N}\sum_{i=1}^{N} {\xi}_i^{\mu} \sigma_i,\\
n_{\mu,a} &=&\frac{1}{N}\sum_{i=1}^{N} {\xi}_i^{\mu} \chi_i^{\mu,a} \sigma_i.
\end{eqnarray}
\end{definition}
\begin{proposition}
The partition function defined in ~\eqref{def:Z} can be recast as
  \begin{eqnarray}
    \label{eq:ZJ}
    \notag
    Z_{N,M}(\beta,\alpha | \boldsymbol \chi, \boldsymbol \xi) &=& \lim_{J \to 0} Z_{N,M}(\beta,\alpha, J | \boldsymbol \chi, \boldsymbol \xi)=  \lim_{J \to 0} \sum_{\sigma} \int \prod_{\mu=2,a=1}^{K,M}\frac{dz_{\mu,a}}{\sqrt{2\pi}}\exp \Big[-\frac{1}{2}\sum_{\mu=2,a=1}^{K,M}z_{\mu,a}^2 \\
        \notag
    &+& J \sum_{i=1}^N {\xi}_i^1\sigma_i
    + \sqrt{\frac{\beta}{N}} \sum_{a=1}^M \sum_{\mu=2}^K \sum_{i=1}^N  \xi_i^\mu \chi_i^{\mu,a}\sigma_i z_{\mu,a} + \frac{\beta}{2N} \sum_{a=1}^M \big(\sum_{i=1}^N \xi_i^1 \chi_i^{1,a}\sigma_i \big) ^2 \Big],\nonumber
\end{eqnarray}
which corresponds to the partition function of a restricted Boltzmann machine with $N$ visible binary neurons $\sigma_i \in \{ -1, +1\}$, $M \times K$ hidden Gaussian neurons $z_{\mu,a} \sim \mathcal{N}(0,1)$, and weights $\chi_i^{\mu,a} {\xi}_i^{\mu}$, for any $i =1,...,N$, $\mu=1,...,K$, and $a=1,...,M$.
\end{proposition}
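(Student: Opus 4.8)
The plan is to recognize the right-hand side as the result of a selective Hubbard--Stratonovich (Gaussian linearization) transformation applied to the partition function in \eqref{def:Z}, supplemented by the insertion of an inert source term. First I would single out, in the exponent of \eqref{def:Z}, the contribution of the archetype $\boldsymbol\xi^1$ (the pattern whose retrieval we focus on) from those of the remaining patterns $\mu=2,\ldots,K$, writing
\begin{equation}
\frac{\beta}{2N}\sum_{a=1}^M\sum_{\mu=1}^K\Big(\sum_{i=1}^N \xi_i^\mu\chi_i^{\mu,a}\sigma_i\Big)^2 = \frac{\beta}{2N}\sum_{a=1}^M\Big(\sum_{i=1}^N \xi_i^1\chi_i^{1,a}\sigma_i\Big)^2 + \frac{\beta}{2N}\sum_{a=1}^M\sum_{\mu=2}^K\Big(\sum_{i=1}^N \xi_i^\mu\chi_i^{\mu,a}\sigma_i\Big)^2.
\end{equation}
The first group is left untouched, since it encodes the signal carried by $\boldsymbol\xi^1$; the second group is the one to be linearized.

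Next, to each of the $(K-1)\times M$ squares in the second group I would apply the elementary Gaussian identity
\begin{equation}
\exp\Big(\tfrac{1}{2}x^2\Big) = \frac{1}{\sqrt{2\pi}}\int_{-\infty}^{+\infty} dz\, \exp\Big(-\frac{z^2}{2} + x z\Big),
\end{equation}
with the choice $x=\sqrt{\beta/N}\,\sum_{i=1}^N \xi_i^\mu\chi_i^{\mu,a}\sigma_i$, so that $\tfrac12 x^2$ reproduces exactly the term $\tfrac{\beta}{2N}\big(\sum_i \xi_i^\mu\chi_i^{\mu,a}\sigma_i\big)^2$. This introduces one auxiliary Gaussian variable $z_{\mu,a}$ per square, generating the hidden-layer measure $\int\prod_{\mu=2,a=1}^{K,M}\frac{dz_{\mu,a}}{\sqrt{2\pi}}$ together with the bilinear coupling $\sqrt{\beta/N}\sum_{a}\sum_{\mu\ge 2}\sum_i \xi_i^\mu\chi_i^{\mu,a}\sigma_i z_{\mu,a}$, precisely as in the claimed expression. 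Because each spin configuration is summed over a finite set $\{-1,+1\}^N$ and every Gaussian integral converges absolutely, the interchange of $\sum_\sigma$ and $\int dz$ is justified by Fubini's theorem, so the identity is exact at finite $N$ and $M$.

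Finally, I would account for the source term $J\sum_{i=1}^N \xi_i^1\sigma_i = JN m_1$, coupled to the archetype magnetization $m_1$: inserting the factor $\exp\!\big(J\sum_i \xi_i^1\sigma_i\big)$ merely \emph{defines} the generalized partition function $Z_{N,M}(\beta,\alpha,J|\boldsymbol\chi,\boldsymbol\xi)$, and since this factor equals unity at $J=0$, the limit $\lim_{J\to0}$ returns the original $Z_{N,M}(\beta,\alpha|\boldsymbol\chi,\boldsymbol\xi)$, establishing the first equality. Reading off the resulting expression, the visible spins $\sigma_i\in\{-1,+1\}$, the Gaussian hidden neurons $z_{\mu,a}\sim\mathcal N(0,1)$, and the weights $\xi_i^\mu\chi_i^{\mu,a}$ identify it as an RBM partition function. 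I do not expect a genuine obstacle: the result is an exact, finite-dimensional Gaussian linearization, and the only delicate points are bookkeeping ones, namely keeping the $\mu=1$ square as a standing signal term rather than linearizing it, and tracking $J$ as an inert generating field that becomes useful only later, upon differentiating in $J$ to extract $\bar m$.
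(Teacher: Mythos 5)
Your proposal is correct and follows essentially the same route as the paper's own proof: singling out $\mu=1$ as the untouched ``marked'' pattern, linearizing each of the $(K-1)\times M$ squares for $\mu\geq 2$ via the Gaussian identity $\exp(X^{2}/2)=\int \frac{dz}{\sqrt{2\pi}}\exp(-z^{2}/2+Xz)$ with $X=\sqrt{\beta/N}\sum_{i}\xi_i^\mu\chi_i^{\mu,a}\sigma_i$, and treating $J$ as an inert generating field that trivializes as $J\to 0$. Your added remark on Fubini-justified interchange of $\sum_\sigma$ and the Gaussian integrals is a harmless extra rigor point not spelled out in the paper but implicit in its finite-dimensional setting.
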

\bigskip
\begin{remark}
In the expression above we added the last term $J \sum_{i=1}^N\xi_i^1\sigma_i$ to generate the expectations of the Mattis magnetization $m_1$, by evaluating the derivative of the quenched pressure w.r.t. J at $J=0$. In fact, we need to quantify both the retrieval of the archetype and the retrieval of the examples, but, while the noisy examples {\em exist} and are supplied to the network (in fact, the Hamiltonian itself can be written in terms of the examples $\{\boldsymbol \eta^{\mu,a}\}$), the archetype is a network's abstraction, nor it exists by itself neither it is coded in the Hamiltonian, hence we need to use the functional generator trick. However, as we will see in Sec.~\ref{ssec:MM}, as far as $M \gg 1$, we can bypass this artifice and obtain the expectation value of $m$ by exploiting is direct proportionality with the expectation value of $n$, which, instead, is a natural order parameter for the model.
\end{remark}
\begin{proof}
We chose as ``marked'' (or ``condensate'') patterns \cite{Amit, CKS} those related to the archetype labelled as $\mu=1$ and, accordingly, we re-write eq. \eqref{def:Z} as
\begin{equation}
    Z_{N,M}(\beta,\alpha | \boldsymbol \chi, \boldsymbol \xi) =\sum_{\sigma} \exp \Big[\frac{\beta}{2N} \sum_{a=1}^M \big(\sum_{i=1}^N \xi_i^1 \chi_i^{1,a}\sigma_i \big) ^2 +  \frac{\beta}{2N} \sum_{a=1}^M \sum_{\mu=2}^K \big(\sum_{i=1}^N \xi_i^\mu \chi_i^{\mu,a}\sigma_i \big) ^2 \Big].
\end{equation}
Since we are interested in extracting the magnetization for both the noisy examples and the archetypes, we introduce a source field $J$ such that the partition function is generalized as
\begin{equation}
\notag
Z_{N,M}(\beta,\alpha, J | \boldsymbol \chi, \boldsymbol \xi)=\sum_{\sigma} \exp \Big[\frac{\beta}{2N} \sum_{a=1}^M \big(\sum_{i=1}^N \xi_i^1 \chi_i^{1,a}\sigma_i \big) ^2 +  \frac{\beta}{2N} \sum_{a=1}^M \sum_{\mu=2}^K \big(\sum_{i=1}^N \xi_i^\mu \chi_i^{\mu,a}\sigma_i \big) ^2 + J \sum_{i=1}^N\xi_i^1\sigma_i  \Big].
\end{equation}
Then, we apply the relation
\begin{equation}
  \label{eq:gaussianlinearization}
  \exp\left(\frac{X^{2}}{2}\right)  = \int_{-\infty}^{+\infty}\frac{dz}{\sqrt{2\pi}}\exp(-\frac{z^{2}}{2} + X z)
\end{equation}
to each squared term appearing in the argument of the exponential and this directly yields to Eq. \eqref{eq:ZJ}.
\end{proof}

\subsection{Guerra's interpolation for the quenched pressure}
\label{sec:interpolation}
The strategy that we follow to solve the model is based on Guerra's interpolation technique \cite{AABF-NN2020,Agliari-Barattolo} and ultimately consists in exploiting the mean-field nature of the model to properly compare the original model with an effective one-body model that shares the same statistical features of the original one in the thermodynamics limit.
\bigskip
\begin{definition} The Guerra interpolating functional for the quenched pressure related to the cost-function \ref{eq:Hamiltonian2} is defined as
  \label{def:interpolant}
\begin{eqnarray}
    \label{eq:aa}
    A_{N,M}(\alpha, \beta, J ; t) &=& \frac{1}{N} \mathbb{E}_\phi\mathbb{E}_\chi\mathbb{E}_{\xi} \ln\Big[\sum_{\sigma} \int \prod_{\mu=2,a=1}^{K,M}\frac{dz_{\mu,a}}{\sqrt{2\pi}}\exp \Big(-\frac{\psi(t)}{2}\sum_{\mu=2,a=1}^{K,M}z_{\mu,a}^2+ J \sum_{i=1}^N\xi_i^1\sigma_i\\
    &+& \Gamma(t)\sqrt{\frac{\beta}{N}} \sum_{a=1}^M \sum_{\mu=2}^K \sum_{i=1}^N \xi_i^\mu \chi_i^{\mu,a}\sigma_i z_{\mu,a} +\rho(t)\frac{\beta}{2N} \sum_{a=1}^M \big(\sum_{i=1}^N \xi_i^1 \chi_i^{1,a}\sigma_i \big) ^2 +N W_{N,M}(t)  \Big) \Big],\nonumber
\end{eqnarray}
where $\psi(t), \Gamma(t), \rho(t)$ are auxiliary fields to be set a posteriori, and $W_{N,M}(t):= W(\boldsymbol \sigma, \boldsymbol z, \boldsymbol \phi, \boldsymbol \xi, \boldsymbol \chi; t)$ is a source term whose specific expression will be set a posteriori too.
\end{definition}

\bigskip

In the following, to lighten the notation, we will set $A_J := A_{N,M}(\alpha, \beta, J ; t)$.
\newline
Note that the original model can be recovered by setting
\begin{eqnarray}
\label{eq:necessarycond}
  \psi(t=1)=\Gamma(t=1)=\rho(t=1)=1,\\
    W_{N,M}(t) = 0,
\end{eqnarray}
and, as standard, we approach  $\psi(t=1)=\Gamma(t=1)=\rho(t=1)=1$ by evaluating  the factorized case  $\psi(t=0), \Gamma(t=0), \rho(t=0)=0$ and then integrating back in $t$ from $0$ to $1$ by using the fundamental theorem of calculus.

To accomplish this plan, denoting by $\langle . \rangle_t$ the averages evaluated in this extended framework (and clearly $\langle \rangle_t \to \langle \rangle$ as $t \to 1$), let us start working out the streaming of $A_J$:
\begin{eqnarray}\nonumber
    \de{\Gamma} A_{J} &=& \frac{1}{N}\sqrt{\frac{\beta}{N}} \sum_{a=1}^M\sum_{\mu=2}^K\sum_{i=1}^N \qE{\phi}\qE{\chi}\qE{ \xi} \mv{ \xi_i^{\mu} \chi_i^{\mu,a} z_{\mu,a} \sigma_i}_t = \frac{\beta}{N^{2}}\Gamma_{t} \sum_{a=1}^{M}\sum_{\mu=2}^{K}\sum_{i=1}^{N}\qE{\phi}\qE{\chi}\qE{\xi} \big(\mv{z_{\mu,a}^{2}}_t -\mv{z_{\mu,a}\sigma_{i}}_t \big)\\ \nonumber
    \de{\rho} A_{J} &=& \frac{\beta}{2} \sum_{a=1}^M \qE{\phi}\qE{\chi}\qE{ \xi} \mv{\big(\frac{1}{N}\sum_{i=1}^N \xi_i^1 \chi_i^{1,a} \sigma_i\big)^2}_t\\ \nonumber
    \de{\psi} A_{J}  &=& -\frac{1}{2N} \sum_{a=1}^M\sum_{\mu=2}^K \qE{\phi}\qE{\chi}\qE{ \xi} \mv{z_{\mu,a}^2}_t
\end{eqnarray}
such that
\begin{equation}
\frac{dA_{J}}{dt}=\dde{\Gamma}A_{J}+\dde{\rho}A_{J}+\dde{\psi}A_{J}+\mv{\dot W}_{t}.
\end{equation}
We still have the freedom of choice for the source term $W_{N,M}(t)$: the idea is the classical one in Guerra's interpolation, as we are explaining hereafter. By taking advantage of the mean-field nature of the model, it should be possible to linearize the ``nasty'' quadratic interactions appearing in (\ref{eq:ZJ}) by properly balancing them with extra one-body terms (i.e., those introduced in \ref{eq:aa}) such that each contribution within the source term has to match the second order moments of the order parameters. In this way we can calculate and tune the effective one-body contributions -- that are easy to evaluate -- and, in the thermodynamic limit, under the replica symmetric assumption disregard fluctuations around those means.  To this task we choose $W_{N,M}(t)$ as:
\begin{equation}
W_{N,M}(t) = \frac{\lambda(t)}{N} \sum_{i=1}^N \phi_{i} \sigma_{i} + \frac{\mu(t)}{N}\sum_{\mu=2}^K \sum_{a=1}^M \phi_{\mu,a} z_{\mu,a}+ \frac{\tau(t)}{N}\sum_{a=1}^M \sum_{i=1}^N  \xi_{i}^{1}\chi_{i}^{1,a}\sigma_{i}
\end{equation}
With this choice for the source term a few more derivatives must be calculated,
\begin{eqnarray}
  \de{\lambda}A_{J} &=&  \frac{1}{N}\mathbb{E}_{\phi,\chi,\xi} \sum_{i=1}^N \phi_{i}\mv{\sigma_{i}}_t =\frac{\lambda(t)}{N}\mathbb{E}_{\phi,\chi,\xi} \sum_{i=1}^N \big(1-\mv{\sigma_{i}}_t^{2}\big)\\
  \de{\mu}A_{J} &=& \frac{1}{N}\mathbb{E}_{\phi,\chi,\xi}\sum_{\mu=2}^K \sum_{a=1}^M \phi_{\mu,a} \mv{z_{\mu,a}}_t = \frac{\mu(t)}{N} \sum_{\mu=2}^K \sum_{a=1}^M\mathbb{E}_{\phi,\chi,\xi}\big(\mv{z_{\mu,a}^{2}}_t-\mv{z_{\mu,a}}_t^{2}\big) \\
  \de{\tau}A_{J} &=& \frac{1}{N}\mathbb{E}_{\phi,\chi,\xi} \sum_{i=1}^N \sum_{a=1}^M\mv{ \xi_{i}^{1}\chi_{i}^{1,a}\sigma_{i}}_t
\end{eqnarray}
\begin{proposition}
By inspecting the moments generated by differentiating $A_J$ we naturally introduce a complete set of order parameters to characterize the system, namely,  the two replica overlaps $p_{lm}$ for the $z$ variables, the two replica overlaps $q_{lm}$ for the $\sigma$ variables (accounting for the slow noise in the system) and two sets of quantifiers of the retrieval, namely the standard Mattis magnetization of the archetype $m_{\mu}$ and a generalized Mattis magnetization for the noise example $n_{\mu,a}$:
\begin{eqnarray}
\label{eq:2}
 p_{lm} &=& \frac{1}{KM}\sum_{\mu=1}^K \sum_{a=1}^M z_{\mu,a}^{(l)}z_{\mu,a}^{(m)}  \\
  q_{lm} &=& \frac{1}{N}\sum_{i=1}^N  \sigma_{i}^{(l)}\sigma_{i}^{(m)}\\
  n_{\mu,a} &=& \frac{1}{N} \sum_{i=1}^N \xi_{i}^{\mu} \chi_{i}^{\mu,a} \sigma_{i}\\
   m_{\mu} &=& \frac{1}{N} \sum_{i=1}^N \xi_{i}^{\mu} \sigma_{i}.
\end{eqnarray}
\end{proposition}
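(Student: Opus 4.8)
The plan is to show that the full $t$-derivative of the interpolating pressure closes on the finite collection $\{p_{lm}, q_{lm}, n_{\mu,a}, m_{\mu}\}$, so that these are precisely the macroscopic quantities needed to describe the system and no others appear. Recalling
\[
\frac{dA_J}{dt}=\dot\Gamma\,\de{\Gamma}A_J+\dot\rho\,\de{\rho}A_J+\dot\psi\,\de{\psi}A_J+\langle \dot W\rangle_t,
\]
I would process the six partial derivatives already listed above and verify that, after Gaussian integration by parts, every surviving moment is one of the order parameters in \eqref{eq:2}.

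First I would treat the non-condensed ($\mu\ge 2$) sector. In $\de{\Gamma}A_J$ the single-$z$ moment $\langle \xi_i^{\mu}\chi_i^{\mu,a} z_{\mu,a}\sigma_i\rangle_t$ is handled by integrating by parts in $z_{\mu,a}$, which in the extended Boltzmann weight is Gaussian with variance $1/\psi(t)$; the Wick identity turns it into a diagonal piece proportional to $\langle z_{\mu,a}^2\rangle_t$ and a two-replica piece proportional to $\langle z_{\mu,a}^{(1)}z_{\mu,a}^{(2)}\sigma_i^{(1)}\sigma_i^{(2)}\rangle_t$. Performing the quenched average over $\xi,\chi$ and the spatial sum over $i$ contracts the site indices onto $q_{lm}=\tfrac{1}{N}\sum_i\sigma_i^{(l)}\sigma_i^{(m)}$, while the sum over $\mu,a$ contracts the hidden indices onto $p_{lm}=\tfrac{1}{KM}\sum_{\mu,a}z_{\mu,a}^{(l)}z_{\mu,a}^{(m)}$. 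The derivative $\de{\psi}A_J$ contributes only the diagonal $\langle z_{\mu,a}^2\rangle_t\to p_{11}$, so both the $\Gamma$- and $\psi$-streamings are polynomials in $p_{lm}$ and $q_{lm}$.

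Next I would collect the condensed ($\mu=1$) and source contributions. By definition $\de{\rho}A_J=\tfrac{\beta}{2}\sum_a\mathbb{E}\langle(\tfrac{1}{N}\sum_i\xi_i^1\chi_i^{1,a}\sigma_i)^2\rangle_t=\tfrac{\beta}{2}\sum_a\mathbb{E}\langle n_{1,a}^2\rangle_t$, already written through the example magnetization; the two $\phi$-field terms, integrated by parts against the standard Gaussians $\phi_i$ and $\phi_{\mu,a}$, give $\de{\lambda}A_J\propto 1-\langle\sigma_i\rangle_t^2\to 1-q_{12}$ and $\de{\mu}A_J\propto \langle z_{\mu,a}^2\rangle_t-\langle z_{\mu,a}\rangle_t^2\to p_{11}-p_{12}$, exactly as displayed above; finally $\de{\tau}A_J$ yields $\mathbb{E}\langle n_{1,a}\rangle_t$ and the $J$-derivative of $A_J$ yields $m_1$. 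Hence every streaming contribution, and therefore $dA_J/dt$, is a function of $\{p_{lm},q_{lm},n_{1,a},m_1\}$ alone, which establishes completeness: the interpolation is self-contained and the replica-symmetric treatment reduces to replacing these overlaps and magnetizations by their quenched means $\bar p,\bar q,\bar n,\bar m$.

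I expect the main obstacle to be the bookkeeping of the Gaussian integration by parts when the hidden neurons $z_{\mu,a}$ and the auxiliary fields $\phi$ occur together in the weight: one must check that the cross-derivatives generate no index contractions outside the set \eqref{eq:2} and that, after the quenched average, all site sums self-average with the correct normalizations so that $p$ and $q$ emerge cleanly. The genuinely delicate point is the asymmetric treatment of the $\mu=1$ sector (kept exact through $\rho$, hence producing $n_{1,a}$ and, via the source $J$, $m_1$) versus the $\mu\ge 2$ sector (Gaussian-linearized through $\Gamma$, hence producing the bulk overlaps $p_{lm},q_{lm}$); it is exactly this split that singles out the two families of retrieval order parameters from the two families of replica overlaps.
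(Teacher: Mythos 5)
Your proposal is correct and follows essentially the same route as the paper: the proposition there is justified precisely by the displayed derivatives $\de{\Gamma}A_J,\,\de{\rho}A_J,\,\de{\psi}A_J,\,\de{\lambda}A_J,\,\de{\mu}A_J,\,\de{\tau}A_J$ (plus the $J$-derivative for $m_1$), and your Wick/integration-by-parts bookkeeping simply makes explicit how each moment contracts onto $p_{lm}$, $q_{lm}$, $n_{1,a}$ and $m_1$, reproducing in particular the replicated identifications $\langle p_{11}\rangle_t-\langle p_{12}q_{12}\rangle_t$, $1-\langle q_{12}\rangle_t$ and $\langle p_{11}\rangle_t-\langle p_{12}\rangle_t$ that the paper states. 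Your closing observation about the asymmetric treatment of the condensed $\mu=1$ sector (kept exact via $\rho$ and $\tau$, hence producing the retrieval parameters) versus the Gaussian-linearized $\mu\geq 2$ sector (producing the bulk overlaps) is exactly the structural point underlying the paper's choice of order parameters.
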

By these definitions each differential can be rewritten as
\begin{eqnarray}
  \de{\psi}A_{J}&=&-\frac{KM}{2N}\mathbb{E}_{\phi,\chi, \xi}\mv{p_{11}}_{t}; \ \ \de{\Gamma}A_J = \beta \Gamma(t) \frac{KM}{N} \mathbb{E}_{\phi,\chi, \xi} \big(\mv{p_{11}}_{t} -\mv{p_{12} q_{12}}_{t}\big)\\
  \de{\rho}A_{J} &=& \frac{\beta}{2}\sum_{a=1}^M \mathbb{E}_{\phi,\chi,\xi} \mv{ n_{1,a}^{2}}_{t}; \ \  \de{\lambda}A_{J} = \lambda(t) \mathbb{E}_{\phi,\chi,\xi}\big(\mv{q_{11}}_{t}-\mv{q_{12}}_{t}\big)\\
  \de{\mu} A_{J} &=& \mu(t)\frac{KM}{N} \big(\mv{p_{11}}_{t} - \mv{ p_{12}}_{t}\big); \ \  \de{\tau} A_{J} = \sum_{a=1}^M\mathbb{E}_{\phi,\chi,\xi} \mv{n_{1,a}}_{t}.
\end{eqnarray}
We are now ready to explicitly write $\frac{dA_{J}}{dt}$:
\begin{eqnarray}
\frac{dA_{J}}{dt} &=& -\frac{\alpha}{2}M \mv{p_{11}} \dot \psi+\beta \alpha M \Gamma \dot \Gamma (\mv{p_{11}}-\mv{p_{12}q_{12}})+\dot \rho \frac{\beta}{2}\sum_{a=1}^M \mv{n_{1,a}^{2}}+\lambda \dot \lambda (\mv{q_{11}}-\mv{q_{12}})\\
&+&\mu\dot\mu\alpha M (\mv{p_{11}}-\mv{p_{12}})+\dot\tau \sum_{a=1}^M \mv{n_{1,a}}.\nonumber
\end{eqnarray}

As in the replica symmetric regime we can discard fluctuations of the order parameters, assuming the latter to self-average around their mean values, that we indicate by a bar in the following, i.e. $\lim_{N \to \infty}\mathcal{P}(q_{12})=\delta(q_{12}- \bar{q}), \lim_{N \to \infty}\mathcal{P}(p_{12})=\delta(p_{12}- \bar{p})$, the strategy now is to write correlations as a source term, made of by mean values (that we will keep in the asymptotic limit), and fluctuations around these means (that will be discarded in the asymptotic limit), thus we write
\begin{eqnarray}
  \label{eq:ope}
  \mv{p_{12}q_{12}} &=& \mv{(p_{12}-\bar p)(q_{12}-\bar q)}-\bar p \bar q +\bar p \mv{q_{12}} + \bar q \mv{p_{12}} \\
   \mv{n_{1,a}^{2}} &=&  \mv{(n_{1,a}-\bar n)^{2}} - \bar n^{2}  +2\bar n \mv{n_{1,a}}. \nonumber
\end{eqnarray}

We plug the previous expressions in the streaming equation for $A_J$
\begin{eqnarray}\nonumber
  \frac{dA_{J}}{dt} &=& -\frac{\alpha}{2}M \mv{p_{11}} \dot \psi+\beta \alpha M \Gamma \dot \Gamma (\mv{p_{11}} - \mv{(p_{12}-\bar p)(q_{12}-\bar q)}+\bar p \bar q -\bar p \mv{q_{12}} - \bar q \mv{p_{12}})+\\
                &+& \dot \rho \frac{\beta}{2}\sum_{a=1}^M ( \mv{(n_{1,a}-\bar n)^{2}} - \bar n^{2}  +2\bar n \mv{n_{1,a}})+\lambda \dot \lambda (1 - \mv{q_{12}}) + \\
  &+& \mu\dot\mu\alpha M (\mv{p_{11}}-\mv{p_{12}})+\dot\tau \sum_{a=1}^M \mv{n_{1,a}}\nonumber
\end{eqnarray}
and we set to zero each coefficient coupled to a first order moment of any of the order parameters, namely
\begin{eqnarray}
  \label{eq:fluctsplitting}
  \mv{p_{11}} &: & -\frac{1}{2}\dot \psi+\beta\Gamma \dot \Gamma + \mu \dot \mu = 0, \\ 
  \mv{p_{12}} &: & \beta \Gamma \dot \Gamma \bar q + \mu \dot \mu  = 0,\\ 
  \mv{q_{12}} &: & \bar p \beta  \alpha M \Gamma \dot \Gamma + \lambda \dot \lambda = 0,\\ 
  \mv{n_{1,a}} &: & \dot \tau +\bar n \dot \rho \beta = 0.
\end{eqnarray}
This PDE system is under-determined:  it is sufficient to find a solution which solves it and that also satisfies the Cauchy condition for $A_J$ \eqref{eq:necessarycond} and
\begin{eqnarray}
  \label{eq:solvabilitycond}
  \Gamma_{t=0} &=& 0, \\
  \rho_{t=0} &=& 0.
\end{eqnarray}
The last two constraints allow us to further simplify the solution of the model, and make it exactly solvable at the replica symmetric level.
It is easy to solve this PDE system: one can verify that the solution we are looking for is given by
\begin{eqnarray}
  \label{eq:pdesol}
  \Gamma(t) &=& \sqrt{t}, \\
  \rho(t) &=& t, \\
  \psi(t) &=& 1-(1-t)\beta (1-\bar q),\\
  \mu(t) &=& \sqrt{\beta \bar q (1-t)},\\
  \lambda(t) &=& \sqrt{\alpha \beta \bar p M (1-t)},\\
  \tau(t) &=& \bar n (1-t).
\end{eqnarray}
\begin{remark}
We point out a difference between our approach and the original Guerra's route:  in the latter, the interpolation parameter associated to glassy terms appears under the square root, while when associated to the signal terms it appears linearly; in our approach, the interpolants are general functions of $t$  and  we obtain Guerra's prescriptions as the result of the resolution of the differential equation system coded in the eq.s \ref{eq:fluctsplitting}.
\end{remark}
These terms have to be plugged in the streaming equation for $A_J$, whose  final expression is given by
\begin{equation}
  \label{eq:simplstreaming}
  \frac{dA_{J}}{dt} = -\frac{1}{2}\beta\alpha M\bar p(1-\bar q) -\frac{\beta M}{2} \bar n^{2} -\frac{1}{2}\beta \alpha  M \mv{(p_{12}-\bar p)(q_{12}-\bar q)}  + \frac{\beta}{2}\sum_{a=1}^M \mv{(n_{1,a}-\bar n)^{2}}.
\end{equation}
As, under the replica symmetric ansatz, we can disregard the fluctuations asymptotically, we can state the next
\bigskip
\begin{theorem}
In the high storage ($K = \alpha N$) and in the infinite volume of the network limit ($N \to \infty$), but finite dataset size $M$, the quenched replica symmetric pressure of the model (\ref{eq:Hamiltonian2}) is given by the following expression in terms of the natural order parameters of the theory:
\begin{eqnarray}
  \label{eq:rssolution}
  A_{N,M}(\alpha, \beta, J ; t) &=&\log2 -\frac{\beta\alpha M}{2}\bar p(1-\bar q) -\frac{\beta M}{2} \bar n^{2} -\frac{\alpha M}{2} \big(\log[1-\beta (1-\bar q)] - \frac{\beta\bar q}{1-\beta(1-\bar q)}\big) +\nonumber\\
  &+& \mathbb{E}_{\phi\chi}\log\cosh \big(J+\bar n \beta \sum_{a=1}^M \chi_{a} + \sqrt{\alpha\beta\bar p M}\phi\big)
\end{eqnarray}
\end{theorem}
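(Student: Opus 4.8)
The plan is to obtain \eqref{eq:rssolution} by integrating the simplified streaming equation \eqref{eq:simplstreaming} in $t$ from $0$ to $1$ through the fundamental theorem of calculus, $A_J = A_J|_{t=0} + \int_0^1 (dA_J/dt)\,dt$, having first invoked the replica-symmetric ansatz to discard the fluctuation contributions. Under RS the overlaps concentrate on their means, so that $\langle (p_{12}-\bar p)(q_{12}-\bar q)\rangle \to 0$ and $\langle (n_{1,a}-\bar n)^2\rangle \to 0$ as $N\to\infty$; the surviving part of $dA_J/dt$ is then constant in $t$, and the $t$-integral simply reproduces the two terms $-\tfrac{\beta\alpha M}{2}\bar p(1-\bar q)$ and $-\tfrac{\beta M}{2}\bar n^2$ with unit weight.

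The core computation is the Cauchy datum $A_J|_{t=0}$. Since $\Gamma(0)=\rho(0)=0$ by \eqref{eq:pdesol}, both quadratic interactions in \eqref{eq:aa} vanish and the interpolating system collapses to a fully factorized one-body model, coupled only to the auxiliary fields through $W_{N,M}(0)$, with $\psi(0)=1-\beta(1-\bar q)$, $\mu(0)=\sqrt{\beta\bar q}$, $\lambda(0)=\sqrt{\alpha\beta\bar p M}$ and $\tau(0)=\bar n\beta$. The partition function then factorizes over the $(\mu,a)$ Gaussian modes and over the $N$ Ising sites: each $z_{\mu,a}$-integral is an elementary Gaussian yielding $\psi(0)^{-1/2}\exp\!\big(\mu(0)^2\phi_{\mu,a}^2/[2\psi(0)]\big)$, while summing each $\sigma_i$ over $\pm 1$ gives $2\cosh\!\big(J\xi_i^1 + \lambda(0)\phi_i + \tau(0)\sum_a \xi_i^1\chi_i^{1,a}\big)$.

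Taking the logarithm, dividing by $N$ and averaging over the quenched disorder then assembles the remaining three terms of \eqref{eq:rssolution}. Counting the $(K-1)M \sim \alpha M N$ Gaussian modes and using $\mathbb{E}\phi^2=1$ converts the $z$-block into $-\tfrac{\alpha M}{2}\big(\log[1-\beta(1-\bar q)] - \beta\bar q/(1-\beta(1-\bar q))\big)$, while the $\sigma$-block produces $\log 2$ together with $\mathbb{E}_{\phi\chi}\log\cosh\!\big(J + \bar n\beta\sum_a\chi_a + \sqrt{\alpha\beta\bar p M}\,\phi\big)$; here one fixes $\xi_i^1 = 1$ by gauge symmetry, legitimate because the evenness of $\cosh$ together with the symmetry of $\phi_i$ makes the $\xi_i^1=-1$ contribution coincide with the $\xi_i^1=+1$ one after the quenched average. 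Collecting the Cauchy datum with the $t$-integral reproduces \eqref{eq:rssolution} exactly.

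The principal obstacle is structural rather than computational: the vanishing of the two fluctuation terms is the replica-symmetric assumption itself, justified here by the expected self-averaging of the overlaps rather than proven. Making this rigorous---for instance via a Guerra-type sum rule in which the discarded quadratic form $\langle(p_{12}-\bar p)(q_{12}-\bar q)\rangle$ carries a definite sign, thereby promoting the identity to a one-sided bound---is the delicate step; everything downstream is routine Gaussian integration and the $K/N\to\alpha$ counting.
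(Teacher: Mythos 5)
Your proposal is correct and follows essentially the same route as the paper's own proof: integrating the streaming equation \eqref{eq:simplstreaming} via the fundamental theorem of calculus after discarding the fluctuation terms under the replica-symmetric ansatz, and evaluating the factorized one-body Cauchy datum at $t=0$ (with $\psi(0)=1-\beta(1-\bar q)$, $\mu(0)=\sqrt{\beta\bar q}$, $\lambda(0)=\sqrt{\alpha\beta\bar p M}$, $\tau(0)=\bar n\beta$) by Gaussian integration over the $z_{\mu,a}$ and the site-wise $\cosh$ sum, exactly as in \eqref{eq:onebody}. Your closing caveat---that the vanishing of $\langle(p_{12}-\bar p)(q_{12}-\bar q)\rangle$ and $\langle(n_{1,a}-\bar n)^2\rangle$ is assumed rather than proven---matches the paper's own stance, which likewise invokes self-averaging under the RS assumption without a rigorous bound.
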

\begin{proof}
Note that, with the expression (\ref{eq:simplstreaming}) for the streaming of $A_J$ we express the flux of $A_J$ in $t$ by two kinds of object: average values of the order parameters, i.e. $\bar{q},\ \bar{p},\  \bar{n}$, that contribute to the source term, and all the remaining terms that are fluctuations around these means, i.e. $\langle (p_{12}-\bar p)(q_{12}-\bar q) \rangle$ and $\mv{(n_{1,a}- \bar n)^{2}}$: the latter can be discarded in the thermodynamic limit, under replica-symmetric assumption. Note further that, so far, the Mattis magnetization for the archetype has played no role.
\newline
For the sake of completeness we write also the interpolating structure in its final form that reads
\begin{eqnarray}
   \label{eq:solvedinterpolant}
   A_{J} &=& \frac{1}{N} \mathbb{E}_{\phi,\chi,\xi} \log \Big[\sum_{\sigma} \int \prod_{\mu=2}^K \prod_{a=1}^M \frac{dz_{\mu,a}}{\sqrt{2\pi}} \exp \Big(-\frac{1-\beta(1-\bar q)(1-t)}{2}\sum_{\mu=2}^K \sum_{a=1}^M z_{\mu,a}^{2}+\\ \nonumber
        &+&\sqrt{t}\sqrt{\frac{\beta}{N}}\sum_{a=1}^M \sum_{\mu=2}^K  \xi_{i}^{\mu} \chi_{i}^{\mu,a} z_{\mu,a} \sigma_{i} + t\frac{\beta N}{2} \sum_{a=1}^M (\frac{1}{N}\sum_{i=1}^N  \xi_{i}^{1}\chi_{i}^{1,a}\sigma_{i} )^{2} +J \sum_{i=1}^N  \xi_{i}^{1}\sigma_{i} +\\
        &+& \sqrt{\alpha \beta \bar p M(1-t)}\sum_{i=1}^N \phi_{i }\sigma_{i} + \sqrt{\beta \bar q (1-t)}\sum_{a=1}^M \sum_{\mu=2}^K \phi_{\mu,a}z_{\mu,a} +\bar n \beta(1-t)\sum_{a=1}^M \sum_{i=1}^N  \xi_{i}^{1} \chi_{i}^{1,a}\sigma_{i} \Big)\Big].\nonumber
\end{eqnarray}
The true power of the interpolation scheme now shines: the solution of the model can be recast as a simple integration problem. Recalling that we are interested in the original model (which can be recovered by setting $t=1, J=0$ inside the interpolating structure \ref{eq:solvedinterpolant}), we can exploit the fundamental theorem of calculus now, as
\begin{equation}
  \label{eq:cauchycond}
  A_{J}(t=1)=A_{J}(t=0)+\int_{0}^{1}ds\, \left. \frac{dA_{J}}{dt} \right|_{t=s},
\end{equation}
thus all that is left to do is evaluating the trivial 1-body problem $A_J(t=0)$: this is  a routinely integration procedure and it is performed as follows
\begin{eqnarray}
  \label{eq:onebody}
  A_{J}(t=0) &=& \frac{1}{N} \mathbb{E}_{\phi,\chi,\xi} \log \Big[\sum_{\sigma} \int \prod_{\mu=2}^K \prod_{a=1}^M \frac{dz_{\mu,a}}{\sqrt{2\pi}} \exp \Big(-\frac{1-\beta(1-\bar q)}{2}\sum_{\mu=2}^K \sum_{a=1}^M z_{\mu,a}^{2}+ \\
       &+& J \sum_{i=1}^N  \xi_{i}^{1}\sigma_{i} +\sqrt{\alpha \beta \bar p M}\sum_{i=1}^N \phi_{i }\sigma_{i} + \sqrt{\beta \bar q}\sum_{a=1}^M \sum_{\mu=2}^K \phi_{\mu,a}z_{\mu,a} +\bar n \beta\sum_{a=1}^M \sum_{i=1}^N  \xi_{i}^{1} \chi_{i}^{1,a}\sigma_{i} \Big)\Big] =\nonumber \\
       &=& -\frac{\alpha M}{2} \left(\log[1-\beta (1-\bar q)] - \frac{\beta\bar q}{1-\beta(1-\bar q)}\right) + \mathbb{E}_{\phi\chi}\log\cosh \big(J+\bar n \beta \sum_{a=1}^M \chi_{a} + \sqrt{\alpha\beta\bar p M}\phi\big),\nonumber
\end{eqnarray}
thus ending the proof.
\end{proof}
\begin{corollary}
The self-consistency equations related to the model introduced in Definition (\ref{def:H}) are obtained by looking for the stationary points of the quenched pressure $\evalat{\nabla_{\bar n,\bar q,\bar p} A_{J}}{J=0} = 0$. These equations are given by
\begin{eqnarray}
  \label{eq:sce}
    \bar p &=& \frac{\beta \bar q}{[1-\beta(1-\bar q)]^{2}},\\
  \bar q &=& \mathbb{E}_{\phi\chi} \tanh^{2} \big(\beta \bar n \sum_{a=1}^M \chi_{a} +\sqrt{\alpha\beta  \bar p M}\phi\big), \\
  \label{eq:sce_last}
  \bar n &=& \mathbb{E}_{\phi\chi}\big(\frac{1}{M}\sum_{a=1}^M \chi_{a}\big) \tanh \big(\beta \bar n \sum_{a=1}^M \chi_{a} +\sqrt{\alpha\beta  \bar p M}\phi\big).
\end{eqnarray}
Further, exploiting the auxiliary field $J$, inserted by hand in such a way that $\bar m = \nabla_J A_{J}$, we obtain
\begin{equation}
  \bar m = \mathbb{E}_{\phi\chi} \tanh \big(\beta \bar n \sum_{a=1}^M \chi_{a} +\sqrt{\alpha\beta  \bar p M}\phi\big).
\end{equation}
\end{corollary}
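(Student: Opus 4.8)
The plan is to read the corollary as a direct consequence of the explicit replica-symmetric pressure established in the preceding Theorem: the three self-consistency equations are precisely the stationarity conditions $\partial_{\bar n} A_J = \partial_{\bar q} A_J = \partial_{\bar p} A_J = 0$ evaluated at $J=0$, while the magnetization identity is $\bar m = \partial_J A_J|_{J=0}$. Throughout I abbreviate the effective stochastic field inside the hyperbolic tangent by $g := J + \beta \bar n \sum_{a=1}^M \chi_a + \sqrt{\alpha \beta \bar p M}\,\phi$, so that the only $J$-, $\bar n$-, and $\bar p$-dependence of the last term of $A_J$ enters through $g$, whereas $\bar q$ appears solely in the logarithmic/fractional block.

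First I would dispatch the two relations requiring no special care. Differentiating $A_J$ in $J$ and using $\partial_J g = 1$ gives immediately $\bar m = \mathbb{E}_{\phi\chi}\tanh(g)|_{J=0}$, the stated magnetization relation. Likewise, since $\partial_{\bar n} g = \beta \sum_a \chi_a$, the condition $\partial_{\bar n} A_J = 0$ reads $-\beta M \bar n + \beta\, \mathbb{E}_{\phi\chi}[(\sum_a \chi_a)\tanh(g)] = 0$, which rearranges after dividing by $\beta M$ into the self-consistency for $\bar n$ in \eqref{eq:sce_last}. These are pure chain-rule computations through the $\log\cosh$.

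The equation for $\bar p$, namely \eqref{eq:sce}, comes from stationarity in $\bar q$ and is a routine if bookkeeping-heavy exercise: writing $u := 1-\beta(1-\bar q)$ one has $\partial_{\bar q}\log u = \beta/u$ and $\partial_{\bar q}(\beta \bar q/u) = \beta(1-\beta)/u^2$, and using $u-1+\beta = \beta\bar q$ the whole bracket collapses to $\beta^2\bar q/u^2$; balancing this against $\partial_{\bar q}[-\tfrac{\beta\alpha M}{2}\bar p(1-\bar q)] = \tfrac{\beta\alpha M}{2}\bar p$ yields $\bar p = \beta \bar q/u^2$. The one genuinely non-mechanical step is the equation for $\bar q$, which follows from stationarity in $\bar p$: here $\partial_{\bar p} g = \tfrac12 \sqrt{\alpha\beta M/\bar p}\,\phi$ produces the factor $\mathbb{E}_\phi[\phi \tanh(g)]$, which must be handled by Gaussian integration by parts (Stein's lemma), $\mathbb{E}_\phi[\phi\, f(\phi)] = \mathbb{E}_\phi[f'(\phi)]$, giving $\mathbb{E}_\phi[\phi\tanh(g)] = \sqrt{\alpha\beta\bar p M}\,\mathbb{E}_\phi[1-\tanh^2 g]$. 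Substituting, the prefactors combine to $\tfrac{\alpha\beta M}{2}$, and cancelling against $\partial_{\bar p}[-\tfrac{\beta\alpha M}{2}\bar p(1-\bar q)] = -\tfrac{\beta\alpha M}{2}(1-\bar q)$ leaves $1-\bar q = \mathbb{E}[1-\tanh^2 g]$, i.e. $\bar q = \mathbb{E}_{\phi\chi}\tanh^2(g)$.

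I expect the Gaussian integration-by-parts step to be the main (indeed essentially only) obstacle, both because one must note that the boundary terms vanish — guaranteed by the boundedness of $\tanh$ against the Gaussian tails — and because the subsequent cancellation of the $\sqrt{\bar p}$ factors is exactly what makes the otherwise opaque $\bar p$-derivative reduce to a clean fixed-point form. It is worth flagging the appealing cross-structure that surfaces here: stationarity in $\bar p$ delivers the equation for $\bar q$, while stationarity in $\bar q$ delivers the equation for $\bar p$, a feature typical of such Legendre-type pressures.
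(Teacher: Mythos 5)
Your proposal is correct and coincides with the paper's proof, which simply states that the corollary follows ``by straightforward derivation of $A_J$'' in the replica-symmetric pressure; your computations (the chain rule through $g$ for the $\bar n$ and $\bar m$ relations, Gaussian integration by parts for the $\bar q$ equation from $\partial_{\bar p}$, and the algebraic collapse via $u=1-\beta(1-\bar q)$ for the $\bar p$ equation from $\partial_{\bar q}$) are exactly the omitted details and all check out. The cross-structure you flag — stationarity in $\bar p$ yielding the equation for $\bar q$ and vice versa — is indeed the right reading of the fixed-point system, though the paper leaves it implicit.
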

\begin{proof}
The proof works by straightforward derivation of $A_J$ in (\ref{eq:rssolution}).
\end{proof}

\subsection{Network behavior in the noiseless limit $\beta \to \infty$}
As standard also for the classic Hopfield scenario, namely within the AGS theory \cite{Amit,CKS}, en route to the ground-state solution (namely the self-consistencies for $\beta \to \infty$), we now assume that $\lim_{\beta \to \infty}\beta(1-\bar{q})$ is finite. This gives rise to the following
\bigskip
\begin{theorem} \label{th:2}
The zero-temperature self-consistency equations for the order parameters read as
\begin{eqnarray}
  \label{eq:ztsce}
  \bar K &:=& \frac{\sqrt{2 \alpha M} \beta (1 - \bar q) }{\beta (1 - \bar q) -1} = \mathbb{E}_{\chi} \operatorname{erf}^{\prime}\big(\frac{\bar n \sum_{a=1}^{M}\chi_{a}}{\bar K + \sqrt{2\alpha M}}\big) \\
  \label{eq:th2_1}
  \bar n &=& \mathbb{E}_{\chi} \frac{\sum_{a=1}^{M}\chi_{a}}{M}\operatorname{erf}\big(\frac{\bar n \sum_{a=1}^{M}\chi_{a}}{\bar K + \sqrt{2\alpha M}}\big) \\
  \label{eq:th2_2}
  \bar m &=& \mathbb{E}_{\chi} \operatorname{erf}\big(\frac{\bar n \sum_{a=1}^{M}\chi_{a}}{\bar K + \sqrt{2\alpha M}}\big)
\end{eqnarray}
where $\rm erf$ is the error function and ${\rm erf}^{\prime}$ is it's first derivative ${\rm erf}^\prime (x) := \frac{2}{\sqrt{\pi}}\exp(-x^2)$.
\end{theorem}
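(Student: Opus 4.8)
The plan is to obtain \eqref{eq:ztsce}--\eqref{eq:th2_2} as the $\beta\to\infty$ limit of the finite-temperature self-consistencies of the Corollary (eqs.~\eqref{eq:sce}--\eqref{eq:sce_last} together with the companion equations for $\bar q$ and $\bar m$), under the standard AGS ansatz that $\bar q\to 1$ while $C:=\lim_{\beta\to\infty}\beta(1-\bar q)$ stays finite. The first step is preparatory: substitute $\bar p=\beta\bar q/[1-\beta(1-\bar q)]^2$ into the Gaussian noise amplitude and pull a factor $\beta$ out of every hyperbolic-tangent argument, writing it as $\beta g$ with $g:=\bar n\sum_{a=1}^M\chi_a+\sqrt{\alpha\bar p M/\beta}\,\phi$. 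A short computation then gives $\sqrt{\alpha\bar p M/\beta}\to\sqrt{\alpha M}/[1-\beta(1-\bar q)]$, so that the effective Gaussian width $\sqrt{2\alpha\bar p M/\beta}$ converges to a finite limit; collecting the $\sqrt 2$ factors and using the definition of $\bar K$ one checks that this width is exactly the denominator $\bar K+\sqrt{2\alpha M}$ occurring in \eqref{eq:th2_1}--\eqref{eq:th2_2}.

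For $\bar n$ and $\bar m$ the limit is then essentially mechanical. Since $\tanh(\beta g)\to\operatorname{sign}(g)$ for every $g\neq 0$, dominated convergence (the integrand is bounded by $1$) lets me replace $\tanh$ by $\operatorname{sign}$ inside $\mathbb{E}_{\phi\chi}$, after which the $\phi$-integral collapses through the elementary identity $\mathbb{E}_\phi\operatorname{sign}(A+B\phi)=\operatorname{erf}\!\big(A/(\sqrt 2\,B)\big)$ valid for $\phi\sim\mathcal N(0,1)$ and $B>0$. Applied with $A=\bar n\sum_a\chi_a$ and $\sqrt 2\,B=\bar K+\sqrt{2\alpha M}$, this yields \eqref{eq:th2_1} and \eqref{eq:th2_2} once the remaining discrete average over the $\chi_a$ is left in place; the atom $\sum_a\chi_a=0$ is harmless since it contributes $\operatorname{erf}(0)=0$.

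The genuinely delicate equation is the one for $\bar q$, which encodes the threshold relation \eqref{eq:ztsce}. Here $\bar q\to 1$ is automatic, so all the content sits in the subleading term: I would write $1-\bar q=\mathbb{E}_{\phi\chi}\operatorname{sech}^2(\beta g)$ and exploit the concentration $\operatorname{sech}^2(\beta u)\to(2/\beta)\,\delta(u)$ as $\beta\to\infty$ (recall $\int\operatorname{sech}^2=2$). After the change of variables $u=A+B\phi$, the Gaussian integral is dominated by $u\approx 0$ and gives $\beta(1-\bar q)=\tfrac{1}{\sqrt 2\,B}\,\mathbb{E}_\chi\operatorname{erf}'\!\big(A/(\sqrt 2\,B)\big)$, with $\operatorname{erf}'(x)=\tfrac{2}{\sqrt\pi}e^{-x^2}$; substituting $\sqrt 2\,B=\bar K+\sqrt{2\alpha M}$ and rearranging with the definition of $\bar K=\sqrt{2\alpha M}\,\beta(1-\bar q)/[1-\beta(1-\bar q)]$ reproduces \eqref{eq:ztsce}.

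The hard part is precisely this last step. The replacement $\operatorname{sech}^2(\beta u)\simeq(2/\beta)\delta(u)$ has to be justified (Watson/Laplace estimate on the shifted Gaussian), and one must verify a posteriori that the resulting fixed-point relation for $C$ admits a finite solution, so that the ansatz is self-consistent and the condition $1-\beta(1-\bar q)>0$—needed both for the Gaussian linearization and for the $\log$ term in the pressure \eqref{eq:rssolution} to be well defined—is respected. The interchange of the $\beta\to\infty$ limit with $\mathbb{E}_{\phi\chi}$ in all three equations likewise rests on dominated convergence; this is routine because $|\tanh|\le 1$ and $\operatorname{sech}^2$ is integrable, but it should be stated explicitly. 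Everything outside the $\bar q$ computation is elementary algebra.
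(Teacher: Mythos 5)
Your proposal is correct, and its skeleton coincides with the paper's: take $\beta\to\infty$ in the replica-symmetric self-consistencies under the AGS ansatz that $\delta q := \beta(1-\bar q)$ stays finite, replace $\tanh(\beta g)$ by $\operatorname{sign}(g)$, and collapse the $\phi$-average via $\mathbb{E}_{\phi}\operatorname{sign}(A+B\phi)=\operatorname{erf}\big(A/(\sqrt{2}\,B)\big)$, with the limiting width $\sqrt{2\alpha \bar p M/\beta}\to\sqrt{2\alpha M}/(1-\delta q)=\bar K+\sqrt{2\alpha M}$. Where you genuinely diverge is the one step you yourself flag as hard, the $\bar q$ equation: you attack it head-on through the concentration $\operatorname{sech}^2(\beta u)\to(2/\beta)\,\delta(u)$, which then obliges you to justify a Laplace-type estimate on the shifted Gaussian. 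The paper sidesteps that estimate with a source-field trick: it inserts an auxiliary term $\beta x$ into the argument of the hyperbolic tangent, notes that the self-consistencies give $\delta q=\partial\bar m/\partial x$, performs the zero-temperature limit only at the level of $\bar m$ (the easy sign-to-erf step), and then recovers the $\delta q$ equation by differentiating the limiting erf expression in $x$ before setting $x=0$; a final change of variables $\delta q=\delta Q/(\delta Q+\sqrt{2\alpha M})$ casts the result in the form of the theorem, with $\bar K$ playing the role of $\delta Q$. In short, the paper trades your analytic concentration argument for an algebraic identity, at the cost of an implicit interchange of the $\beta\to\infty$ limit with the $x$-derivative — no more rigorous than your delta-function heuristic, just tidier; the two computations produce identical equations, as one checks from $\delta q\,(\bar K+\sqrt{2\alpha M})=\bar K$. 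Two remarks: first, you silently corrected a sign in the theorem's displayed definition of $\bar K$ — consistency with \eqref{eq:th2_1}--\eqref{eq:th2_2} (and positivity of $\mathbb{E}_{\chi}\operatorname{erf}'$) requires $\bar K=\sqrt{2\alpha M}\,\beta(1-\bar q)/[1-\beta(1-\bar q)]$, so the denominator in \eqref{eq:ztsce} carries a misprint and your version is the right one; second, your dominated-convergence step tacitly assumes that the finite-$\beta$ solutions $(\bar n,\bar q,\bar p)$ themselves converge along the limit, but the paper makes exactly the same tacit assumption, so you are not below its level of rigor there.
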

\begin{proof}
As a first step we introduce an additional term \(\beta x\) in the argument of the hyperbolic tangent appearing in the self-consistency equations (\ref{eq:sce}):
\begin{eqnarray}
\label{eq:orig-sce}
\bar q &=& \mathbb{E}_{\chi,\phi}\tanh^{2}\Big(\beta \bar n \sum_{a=1}^{M}\chi_{a} +\beta \phi \sqrt{\frac{\alpha M \bar q}{[1-\beta(1-\bar q)]^{2}}} +\beta x\Big)\\
\bar n &=& \mathbb{E}_{\chi,\phi}\big( \frac{1}{M}\sum_{a=1}^{M} \chi_{a} \big)\tanh\Big(\beta \bar n \sum_{a=1}^{M}\chi_{a} +\beta \phi \sqrt{\frac{\alpha M \bar q}{[1-\beta(1-\bar q)]^{2}}} +\beta x \Big)\\
\bar m &=& \mathbb{E}_{\chi,\phi}\tanh\Big(\beta \bar n \sum_{a=1}^{M}\chi_{a} +\beta \phi \sqrt{\frac{\alpha M \bar q}{[1-\beta(1-\bar q)]^{2}}} +\beta x\Big).
\end{eqnarray}
We also recognize that at \(\beta \to \infty\) we also have \(q \to 1\) thus in order to correctly perform the limit a reparametrization is in order,
\begin{equation}
\label{eq:q-lim}
\bar q=1-\frac{\delta q}{\beta} ~~~ \textrm{as} ~~ \beta \to \infty
\end{equation}
Via this reparametrization we obtain
\begin{eqnarray}
\label{eq:repar-sce}
1-\frac{\delta q}{\beta} &=& \mathbb{E}_{\chi,\phi}\tanh^{2}\Big(\beta \bar n \sum_{a=1}^{M}\chi_{a} +\beta \phi \sqrt{\frac{\alpha M (1-\frac{\delta q}{\beta})}{(1-\delta q)^{2}}} +\beta x\Big)\\
\bar n &=& \mathbb{E}_{\chi,\phi}\big( \frac{1}{M}\sum_{a=1}^{M} \chi_{a} \big)\tanh\Big(\beta \bar n \sum_{a=1}^{M}\chi_{a} +\beta \phi \sqrt{\frac{\alpha M (1-\frac{\delta q}{\beta})}{(1-\delta q)^{2}}} +\beta x \Big)\\
\bar m &=& \mathbb{E}_{\chi,\phi}\tanh\Big(\beta \bar n \sum_{a=1}^{M}\chi_{a} +\beta \phi \sqrt{\frac{\alpha M (1-\frac{\delta q}{\beta})}{(1-\delta q)^{2}}} +\beta x\Big).
\end{eqnarray}
Taking advantage of the new parameter \(x\) we can recast the last equation in \(\delta q\) as a derivative of the magnetization \(\bar m\):
\begin{equation}
\label{eq:deriv-sce}
\frac{\partial \bar m}{\partial x} = \beta [1-(1-\frac{\delta q}{\beta})] = \delta q
\end{equation}
where we used both the self-consistencies for $\bar m$ and $\delta q$.
Thanks to this correspondence between \(\bar m\) and \(\delta q\), we can proceed with our limit without worrying about \(\bar q\): the limiting equations for \(\bar m, \bar n\) are now for $\beta \to \infty$:
\begin{eqnarray}
\label{eq:n-sce}
\bar n &=& \mathbb{E}_{\chi,\phi}\big(\frac{1}{M} \sum_{a=1}^{M}\chi_{a}\big)\operatorname{sign}\Big( \bar n \sum_{a=1}^{M}\chi_{a} +\phi \sqrt{\frac{\alpha M }{[1-\delta q]^{2}}} +x\Big),\\
\label{eq:m-sce}
\bar m &=& \mathbb{E}_{\chi,\phi}\operatorname{sign}\Big( \bar n \sum_{a=1}^{M}\chi_{a} +\phi \sqrt{\frac{\alpha M }{[1-\delta q]^{2}}} +x\Big).
\end{eqnarray}
These equations can be further simplified by evaluating the Gaussian integral in \(\phi\), via the relation:
$$
\mathbb{E}_{\phi} \operatorname{sign}(A\phi+B) = \text{erf}\left(\frac{B}{\sqrt{2} A}\right)
$$
to get
\begin{eqnarray}
\label{eq:m-sce}
\bar m &=& \mathbb{E}_{\chi}\operatorname{erf}\Big[ (\bar n \sum_{a=1}^{M}\chi_{a} +x)  \frac{1-\delta q}{\sqrt{2\alpha M }} \Big]\\
\bar n &=& \mathbb{E}_{\chi}\big(\frac{1}{M}\sum_{a=1}^{M}\chi_{a}\big)\operatorname{erf}\Big [ (\bar n \sum_{a=1}^{M}\chi_{a} +x)  \frac{1-\delta q}{\sqrt{2\alpha M }} \Big]
\end{eqnarray}
while \(\delta q\), thanks to \eqref{eq:deriv-sce}, becomes
\begin{equation}
\label{eq:q-ztsce}
\delta q = \frac{\partial \bar m}{\partial x} = \mathbb{E}_{\chi} \frac{2}{\sqrt{\pi}}  \frac{1-\delta q}{\sqrt{2\alpha M }}\exp \left \{ -\left[ (\bar n \sum_{a=1}^{M}\chi_{a} +x)  \frac{1-\delta q}{\sqrt{2\alpha M }}\right]^{2} \right \}.
\end{equation}
In order to simplify the equation in \(\delta q\) we make one last change of variables,
$$
\delta q = \frac{\delta Q}{\delta Q + \sqrt{2\alpha M}}
$$
yielding to
\begin{eqnarray}
\label{eq:ztsce-final}
\bar m &=& \mathbb{E}_{\chi}\operatorname{erf}\Big( \frac{\bar n \sum_{a=1}^{M}\chi_{a}}{\sqrt{2\alpha M } + \delta Q} \Big)\\
\label{eq:ztsce-final-bis}
\bar n &=& \mathbb{E}_{\chi}\big(\frac{1}{M}\sum_{a=1}^{M}\chi_{a}\big)\operatorname{erf}\Big( \frac{\bar n \sum_{a=1}^{M}\chi_{a}}{\sqrt{2\alpha M } + \delta Q} \Big)\\
\delta Q &=& \mathbb{E}_{\chi} \frac{2}{\sqrt{\pi}}\exp \Big[-\Big( \frac{\bar n \sum_{a=1}^{M}\chi_{a}}{\sqrt{2\alpha M } + \delta Q}\Big)^{2} \Big]
\end{eqnarray}
where \(x\) has been set to \(0\), allowing to close the proof.
\end{proof}
The solutions of these equations, as $p$ and $M$ are varied, is captured in the plots of Fig.~\ref{fig:plots_zt_zalpha}.
Remarkably, there exists a crossover at $\tilde{M}(p)$, such that as $M < \tilde{M}(p)$ ($M > \tilde{M}(p)$) the example magnetization $\bar n$ is larger (smaller) than the archetype magnetization $\bar m$. We  would be tempted to label the crossover points $\tilde{M}(p)$ as candidate markers of a phase transition, yet we still need to further inspect the system and to develop the theory by suitably sending both $M$ and $N$ (and $K$ as well in the high storage) to infinity before we can robustly refer to a phase transition; this work will be achieved in the next subsection.

\begin{figure}[tb]
\centerline{\includegraphics[scale=0.8]{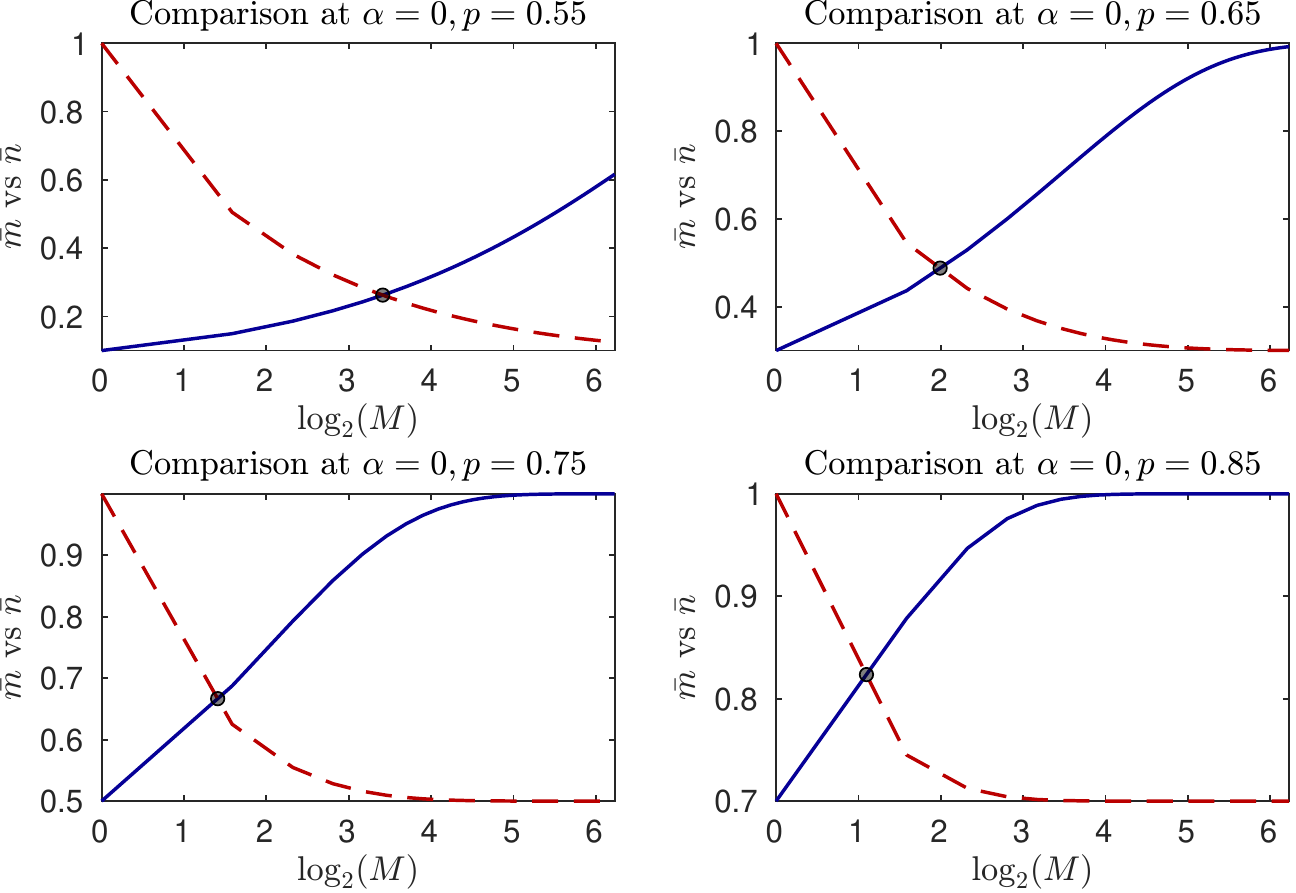}}
\caption[]{We compare the expected magnetizations $\bar m$ (solid line) and $\bar n$ (dashed line), obtained by numerically solving (\ref{eq:ztsce-final}) and (\ref{eq:ztsce-final-bis}), holding in the limit of vanishing temperature $\beta \to \infty$ and infinite size $N \to \infty$, in the low load regime $\alpha=0$. We notice that, as the size $M$ of the dataset increases, the magnetization of the noisy example diminishes while that of the archetype starts to grow; we denote with $\tilde{M}$ the value of $M$ corresponding to the intersection between the two curves. Different values of $p$ are considered, as reported in the title of the panels.  \label{fig:plots_zt_zalpha} }
\end{figure}

\subsection{Network behavior in the large dataset limit $M \to \infty$} \label{ssec:MM}

In the theory developed so far, we assumed that, as $K$ and $N$ are made larger and larger, their ratio $\alpha$ remains finite in such a way that it can be used as an intensive parameter tuning pattern load, however, the parameter $M$ expressing the sample size is still extensive and its tuning is not related to a tuning in the network volume $N$ or in the number $K$ of pattern. In this section we turn the whole theory intensive such that the meaning of the self-consistencies, as well as the nature of the phase transition, can appear manifestly.
\newline
This goal is approached by steps: first, setting $M$ as large (but still retaining the parameter $M$ explicit), via the central limit theorem, we approximate the quantity $\frac{1}{M}\sum_{a=1}^M \chi_{a}$, where, we recall $\mathcal{P}(\chi_{a}) =  p\,\delta(\chi_{a} -1) + (1-p)\,\delta(\chi_{a} +1)$, with a Gaussian random variable, namely
\begin{equation}
  \label{eq:clt}
 \frac{1}{M}\sum_{a=1}^M \chi_{a} \sim 2p-1 +  2\sqrt{\frac{p(1-p)}{M}} Z, \quad\, Z \sim \calN(0,1).
\end{equation}
This expression can be used to considerably simplify the self-consistency equations. Let us focus on the retrieval of the noisy patterns quantified by $\bar{n}$:
\begin{eqnarray}\nonumber
  \bar n  &=& \mathbb{E}_{\phi, Z}\left( 2p-1 +  2\sqrt{\frac{p(1-p)}{M}} Z\right) \tanh \left[\beta M \bar n \left( 2p-1 +  2\sqrt{\frac{p(1-p)}{M}} Z\right)  +\sqrt{\alpha\beta M  \bar p }\phi\right]=\\
  \label{eq:largeMsce}
  &=& (2p-1) \bar m + \beta M \bar n \frac{4p(1-p)}{M} (1-\bar q),
\end{eqnarray}
where the last step has been performed via Wick theorem: $\mathbb{E}_{Z} Z f(Z) = \mathbb{E}_{Z} \partial_{Z} f(Z)$. This equation implies that, for large $M$, beyond $n$, the order parameter $m$ -- assessing the retrieval of archetypes -- also starts to play a fundamental role; in fact, the configurations $\boldsymbol \sigma = \boldsymbol \xi^{\mu}$ emerge as ground states. Indeed, we have
\begin{equation}
  \label{eq:etaSCElargeM}
 \bar n = \frac{\bar m r}{1-\beta  (1-\bar q) (1- r^2)},
\end{equation}
where, for simplicity, we posed $r=2p-1$.
\newline
This equation allows us to get rid of $n$ and rather focus on $m$: by replacing (\ref{eq:etaSCElargeM}) in the remaining self-consistencies we find
\begin{eqnarray}
  \bar p &=& \frac{\beta \bar q}{[1-\beta(1-\bar q)]^{2}}, \ \  G: =  \frac{\beta r^{2}}{1-\beta(1-r^{2})(1-\bar q)},\\
  \bar m &=&  \mathbb{E}_{\phi,Z} \tanh \left[G \bar m M \Big(1+ Z\sqrt{\frac{1-r^{2}}{r^{2} M}}\Big)+ \phi \sqrt{\alpha \beta \bar p M} \right],\\
   \bar q &=&  \mathbb{E}_{\phi,Z} \tanh^{2} \left[G \bar m M \Big(1+ Z\sqrt{\frac{1-r^{2}}{r^{2} M}}\Big)+ \phi \sqrt{\alpha \beta \bar p M} \right],
\end{eqnarray}
where the parameter $G$ has been introduced to lighten the notation.
\newline
For a straight comparison to AGS theory, we introduce a more convenient scale for the temperature, such that
\begin{equation}
  \label{eq:betatransform}
\beta \to \frac{\beta }{\beta  (q-1) \left(r^2-1\right)+r^2}.
\end{equation}
Via this  rescaling the self-consistent equations become
\begin{eqnarray}
\label{eq:prev1}
  \bar m &=&  \mathbb{E}_{\phi,Z} \tanh \left[\beta \bar m M + Z \beta \sqrt{M \frac{1-r^{2}}{r^{2} }{\bar m}^{2}} + \phi \beta\sqrt{\alpha \frac{  \bar q}{r^4 (1-\beta  (1-\bar q))^2}M} \right],\\
  \label{eq:prev2}
 \bar q &=&  \mathbb{E}_{\phi,Z} \tanh^{2} \left[\beta \bar m M + Z \beta \sqrt{M \frac{1-r^{2}}{r^{2} }{\bar m}^{2}} + \phi \beta\sqrt{\alpha \frac{  \bar q}{r^4 (1-\beta  (1-\bar q))^2}M} \right].
\end{eqnarray}
These equations can be further simplified as shown in the next
\bigskip
\begin{proposition}
For the model introduced in Definition (\ref{def:H}), in the thermodynamic limit and for large samples of examples ($M \gg 1$), the order parameters fulfill the following self-consistent equations:
\begin{eqnarray}
\label{eq:bo1}
  \bar m &=&  \mathbb{E}_{Z} \tanh \left[\beta \bar m M + Z \beta \sqrt{M \frac{1-r^{2}}{r^{2} }{\bar m}^{2} + \alpha \frac{  \bar q}{r^4 (1-\beta  (1-\bar q))^2}M} \right],\\
  \label{eq:bo2}
 \bar q &=&  \mathbb{E}_{Z} \tanh^{2} \left[\beta \bar m M + Z \beta \sqrt{M \frac{1-r^{2}}{r^{2} }{\bar m}^{2} + \alpha \frac{  \bar q}{r^4 (1-\beta  (1-\bar q))^2}M} \right].
\end{eqnarray}
\end{proposition}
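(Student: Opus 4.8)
The plan is to derive eqs.~\eqref{eq:bo1}--\eqref{eq:bo2} from eqs.~\eqref{eq:prev1}--\eqref{eq:prev2} by merging the two independent Gaussian noise sources that appear inside the hyperbolic tangent into a single effective Gaussian. First I would isolate, in the argument of the $\tanh$ in \eqref{eq:prev1}, the deterministic signal $\beta \bar m M$ together with the two stochastic contributions
\begin{equation}
g_1 := Z\,\beta\sqrt{M\tfrac{1-r^2}{r^2}{\bar m}^2}, \qquad
g_2 := \phi\,\beta\sqrt{\alpha\tfrac{\bar q}{r^4\,(1-\beta(1-\bar q))^2}\,M},
\end{equation}
where $Z$ and $\phi$ are the two auxiliary standard Gaussian fields. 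Here $\phi$ is the slow-noise field produced by the Hubbard--Stratonovich/interpolation linearization of the non-retrieved patterns, while $Z$ is the field introduced in the central-limit approximation \eqref{eq:clt} of $\tfrac{1}{M}\sum_a\chi_a$; as they are built from disjoint blocks of the quenched disorder, $Z$ and $\phi$ are independent.

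The key step is the stability of the Gaussian family under linear combination: for independent $Z,\phi\sim\mathcal N(0,1)$ and any constants $a,b$, one has $aZ+b\phi\overset{d}{=}\sqrt{a^2+b^2}\,\tilde Z$ with $\tilde Z\sim\mathcal N(0,1)$. I would apply this to $g_1+g_2$, with $a^2=\beta^2 M\tfrac{1-r^2}{r^2}{\bar m}^2$ and $b^2=\beta^2\alpha\tfrac{\bar q}{r^4(1-\beta(1-\bar q))^2}M$, obtaining
\begin{equation}
g_1+g_2 \;\overset{d}{=}\; \tilde Z\,\beta\sqrt{M\tfrac{1-r^2}{r^2}{\bar m}^2+\alpha\tfrac{\bar q}{r^4\,(1-\beta(1-\bar q))^2}\,M}.
\end{equation}
Since the integrands $\tanh(\cdot)$ and $\tanh^2(\cdot)$ depend on $(Z,\phi)$ only through $g_1+g_2$, the double expectation $\mathbb E_{\phi,Z}$ collapses to a single expectation over one standard Gaussian; relabelling $\tilde Z\mapsto Z$ reproduces exactly \eqref{eq:bo1}--\eqref{eq:bo2}.

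I do not anticipate a serious obstacle: once independence is granted, the reduction is a one-line consequence of Gaussian convolution. The only point that deserves care is justifying that $\phi$ and $Z$ are genuinely independent, so that their variances add without a cross term; this is guaranteed because they are introduced on mutually independent sources of randomness (the non-condensed patterns versus the $\chi$-fluctuations of the retrieved one). A secondary, purely bookkeeping, check is that the two variance coefficients carried by $Z$ and $\phi$ in \eqref{eq:prev1}--\eqref{eq:prev2} sum precisely to the radicand displayed in \eqref{eq:bo1}, which is immediate by inspection.
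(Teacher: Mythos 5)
Your proposal is correct and coincides with the paper's own proof: the paper proves the proposition by invoking exactly the identity $\mathbb{E}_{X,Y}\,F(aX+bY+c)=\mathbb{E}_{Z}\,F\bigl(\sqrt{a^{2}+b^{2}}\,Z+c\bigr)$ for independent standard Gaussians and applying it to merge the $\phi$ and $Z$ fields in eqs.~(\ref{eq:prev1})--(\ref{eq:prev2}), which is precisely your Gaussian-stability argument. Your explicit remark that $Z$ and $\phi$ are independent because they arise from disjoint blocks of quenched disorder is a welcome point that the paper leaves implicit, but it does not change the route.
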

\bigskip
\begin{proof}
Given a function $F$, we introduce the relation
\begin{equation}
\label{eq:tool}
\mathbb{E}_{X,Y} F(a X+bY +c) = \mathbb{E}_{Z} F(\sqrt{a^{2}+b^{2}}Z+c),
\end{equation}
where $X, Y, Z$ are assumed to be Gaussian random variables.
This relation allows us to reduce any number of averages with the same structure to a single Gaussian average, and, in particular, by appling \eqref{eq:tool} to eqs.~(\ref{eq:prev1})-(\ref{eq:prev2}) we get eqs.~(\ref{eq:bo1})-(\ref{eq:bo2}).
\end{proof}
\bigskip
\begin{remark}
The argument of the hyperbolic tangents in (\ref{eq:bo1})-(\ref{eq:bo2}) includes three contributions (and no longer just two as in the standard Hopfield scenario). Indeed, beyond the signal carried by $\bar{m}$ there are two sources of (slow) noise: a classic one given by the other patterns not retrieved (pattern interference) and a new one given by the examples within the dataset related to the pattern the network is retrieving (example interference).
\end{remark}
\bigskip
\begin{remark}
As a consistency check, we point out that if the network is not provided with datasets, but just patterns (i.e. $M=1$) and those are assumed noiseless (i.e. $r=1$), the whole theory collapses over the standard AGS theory of the Hopfield model as expected.
\end{remark}
\bigskip
\begin{proposition}\label{MSscaling}
To be sure that the archetype is retrieved over the noisy patterns we can use a simple argument, namely we can require that
\begin{equation}
  \label{eq:dis1}
   \beta M \bar m> \beta \sqrt{M} |Z| \sqrt{\frac{1-r^{2}}{ r^{2}} {\bar m}^{2}+\frac{\alpha}{r^4 (1 - \beta (1-\bar q))^2} \bar q} ,\quad Z \sim \calN(0,1)
 \end{equation}
 holds almost surely: a solution in $M$ to the above equation is given by
 \begin{equation}
   \label{eq:solMdis1}
     M > \frac{\gamma^{2}}{r^{2}} \left[1-r^{2}+\frac{q}{\bar m^{2} (1 - \beta (1-\bar q))^2}\frac{\alpha} {r^{2}}\right]
 \end{equation}
 where $\gamma$ establishes the confidence level (indeed the last condition implies $|Z|<\gamma,\quad Z \sim \calN(0,1)$
 which can be satisfied up to an exceedingly small probability at finite $M$): these results recover the scaling behaviour achieved via signal to noise analysis in the previous section. In particular, in the low storage $\alpha=0$ the correct scaling is $M \propto 1/(2p-1)^2$, while in the high storage $\alpha>0$ the correct scaling is $M \propto 1/(2p-1)^4$.

 \begin{proof}
The proof works by requiring that the signal term in the argument of $\tanh$ \eqref{eq:bo1} is on average greater than the noise term, which amounts to the condition:
   \begin{equation}
     \label{eq:pr1}
        \beta \bar m M > |Z| \beta \sqrt{M \frac{1-r^{2}}{r^{2} }{\bar m}^{2} + \alpha \frac{  \bar q}{r^4 (1-\beta  (1-\bar q))^2}M}
      \end{equation}
this condition can be recast as
\begin{equation}
  \label{eq:pr2}
  |Z| < \frac{ \sqrt{M}}{ \sqrt{\frac{1-r^{2}}{r^{2} } + \frac{\alpha}{{\bar m}^{2}} \frac{  \bar q}{r^4 (1-\beta  (1-\bar q))^2}} } =: W(M)
   \end{equation}
if we further require
\begin{equation}
  \label{eq:pr3}
  |Z| < \gamma < W(M)
\end{equation}
by solving $W(m)>\gamma$ w.r.t $M$ we obtain
\begin{equation}
  \label{eq:pr4}
  M > \gamma^2  \big[\frac{1-r^{2}}{r^{2} } + \frac{\alpha}{{\bar m}^{2}} \frac{  \bar q}{r^4 (1-\beta  (1-\bar q))^2}\big]
\end{equation}
concluding the proof.
\end{proof}
\end{proposition}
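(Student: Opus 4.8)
The plan is to derive the threshold \eqref{eq:solMdis1} directly from the simplified self-consistency \eqref{eq:bo1} for $\bar m$ via a signal-versus-noise comparison inside the argument of the hyperbolic tangent. That argument splits into a deterministic \emph{signal} $\beta M \bar m$, growing linearly in $M$, and a stochastic \emph{noise} $Z\beta\sqrt{M\frac{1-r^2}{r^2}\bar m^2 + \alpha\bar q M/[r^4(1-\beta(1-\bar q))^2]}$, which grows only as $\sqrt M$ and carries the Gaussian fluctuation $Z\sim\mathcal{N}(0,1)$. The governing intuition is that the configuration $\boldsymbol\sigma=\boldsymbol\xi^1$ is a robust attractor (so that $\tanh$ saturates and $\bar m$ stays close to its retrieval value) precisely when the signal overwhelms the noise; I would therefore impose $\beta M \bar m > |Z|\beta\sqrt{\cdots}$ and invert the resulting inequality for $M$.

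The execution is essentially routine algebra once the confidence-level bookkeeping is fixed. First I would restrict to the event $|Z|<\gamma$, so that it suffices to demand $\beta M \bar m > \gamma\beta\sqrt{\cdots}$; here $\gamma$ is a free parameter tuning the tolerated failure probability. Then I cancel the common factor $\beta$, divide through by $\sqrt M$ and by $\bar m$ (both strictly positive in the retrieval phase), square both sides, and collect the overall prefactor $1/r^2$ to land on $M > \frac{\gamma^2}{r^2}\big[1-r^2 + \frac{\alpha\bar q}{\bar m^2 r^2(1-\beta(1-\bar q))^2}\big]$, which is exactly \eqref{eq:solMdis1}. As a consistency check I would read off the two limiting scalings: at $\alpha=0$ only the first bracket survives, giving $M\propto 1/r^2$, whereas for $\alpha>0$ and small $r$ the $\alpha/r^4$ term dominates, reproducing $M\propto 1/r^4$ and hence recovering the signal-to-noise estimate \eqref{due}.

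The genuinely delicate point—and the only one requiring care—is the interpretation of the phrase ``holds almost surely''. Taken literally it is false at finite $M$, since $Z$ has unbounded support; the honest content is probabilistic. For any prescribed $\gamma$ the event $|Z|<\gamma$ carries probability $\mathrm{erf}(\gamma/\sqrt 2)$, arbitrarily close to one, and the derived bound guarantees signal dominance on that event. Because the signal grows like $M$ while the noise grows like $\sqrt M$, one may let $\gamma$ diverge slowly with $M$ (e.g.\ $\gamma\sim M^{1/4}$) while still satisfying the $M$-bound, so the exceptional probability $1-\mathrm{erf}(\gamma/\sqrt2)\to 0$ in the large-sample limit: this is what legitimises the ``almost surely'' language. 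A secondary subtlety I would flag is that $\bar m$ and $\bar q$ appearing in \eqref{eq:solMdis1} are themselves $M$-dependent solutions of \eqref{eq:bo1}--\eqref{eq:bo2}, so the estimate is implicit rather than fully explicit; treating them as their solved retrieval-phase values (where $\bar m = O(1)$) is precisely what renders the bound meaningful and consistent with the earlier signal-to-noise analysis.
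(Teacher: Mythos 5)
Your proposal is correct and follows essentially the same route as the paper's proof: impose signal dominance $\beta M \bar m > |Z|\beta\sqrt{\cdots}$ in the argument of the $\tanh$ in the large-$M$ self-consistency, restrict to the event $|Z|<\gamma$, and solve the squared inequality for $M$, yielding exactly the bound \eqref{eq:solMdis1} and the two limiting scalings. Your additional remarks --- that ``almost surely'' should honestly be read as a confidence-level statement (with $\gamma$ possibly growing slowly in $M$) and that $\bar m,\bar q$ are themselves implicit, $M$-dependent solutions --- are sound clarifications of points the paper leaves informal, but they do not change the argument.
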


Now, to further inspect the competition between $m$ and $n$, we resume Theorem \ref{th:2}, see in particular equations \eqref{eq:th2_1}-\eqref{eq:th2_2}, which are used to build Fig.~\ref{fig:si_transition}:
 the ``Fuzzy'' phase corresponds to a region in the parameter space where the retrieval of the examples is more effective than the retrieval of the archetype ($\bar n > \bar m$), no matter how good the retrieval can be. Focusing on the low-load regime, this region is demarcated by the line $\tilde{M}(p):=\tilde{M}(\alpha=0,p)$; beyond that line the retrieval of the archetype is more effective than the retrieval of the example ($\bar m > \bar n$) and, by requiring also a high-quality retrieval (i.e., $|\bar m| >z$), we get the line $\tilde M_{z} (\textcolor{black}{\alpha}, p)$, which detects a region whose volume decreases with $z$. Focusing on the high-load regime, the ``Fuzzy'' region is demarcated by the line $\tilde{M}(\alpha, p)$, which is more restrictive that $\tilde{M}(\alpha, p)$.

\begin{figure}[tb]
\centering
\includegraphics[width=.55\linewidth]{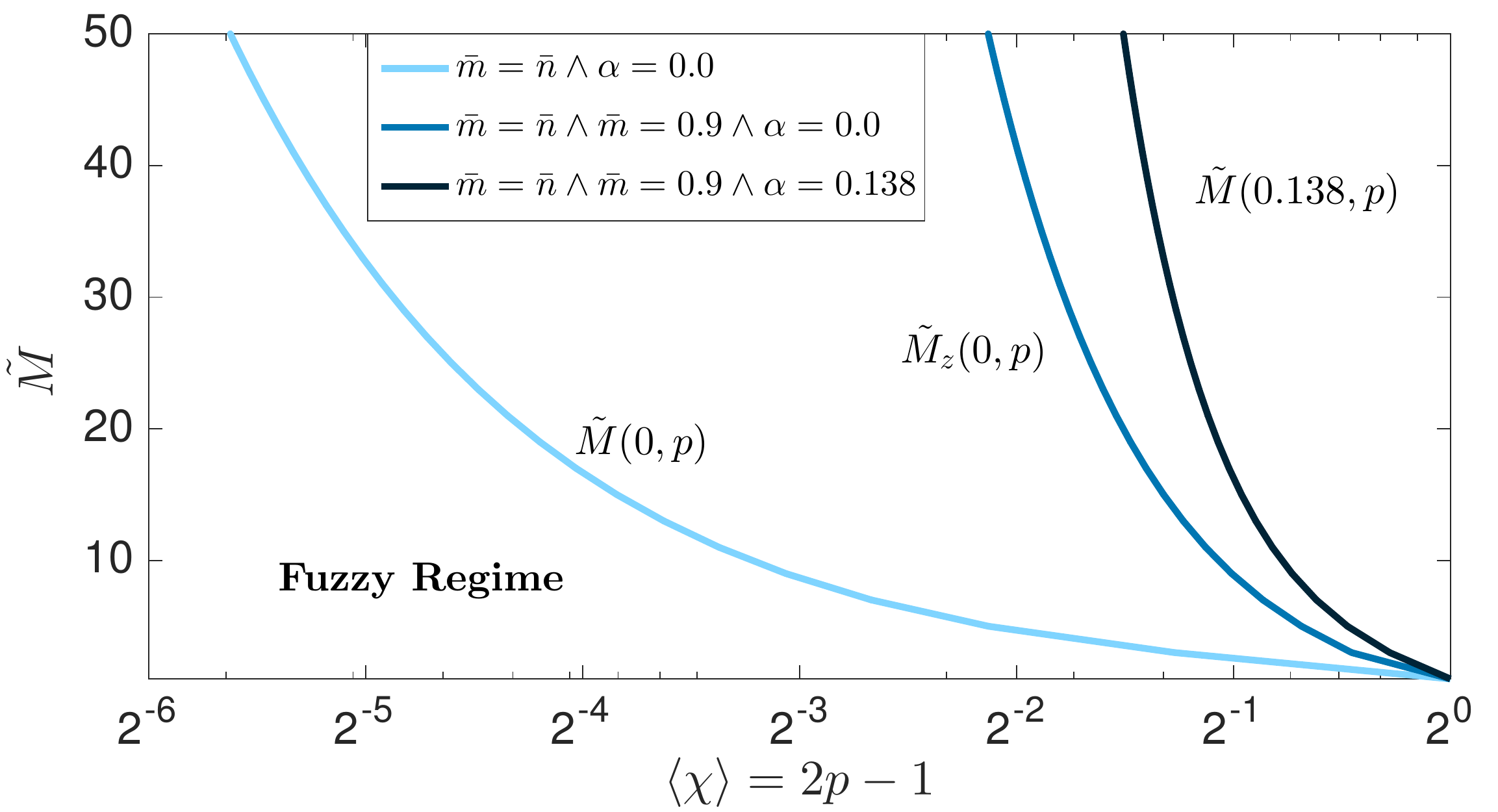}
\caption{\label{fig:si_transition} In this plot we show the crossovers values for $M$ as a function of $2p-1$ and under different conditions.
In particular, given $P=\alpha N$ archetypes and feeding the network with $M \times P$ examples characterized by a noise $p$, as $M>\tilde{M}(\alpha, p)$, then $\bar m > \bar n$. As expected, moving from a low load ($\alpha=0$) to a high load ($\alpha >0$), the region in this parameter space where $\bar m > \bar n$ shrinks. Notice that $\tilde{M}(\alpha, p)$ simply signs a crossover between $\bar n$ and $\bar m$, while no conditions are posed on the magnitude of magnetizations. This kind of information is provided by $\tilde{M}_z(\alpha, p)$ which also requires that $|\bar m| >z$. In this way, we can highlight a region where the pattern is better retrieved than examples \emph{and} with high quality.
}
\end{figure}

Finally, we want to deepen the possible existence of a genuine phase transition distinguishing between a region where the system can infer the archetype ($\bar m >0$) and a regione where noise -- either fast (i.e., ruled by $T$) or slow (i.e., ruled by a suitable combination of $\alpha$, $r$ and $M$) -- prevails ($\bar m =0$). 
A close look to the self-consistent equations \eqref{eq:bo1}-\eqref{eq:bo2} suggests that a suitable, intensive and tuneable parameter able to trigger the phase transition is given by 
\begin{equation}
\label{eq:vanishR}
\rho:=  \frac{\alpha}{M r^4}.
\end{equation}
In the following analysis we will let $M \to \infty$ and, accordingly, we rescale the temperature as $\beta \to \frac{\beta}{M}$ to ensure the well-definiteness of the model \eqref{eq:Hamiltonian2}; this limit also implies that that we are focusing on the limit of high disorder in the dataset ($p \to 1/2$) so to retain a finite $\rho$.
\bigskip
\begin{proposition}
In the limit of large samples ($M \to \infty$) and high disorder in the dataset ($r \to 0$) a critical behaviour is found as $\rho$ approaches $\rho_{c} = \frac{2}{\pi}$, where $\bar m \sim \sqrt{\frac{3}{\pi}} \sqrt{2 - \pi \rho}$.
\end{proposition}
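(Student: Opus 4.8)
The plan is to reduce the two self-consistency relations \eqref{eq:criticalSCEbis}-\eqref{eq:criticalSCEter} to a single scalar fixed-point equation and then perform a Landau-type expansion around $\bar m = 0$. Taking $\beta\to\infty$ first, equation \eqref{eq:criticalSCEter} forces $\bar q \to 1$, while \eqref{eq:criticalSCEbis} collapses onto the closed relation
\begin{equation}
\bar m = \operatorname{erf}\left(\frac{\bar m}{\sqrt{2\rho}}\right),
\end{equation}
so the whole problem becomes the study of the nontrivial roots of this equation as $\rho$ is varied. I would treat $\bar m$ as the order parameter and $\rho$ as the control parameter, expecting the trivial root $\bar m = 0$ to lose stability at some $\rho_c$ and a continuous branch of positive solutions to bifurcate from it.

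First I would insert the Taylor expansion $\operatorname{erf}(x) = \frac{2}{\sqrt{\pi}}\left(x - \frac{x^3}{3} + O(x^5)\right)$ with $x = \bar m/\sqrt{2\rho}$. Dividing the resulting fixed-point equation by $\bar m \neq 0$ gives, to leading orders,
\begin{equation}
1 = \sqrt{\frac{2}{\pi\rho}} - \frac{2}{3\sqrt{\pi}}\frac{\bar m^2}{(2\rho)^{3/2}} + O(\bar m^4).
\end{equation}
The linear instability condition is obtained by setting the coefficient of the linear term equal to unity, i.e. $\sqrt{2/(\pi\rho)} = 1$, which yields $\rho_c = 2/\pi$; for $\rho < \rho_c$ the linear coefficient exceeds one, the trivial solution becomes unstable, and a nonzero magnetization sets in.

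To extract the critical scaling I would solve the displayed relation for $\bar m^2$, obtaining $\bar m^2 = \frac{3\sqrt{\pi}}{2}(2\rho)^{3/2}\left(\sqrt{2/(\pi\rho)} - 1\right)$, and then linearize the bracket around $\rho_c$. Writing $\rho = \rho_c(1-\epsilon)$ gives $\sqrt{2/(\pi\rho)} - 1 = (1-\epsilon)^{-1/2} - 1 \simeq \tfrac{1}{2}\epsilon = \tfrac14(2-\pi\rho)$, and evaluating the slowly-varying prefactor at $2\rho_c = 4/\pi$ collapses all the $\pi$ factors into the clean result $\bar m^2 \simeq \frac{3}{\pi}(2-\pi\rho)$, that is $\bar m \sim \sqrt{3/\pi}\sqrt{2-\pi\rho}$, with mean-field exponent $1/2$.

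The main obstacle I anticipate is not the algebra but the justification of the procedure: one must argue that exchanging the order of the limits ($\beta\to\infty$ before $\rho\to\rho_c$) is legitimate and, above all, that the truncation at cubic order is self-consistent, i.e. that the bifurcating branch is genuinely small ($\bar m = O(\sqrt{\rho_c-\rho})$) so that the neglected $O(\bar m^4)$ terms contribute only at higher order in $(2-\pi\rho)$. Verifying that the cubic coefficient is strictly positive further guarantees that the bifurcation is supercritical (continuous), consistent with the claimed second-order nature of the transition and with the divergence of the susceptibility observed numerically in Figure~\ref{fig:scaling}.
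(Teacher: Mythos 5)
Your proposal is correct and follows essentially the same route as the paper: take $\beta\to\infty$ in the intensive self-consistency equations so that $\bar q\to 1$ and the problem reduces to $\bar m = \operatorname{erf}\bigl(\bar m/\sqrt{2\rho}\bigr)$, then Taylor-expand the error function to cubic order to locate $\rho_c = 2/\pi$ and extract $\bar m \sim \sqrt{3/\pi}\,\sqrt{2-\pi\rho}$. You merely spell out the expansion algebra (and the supercriticality/order-of-limits caveats) that the paper leaves implicit, and your computations check out.
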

\bigskip
\begin{proof}
%
%
%

\begin{figure}
\centerline{\includegraphics[width=0.55\textwidth]{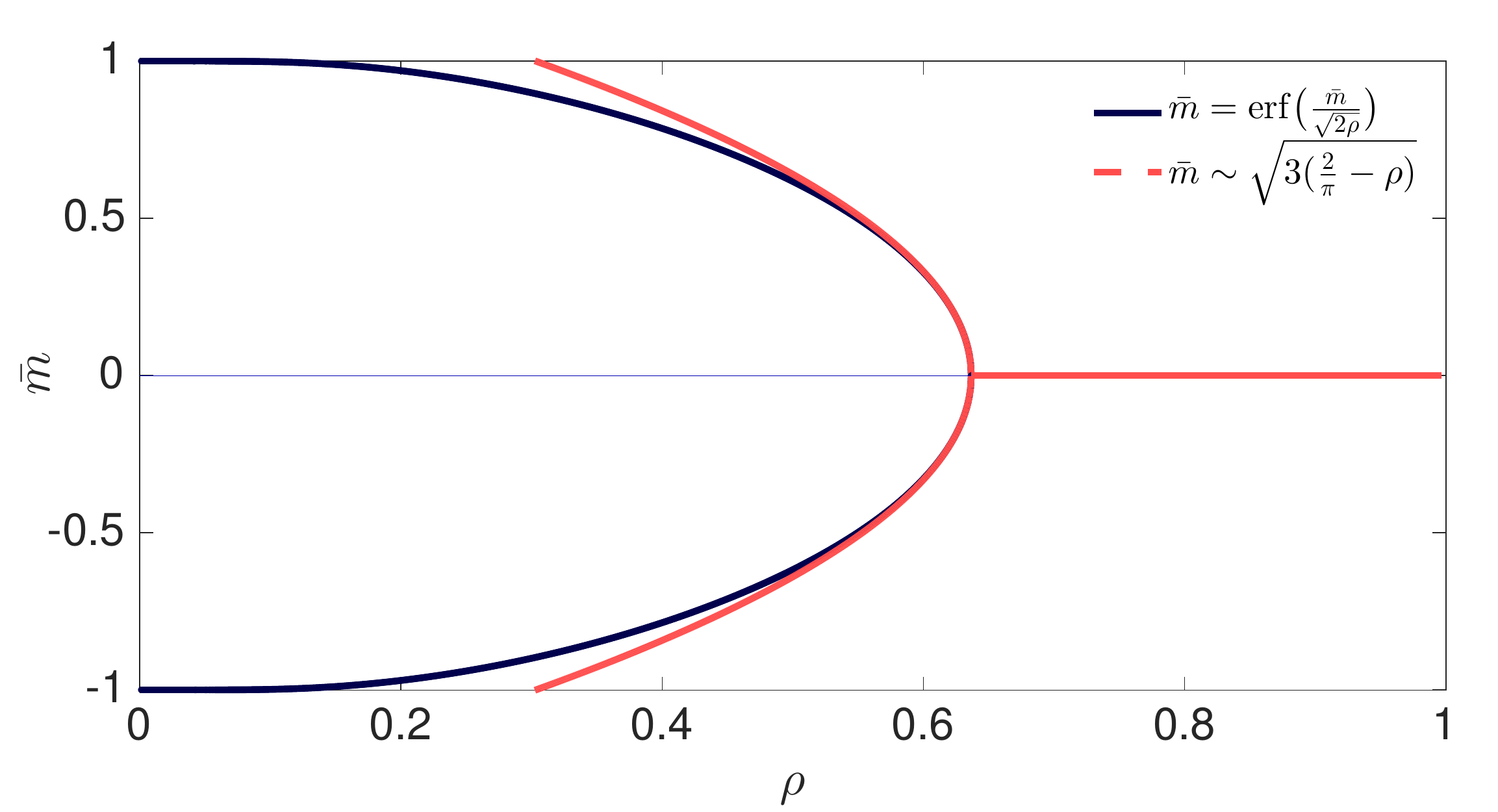}}
\caption{\label{fig:critbehaviour} Zero-temperaure self-consistency for the Mattis magnetization in the limit of $M,N,K \to \infty$ such that $(0,1) \ni \rho := K/(M N r^4)$ is the tunable control parameter for the dataset density (see eq. \ref{eq:vanishR}): for values of $\rho$ smaller than $\rho_c = 2/\pi$ the solely solution is $\bar{m}=0$ while for values of $\rho > \rho_c$ two (gauge-invariant) not-null values of the Mattis magnetization appear. Beyond the exact result given by eq.~(\ref{eq:aaa}), the figure also shows a comparison with the square-root estimate valid nearby the critical point.}
\end{figure}
Taking the large-$M$ self-consistency equations \eqref{eq:bo1}-\eqref{eq:bo2}, all that we have to do is replace $r^{2}$ with $\sqrt{\frac{\alpha}{\rho M}}$ and $\beta$ with $\frac{\beta}{M}$ obtaining:
\begin{eqnarray}
\bar{m}&=&\mathbb{E}_{Z}\tanh\left(\beta\bar{m}+Z\beta\sqrt{\frac{1-\sqrt{\frac{\alpha}{\rho M}}}{M\sqrt{\frac{\alpha}{\rho M}}}+\frac{\rho\bar{q}}{[1-\frac{\beta}{M}(1-\bar{q})]^{2}}}\right),\\
\bar{q}&=&\mathbb{E}_{Z}\tanh^{2}\left(\beta\bar{m}+Z\beta\sqrt{\frac{1-\sqrt{\frac{\alpha}{\rho M}}}{M\sqrt{\frac{\alpha}{\rho M}}}+\frac{\rho\bar{q}}{(1-\frac{\beta}{M}[1-\bar{q})]^{2}}}\right).
\end{eqnarray}
The whole theory now has been rephrased intensive in $M$, allowing us to take the limit $M\to\infty$:
\begin{eqnarray}
\label{eq:sce1}
\bar{m} &=&\mathbb{E}_{Z}\tanh(\beta\bar{m}+\beta Z\sqrt{\rho\bar{q}})~~~ \textrm{as} ~~ M \to \infty,\\
\bar{q}&=&\mathbb{E}_{Z}\tanh^{2}(\beta\bar{m}+\beta Z\sqrt{\rho\bar{q}}) ~~~ \textrm{as} ~~ M \to \infty.\label{eq:sce2}
\end{eqnarray}
In particular, the zero-temperature limit of the previous equations, where we send $\beta\to\infty$, reads as
\begin{eqnarray}
\label{eq:aaa}
\bar{m} &=& \mathbb{E}_{Z}\text{sign}(\bar{m}+Z\sqrt{\rho\bar{q}}) = \text{erf}\left(\frac{\bar{m}}{\sqrt{2\rho}}\right) ~~~ \textrm{as} ~~ \beta, M \to \infty,\\
\bar{q}&=&\mathbb{E}_{Z}\text{sign}(\bar{m}+Z\sqrt{\rho\bar{q}})^{2}=1 ~~~ \textrm{as} ~~ \beta, M \to \infty.
\end{eqnarray}
%
%
%
By Taylor expanding equation \eqref{eq:aaa} around $\bar m=0$, a critical behaviour is found at $\rho_{c} = \frac{2}{\pi}$ with scaling $\bar m \sim \sqrt{\frac{3}{\pi}} \sqrt{2 - \pi \rho}$ near the critical point.
\end{proof}
\bigskip
The behavior of the magnetization $\bar m$ versus $\rho$, in the limit $M,N,K \to \infty$ is shown in Fig.~\ref{fig:critbehaviour}, where the critical behavior is also corroborated.
\bigskip
\begin{remark}
As a direct consequence of the previous proposition we can state that concepts, namely {\em archetypes} of the experienced examples, are formed by the network via a critical behavior and not abruptly (as, for instance, happens to the Hopfield network when forgetting, i.e. the {\em blackout scenario}).
\end{remark}


\begin{thebibliography}{99}

\bibitem{Hopfield} J.J. Hopfield, {\it Neural networks and physical systems with emergent collective computational abilities},  Proc. Natl. Acad. Sci. \textbf{79}(8):2554-2558, (1982).


\bibitem{CKS} A.C.C. Coolen, R. K\"{u}hn, P. Sollich, {\it Theory of neural information processing systems}, Oxford Univ. Press (2005).

\bibitem{HHS} D.H. Ackley, G.E. Hinton, T.J. Sejnowski, {\it A learning algorithm for Boltzmann machines}, Cogn. Sci. \textbf{9}.1:147-169, (1985).

\bibitem{Hinton-1MC} G.E. Hinton, {\it Training products of experts by minimizing contrastive divergence}, Neur. Comp. \textbf{14}.8:1771-1800, (2002).

\bibitem{PRL2012} E. Agliari, A. Barra, A. Galluzzi, F. Guerra, F. Moauro, {\it Multitasking Associative Networks}, Phys. Rev. Lett. \textbf{109}, 268101 (2012).

\bibitem{Mezard} M. M\'ezard, {\it Mean-field message-passing equations in the Hopfield model and its generalizations}, Phys. Rev. E \textbf{95}(2), 022117 (2017).

\bibitem{Monasson} J. Tubiana, R. Monasson, {\it Emergence of compositional representations in restricted Boltzmann machines}, Phys. Rev. Lett. \textbf{118}(13), 138301, (2017).

\bibitem{Aurelienne} A. Decelle, G. Fissore, C. Furtlehner, {\it Thermodynamics of restricted Boltzmann machines and related learning dynamics},  J. Stat. Phys. \textbf{172}.6:1576-1608, (2018).

\bibitem{Huang-PRL2020} T. Hou, H. Huang, {\it Statistical physics of unsupervised learning with prior knowledge in neural networks}, Phys. Rev. Lett. \textbf{124} (24), 248302, (2020).

\bibitem{Huang-PRE2020} H. Huang, {\it Variational mean-field theory for training restricted Boltzmann machines with binary synapses}, Phys. Rev. E \textbf{102}(3), 030301, (2020).


\bibitem{Marullo} C. Marullo, E. Agliari, {\it Boltzmann Machines as Generalized Hopfield Networks: a Review of Recent Results and Outlooks}, Entropy \textbf{23}.1:34, (2021).


\bibitem{Love} H. Ramsauer, et al., {\it Hopfield networks is all you need}, arXiv preprint arXiv:2008.02217, (2020).

\bibitem{AGS} D. Amit, H. Gutfreund, H. Sompolinsky, {\it Storing infinite numbers of patterns in a spin-glass model of neural networks}, Phys. Rev. Lett. \textbf{55}.14:1530, (1985).

\bibitem{Way1} M. M\'ezard, A. Montanari, {\it Information, physics, and computation}, Oxford University Press, 2009.

\bibitem{Way2} M. M\'ezard, G. Parisi, R. Zecchina, {\it Analytic and algorithmic solution of random satisfiability problems}, Science \textbf{297}.558:812-815, (2002).

\bibitem{Way3} C. Moore, S. Mertens, {\it The nature of computation}, Oxford University Press (Oxford), 2011.

\bibitem{angel-learning} A. Engel, C. Van den Broeck, {\it Statistical mechanics of learning}, Cambridge University Press (2001).

\bibitem{sompo-learning} H.S. Seung, H. Sompolinsky, N. Tishby, {\it Statistical mechanics of learning from examples}, Phys. Rev. A \textbf{45}(8):6056, (1992).

\bibitem{Hinton1} R. Salakhutdinov, G. Hinton, {\it Deep Boltzmann machines}, Artificial Intelligence and Statistics (2009).

\bibitem{Hinton} G.E. Hinton, R.R.  Salakhutdinov, {\it Reducing the dimensionality of data with neural networks}, Science \textbf{313}, 504-507 (2006).

\bibitem{BarraEquivalenceRBMeAHN} A. Barra, A. Bernacchia, E. Santucci, P. Contucci, {\it On the equivalence of Hopfield networks and restricted Boltzmann Machines}, Neural Netw. \textbf{34}, 1-9, (2012).


\bibitem{Barra-RBMsPriors2} A. Barra, G. Genovese, P. Sollich, D.Tantari, {\it Phase transitions of Restricted Boltzmann Machines and Generalized Hopfield models}, Phys. Rev. E \textbf{97}, 022310, (2018).


\bibitem{Linda}  F.E. Leonelli, L. Albanese, E. Agliari, A. Barra,  {\it On the effective initialisation for restricted Boltzmann machines via duality with Hopfield model}, Neu. Nets.  \textbf{143}, 314, (2021).

\bibitem{AABF-NN2020} E. Agliari, F. Alemanno, A. Barra, A. Fachechi, {\it Generalized Guerra's interpolating techniques for dense associative memories}, Neur. Netw. \textbf{128}, 254, (2020).

\bibitem{Martino1} E. Agliari, F. Alemanno, A. Barra, M. Centonze, A. Fachechi, {\it Neural networks with a Redundant Representations: Detecting the Undetectable}, Phys. Rev. Lett. \textbf{124}, 028301, (2020).

\bibitem{Fontanari}  J.F. Fontanari, {\it Generalization in a Hopfield network}, J. de Physique \textbf{51}.21:2421-2430, (1990).


\bibitem{Agliari-Barattolo} E. Agliari, A. Barra, C. Longo, D. Tantari, {\it Neural Networks retrieving binary patterns in a sea of real ones}, J. Stat. Phys. \textbf{168}, 1085, (2017).

\bibitem{Guerra} F. Guerra, {\it Broken replica symmetry bounds in the mean field spin glass model}, Comm. Math. Phys. \textbf{233}(1), 1-12, (2003).

\bibitem{Amit} D.J. Amit, {\it Modeling brain functions}, Cambridge Univ. Press (1989).


\bibitem{Smart} M. Smart, A. Zilman, {\em On the mapping between Hopfield networks and Restricted Boltzmann Machines}, arXiv:2101.11744 (2021).

\bibitem{Nonna1} C.G. Gross, {\it Genealogy of the “grandmother cell”}, The Neuroscientist \textbf{8}.5:512-518, (2012).

\bibitem{Nonna2} J.S. Bowers, {\it What is a grandmother cell? And how would you know if you found one?}, Connect. Sci. \textbf{23}.2:91-95, (2011).



\bibitem{Huang3} H. Huang, {\it Statistical mechanics of unsupervised feature learning in a restricted Boltzmann machine with binary synapses}, JSTAT  053302, (2017).

\bibitem{Structure1} S Goldt, M Mézard, F Krzakala, L Zdeborová, {\it Modeling the Influence of Data Structure on Learning in Neural Networks: The Hidden Manifold Model}, Phys. Rev. X \textbf{10}:(4), 041044, (2020).




\bibitem{Zecchina} M. Mézard, G. Parisi, R. Zecchina {\it Analytic and algorithmic solution of random satisfiability problems}, Science \textbf{297}:(5582), 812-815, (2002).

\bibitem{Florent1} F. Krzakala,  et al., {\it Statistical-physics-based reconstruction in compressed sensing}, Phys.  Rev. X \textbf{2}.2:021005, (2012).

\bibitem{Florent2}  L. Zdeborová, F.  Krzakala, {\em Statistical physics of inference: Thresholds and algorithms}, Adv. in Phys. \textbf{65}.5:453-552, (2016).

\bibitem{Montanari} S. Mei, A. Montanari, P.M. Nguyen, {\it A mean field view of the landscape of two-layer neural networks}, Proc. Natl. Acad. Sci. USA \textbf{115}(33), e7665, (2018).


\bibitem{Lenka} E. Agliari, A. Barra, P. Sollich, L. Zdeborová, {\it Machine learning and statistical physics: preface}, J. Phys. A \textbf{53}:(50), 500401, (2020).




\end{thebibliography}
\end{document}